\newcommand{\be}{\begin{equation}}
\newcommand{\ee}{\end{equation}}
\newcommand{\bes}{\begin{equation}\begin{aligned}}
\newcommand{\ees}{\end{aligned}\end{equation}}
\newcommand{\ben}{\begin{equation}\nonumber\begin{aligned}}
 \newcommand{\R}{\mathbb{R}}
\newcommand{\Z}{\mathbb{Z}}
\renewcommand{\leq}{\leqslant}
\renewcommand{\geq}{\geqslant}
\newtheorem{thm}{Theorem}[section]
\newtheorem{lem}[thm]{Lemma}
\newtheorem{pro}[thm]{Proposition}
\newtheorem{rmk}[thm]{Remark}
\newtheorem{defi}[thm]{Definition}
\newtheorem*{main thm}{Main Theorem}
\numberwithin{equation}{section}
\begin{document}

\title{Recovery  of signals  by a weighted $\ell_2/\ell_1$ minimization under  arbitrary prior support information}

\author{Wengu~Chen
\thanks{W. Chen is with Institute of Applied Physics and Computational Mathematics,
Beijing, 100088, China, e-mail: chenwg@iapcm.ac.cn.}
\thanks{This work was supported by the NSF of China (Nos.11271050, 11371183)
.} }
\author{Wengu~Chen,  Huanmin~Ge
\thanks{W. Chen is with Institute of Applied Physics and Computational Mathematics,
Beijing, 100088, China, e-mail: chenwg@iapcm.ac.cn.}
\thanks{H. Ge is with Graduate School, China Academy of Engineering Physics,
Beijing, 100088, China, e-mail: gehuanmin@163.com.}
\thanks{This work was supported by the NSF of China (Nos.11271050, 11371183)
.} }

\maketitle

\begin{abstract}
 In this paper,  we introduce a  weighted $\ell_2/\ell_1$ minimization to  recover block sparse signals
 with arbitrary prior support information.
When partial prior support information is available, a sufficient condition based on the high order block RIP is derived
 to  guarantee stable and robust recovery of block sparse signals via the weighted $\ell_2/\ell_1$ minimization.
  We then show if the accuracy of
arbitrary prior block support estimate is at least $50\%$,  the sufficient
recovery condition by the weighted $\ell_2/\ell_{1}$ minimization is weaker
than that by the  $\ell_2/\ell_{1}$ minimization, and the
 weighted $\ell_2/\ell_{1}$ minimization  provides better upper
bounds on the recovery error in terms of the measurement noise and
the compressibility of the signal. Moreover, we illustrate the
advantages of the weighted $\ell_2/\ell_1$ minimization approach  in
the recovery performance of block sparse signals under uniform and
non-uniform prior information by extensive numerical
 experiments. The significance of the results lies in the facts that making
 explicit use of block sparsity  and partial support information of block sparse
 signals can achieve  better recovery performance than handling the signals as being in the conventional sense,
thereby ignoring the additional structure  and  prior  support information in the problem.
\end{abstract}

{Keywords: Block restricted isometry property, block sparse,
compressed sensing,   weighted $\ell_2/\ell_{1}$ minimization.}

\section{Introduction}

Compressed sensing, a new type of sampling
theory,  aims at recovering an unknown high dimensional sparse signal $x\in\R^N$,
through  the following linear measurement
\begin{align}\label{mod1}
y=Ax+z
\end{align}
where  $A\in\R^{n\times N}$ ($n\ll N$ ) is a sensing matrix,  $y\in
\mathbb{R}^{n}$ is a vector of measurements and $z \in
\mathbb{R}^{n}$ is the measurement error.  In last decade,
compressed sensing has been a fast growing field of research. A
multitude of  different recovery algorithms including the $\ell_1$
minimization \cite{CWX1}-\cite{CRT1}, \cite{ML},
 greedy  algorithm \cite{CG,DM,D1,DDTS,T,WKS},
 \cite{M}-\cite{NV}
 and iterative threshold algorithm \cite{BD1,BD2,DDD,DDFG,D2,LXY} have been
used to recover the sparse signal $x$ under a variety of different conditions on the sensing matrix $A$.

In this paper,  the unknown sparse signal $x$ of the model \eqref{mod1} has additional structure, whose nonzero coefficients occur  in  blocks (or clusters).
Such signal is called block sparse signal \cite{EM}, \cite{EKB}.
The recovery of block sparse signals naturally arise in practical examples such as equalization of sparse communication channels \cite{CR},
DNA  microarrays \cite{PVMH}, multiple measurement vector (MMV) problem \cite{CH}, \cite{CREKD}, \cite{ME}.
A block  signal $x\in\R^N$ over $\mathcal{I}=\{d_1,d_2,\ldots,d_M\}$ is a concatenation of $M$ blocks of length $d_i\ (i=1,2,\cdots,M)$, that is,
\begin{eqnarray}\label{m1}
x=(\underbrace{x_1, \ldots, x_{d_1}}_{x'[1]}, \underbrace{x_{d_1+1}, \ldots,  x_{d_1+d_2}}_{x'[2]}, \ldots, \underbrace{x_{N-d_M+1}, \ldots, x_N}_{x'[M]})^{'}
\end{eqnarray}
where $x[i]$ denotes the $i$th block of $x$ and $N=\sum\limits_{i=1}^Md_i$. Let the block index  set  $[M]=\{1,2,\ldots,M\}$.
The block signal $x$ is referred to  block $k$-sparse  if $x[i]$ has nonzero $\ell_2$ norm for at most $k$ indices $i\in[M]$,
 i.e., $\sum\limits_{i=1}^MI(\|x[i]\|_2>0)\leq k$, where $I(\cdot)$ is an indicator function.
Denote $\|x\|_{2,0}=\sum_{i=1}^MI(\|x[i]\|_2>0)$ and $T=\textmd{b-supp}(x)=\{i\in[M]:\|x[i]\|_2>0\}$,
then a block $k$-sparse signal $x$ satisfies  $\|x\|_{2,0}\leq k$ and $|T|\leq k$. If $d_i=1$ for all $i\in [M]$,
the block sparse signal $x$ reduces to the conventional sparse signal \cite{CRT2}, \cite{D3}.
Similar to \eqref{m1},
sensing matrix $A$ can be expressed
as a concatenation of $M$ column blocks over $\mathcal{I}=\{d_1,d_2,\ldots,d_M\}$
\begin{eqnarray*}
 A=[\underbrace{A_1\ldots A_{d_1}}_{A[1]} \underbrace{A_{d_1+1}\ldots A_{d_1+d_2}}_{A[2]} \ldots \underbrace{A_{N-d_M+1}\ldots A_N}_{A[M]}],
\end{eqnarray*}
where $A_i$ is the $i$th column of $A$ for $i=1,2,\cdots,N$.

To reconstruct the block sparse  signal $x$ in \eqref{m1},  researchers  explicitly take this block structure into account.
One of the efficient methods is  the following  $\ell_{2}/\ell_{1}$ minimization
\begin{align}\label{pro3}
  \min_{x\in \R^N} \quad\|x\|_{2,1} \ \ \ {\rm subject\quad to}\ \ \
  \|y-Ax\|_2\leq\epsilon
\end{align}
where $\|x\|_{2,1}=\sum\limits_{i=1}^M\|x[i]\|_2$.  To study
 the uniqueness and stability of the   $\ell_{2}/\ell_{1}$ minimization method, Eldar and Mishali introduced the notion
of block restricted isometry property in \cite{EM}, which is a generalization of the standard RIP \cite{CT}.
\begin{defi}
Let $A\in \mathbb{R}^{n\times N}$ be a matrix. Then $A$ has the $k$ order  block  restricted isometry property (block RIP)
 over $\mathcal{I}=\{d_1,d_2,\ldots,d_M\}$ with parameter $\delta^\mathcal{I} \in [0,1)$
if for all block $k$-sparse vector $x\in\R^N$ over $\mathcal{I}$
it holds that
\begin{eqnarray}\label{brip}
(1-\delta^\mathcal{I})\|x\|_2^2\leq \|Ax\|_2^2\leq(1+\delta^\mathcal{I})\|x\|^2_2.
\end{eqnarray}
The smallest constant $\delta^\mathcal{I}$ is called block  restricted isometry  constant (block RIC) $\delta_{k}^\mathcal{I}$.
When $k$ is not an integer, we define $\delta_{k}^\mathcal{I}$ as $\delta_{\lceil k\rceil}^\mathcal{I}$.
\end{defi}

For block sparse signal recovery, sufficient conditions in term of the block RIP  have been introduced
 and studied in the literatures. For example, Eldar and Mishali proved that if the sensing matrix $A$
satisfies $\delta_{2k}^\mathcal{I}<\sqrt{2}-1$ then the  $\ell_2/\ell_1$ minimization can recover perfectly block
sparse signals in noiseless case and can well approximate the best block $k$-sparse approximation in \cite{EM}.
Later,  Lin and Li improved the bound to $\delta_{2k}^\mathcal{I}<0.4931$ and also obtained another sufficient
condition on the block RIP with $\delta_k^\mathcal{I}<0.307$ \cite{LL}. Recently, Chen and Li \cite{CL2} have shown  a sharp sufficient condition
based on the high order block RIC $\delta_{tk}^\mathcal{I}<\sqrt{\frac{t-1}{t}}$ for any $t\geq\frac{4}{3}$.

The  $\ell_2/\ell_{1}$ minimization method \eqref{pro3} is itself nonadaptive since it dose not use any
 prior  information about the block sparse signal $x$.
  However, the estimate of the support of the signal
$x$  or of its largest coefficients may be possible to be drawn in
many applications (see \cite{FMSY}).
 Incorporating prior block support information of signals,
 we introduce a method by replacing the $\ell_2/\ell_{1}$ minimization \eqref{pro3}
with the following  weighted $\ell_2/\ell_{1}$ minimization with $L\ (1\leq L\leq M)$  weights $\omega_1,\omega_2,\ldots,\omega_L\in[0,1]$
\begin{align}\label{pro4}
\min_{x\in \mathbb{R}^{N}}&\|x_\mathrm{w}\|_{2,1} \ \ \
 {\rm subject\quad to}\ \ \  \|y-Ax\|_2\leq\varepsilon
 \end{align}
where $\mathrm{w} \in [0,1]^M$ and $\|x_\mathrm{w}\|_{2,1}=\sum\limits_{i=1}^{M}\mathrm{w}_{i}\|x[i]\|_2$ is the weighted $\ell_2/\ell_1$ norm of $x$.
The main idea of the weighted $\ell_2/\ell_{1}$ minimization approach \eqref{pro4}
is to choose  appropriately $\mathrm{w}$ such that in the weighted objective function,  the blocks of $x$ which are expected to be large are penalized  less.
 Throughout the article, given $L$ disjoint block support estimates of the block signal $x$ over $\mathcal{I}$ by
$\widetilde{T_j}\subseteq [M]$, where $j=1,2,\ldots,L$ and $\cup_{j=1}^L\widetilde{T}_j=\widetilde{T}$,  we set
\begin{align}\label{pro7}
\mathrm{w}_{i}=\left\{\begin{array}{cc}
                     1, & i\in \widetilde{T}^{c} \\
                     \omega_j, &  i\in \widetilde{T}_j
                   \end{array}
                   \right.
\end{align}
for all $i\in[M]$. Note that when $\mathcal{I}=\{d_1=1,d_2=1,\ldots,d_M=1\}$, the weighted $\ell_2/\ell_{1}$ minimization  meets with  the weighted $\ell_{1}$
 minimization  \cite{BMP,CL,FMSY,J,KXAH,LV,NSW, VL2, VL1}.

 In this paper, we establish the high order block RIP condition  to ensure the stable
and robust recovery of block signals $x$ through the weighted $\ell_2/\ell_{1}$ minimization  \eqref{pro4} and derive an error bound between
the unknown original block sparse signal $x$ and the minimizer of \eqref{pro4}. And we also show that
 when all of  the accuracy of $L$ disjoint prior block support
estimates  are at least  $50\%$, the recovery by the
weighted $\ell_2/\ell_{1}$ minimization method \eqref{pro4} is stable and robust
under weaker sufficient conditions compared to the  $\ell_2/\ell_{1}$
minimization \eqref{pro3}. Moreover, we analyze how many random measurements of some random measurements matrices $A$  are sufficient to satisfy
 the block RIP condition with high probability. Last, we present an algorithm used to solve  the weighted $\ell_2/\ell_{1}$
 minimization \eqref{pro4} with $0<\omega_i\leq 1$ ($i=1,2,\ldots, L$) and  illustrate the advantages of weighted $\ell_2/\ell_1$ minimization
approach  in the recovery performance of block sparse signals under
uniform and non-uniform prior information by extensive numerical
experiments.

The rest of this paper is organized as follows. In Section \ref{2},
we will introduce some notations and some basic results that will be
used. The main results and  their proofs
 are presented in Section \ref{3}.  Section \ref{4} discusses the measurement number of  some  random matrices
 satisfying the block RIP condition with high probability.
 In Section \ref{5},
we demonstrate the benefits of the weighted $\ell_2/\ell_1$-minimization allowing uniform and non-uniform weights in the reconstruction of
 block sparse signals by numerical experiments. A conclusion is included in Section \ref{6}.

\section{Preliminaries}\label{2}
Let us begin with some notations. Define a mixed $\ell_2/\ell_p$
norm with $p=1,\ 2,\ \infty$
 as $\|x\|_{2,p}=(\sum_{i=1}^M\|x[i]\|_2^p)^{\frac{1}{p}}$. Note that $\|x\|_{2,2}=\|x\|_2$.
 Let $\Gamma\subset [M]$ be a block index set and $\Gamma^c\subset [M]$ be its complement set. For arbitrary block signal
$x\in\mathbb{R}^{N}$ over $\mathcal{I}$,  let $x_{k}$ over $\mathcal{I}$ be its best block $k$-sparse
approximation such that $x_{k}$ is block
$k$-sparse  supported on $T \subseteq [M]$ with $|T|\leq k$ and minimizes $\|x-s\|_{2,1}$
over all block $k$-sparse vectors $s$ over  $\mathcal{I}$.
Then $T$ is the block support of $x_{k}$,
i.e., $T={\rm b}$-$\rm{supp}$$(x_{k})$.
Let $x[\Gamma] \in \R^N $ over $\mathcal{I}$ be a vector which equals to $x$ on block indices $\Gamma$ and $0$ otherwise.
$x[\Gamma][i]$ denotes $i$th block of $x[\Gamma]$.
$x[\max(k)]$ over $\mathcal{I}$ is defined as $x$ with all but the
largest $k$  blocks in $\ell_2$ norm set to zero, and
$x[-\max(k)]=x-x[\max(k)]$.  For any  $i\in\{1,2,\cdots,L\}$, let $\widetilde{T}_i\subseteq
[M]$ be the support estimate of $x$ with
$|\widetilde{T}_i|=\rho_i k$ ($\rho_i \geq 0$) and $\widetilde{T}=\cup_{i=1}^L\widetilde{T}_i$,
where $\widetilde{T}_i\cap \widetilde{T}_j=\emptyset\ (i\neq j)$, $|\widetilde{T}|=\rho k$ and $\rho=\sum\limits_{i=1}^L\rho_i \geq 0$ represents the ratio
of the size of all estimated block support to the size of the actual block support $T$.  $\delta_k$ denotes the  $k$ order standard  restricted isometry constant \cite{CT}.

The following lemma is a key technical tool for analysing the sharp restricted isometry conditions of block sparse signal recovery.
It is an extension of Lemma 1.1 \cite{CZ} in the block case, which represents block signals in a block polytope by convex combination of block sparse
signals.
 \begin{lem}(\cite{CL2}, Lemma 2.2)\label{l1}
For a positive number $\beta$ and a positive integer $s$,
define the block polytope $T(\beta, s)\subset\mathbb{R}^{N}$ by
$$T(\beta, s)=\{v\in\mathbb{R}^{N}: \|v\|_{2,\infty}\leq \beta, \|v\|_{2,1}\leq s\beta\}.$$
For any $v\in \mathbb{R}^{N}$,
define the set of block  sparse vectors $U(\beta, s, v)\subset\mathbb{R}^{N}$ by
\begin{eqnarray*}
U(\beta, s, v)=\{u\in\mathbb{R}^{N}: \mathrm{b}\text{-}\mathrm{supp}(u)\subseteq \mathrm{b}\text{-}\mathrm{supp}(v), \|u\|_{2,0}\leq s, \|u\|_{2,1}=\|v\|_{2,1}, \|u\|_{2,\infty}\leq\beta\}.
\end{eqnarray*}
Then any $v\in T(\beta, s)$ can be expressed
as
$$v=\sum\limits_{i=1}^{J}\lambda_{i}u_{i},$$
where $u_{i}\in U(\beta, s, v)$ and $0\leq \lambda_{i}\leq 1,\  \sum\limits_{i=1}^{J}\lambda_{i}=1.$
\end{lem}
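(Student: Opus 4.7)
My plan is to reduce the block statement to its scalar counterpart, Lemma 1.1 of \cite{CZ}. Let $T = \mathrm{b}\text{-}\mathrm{supp}(v)$. I would write each nonzero block in polar form by setting $a_i := \|v[i]\|_2$ and $e_i := v[i]/a_i$ for $i \in T$, so that $\|e_i\|_2 = 1$ and $v[i] = a_i e_i$. The hypotheses $v \in T(\beta, s)$ translate directly into the scalar constraints $a_i \leq \beta$ and $\sum_{i\in T} a_i \leq s\beta$ on the nonnegative vector $a = (a_i)_{i\in T} \in \R_{\geq 0}^{|T|}$.

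Next I would decompose $a$ as a convex combination of $s$-sparse vectors inside the same box. The vector $a$ lies in the bounded polytope
\[
P := \bigl\{ b \in \R^{|T|} : 0 \leq b_i \leq \beta, \ \textstyle\sum_{i} b_i = \|v\|_{2,1} \bigr\}.
\]
A vertex of $P$ is forced to have at most one coordinate strictly inside $(0,\beta)$ with all remaining coordinates saturated at $0$ or $\beta$; because $\|v\|_{2,1} \leq s\beta$, any such vertex can have at most $s$ nonzero entries. Carath\'eodory's theorem then produces $a = \sum_{j=1}^{J} \lambda_j b^{(j)}$ with $\lambda_j \geq 0$, $\sum_j \lambda_j = 1$, and each $b^{(j)} \in P$ supported on at most $s$ indices of $T$.

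To lift this back, for each $j$ I would define $u_j \in \R^N$ by $u_j[i] = b^{(j)}_i e_i$ for $i \in T$ and $u_j[i] = 0$ otherwise. The checks are immediate: $\mathrm{b}\text{-}\mathrm{supp}(u_j) \subseteq T$, $\|u_j\|_{2,0} \leq s$, $\|u_j[i]\|_2 = b^{(j)}_i \leq \beta$, and $\|u_j\|_{2,1} = \|b^{(j)}\|_1 = \|v\|_{2,1}$, hence $u_j \in U(\beta, s, v)$. Because the directions $e_i$ are shared by $v$ and all the $u_j$'s,
\[
\sum_{j=1}^{J} \lambda_j u_j[i] = \Bigl(\sum_{j=1}^{J} \lambda_j b^{(j)}_i\Bigr) e_i = a_i e_i = v[i] \quad (i \in T),
\]
and both sides vanish for $i \notin T$, yielding $v = \sum_j \lambda_j u_j$ as claimed.

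The only delicate point I anticipate is the vertex characterization in the scalar step: one must treat with care the boundary case $\|v\|_{2,1} = s\beta$ (where a vertex may have exactly $s$ coordinates equal to $\beta$ and none in $(0,\beta)$), and dispatch the degenerate case $v = 0$ trivially by taking $J=1$, $\lambda_1 = 1$, $u_1 = 0$. Otherwise the polar factorization $v[i] = a_i e_i$ makes the lift essentially free, since the fixed unit directions $e_i$ allow the scalar convex combination to transfer to the block setting with no additional geometric work.
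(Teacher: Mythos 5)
Your proposal is correct. Note, however, that the paper itself gives no proof of this lemma: it is quoted verbatim from Lemma 2.2 of \cite{CL2}, which in turn is the block analogue of Lemma 1.1 of \cite{CZ}, proved there by an inductive/greedy construction that repeatedly splits off sparse pieces until the residual is itself $s$-sparse. Your route is genuinely different and, in my view, cleaner on two counts. First, the reduction of the block statement to the scalar one by writing $v[i]=a_i e_i$ with $a_i=\|v[i]\|_2$ and $\|e_i\|_2=1$ is exactly the right observation: all four defining conditions of $U(\beta,s,v)$ are invariant under this polar factorization, so the block case carries no extra geometric content beyond the scalar case. Second, rather than invoking the inductive argument of \cite{CZ}, you reprove the scalar case by exhibiting $a$ as a convex combination of vertices of the polytope $P=\{b:0\leq b_i\leq\beta,\ \sum_i b_i=\|v\|_{2,1}\}$ and checking that every vertex has at most one coordinate in the open interval $(0,\beta)$, hence at most $\lfloor \|v\|_{2,1}/\beta\rfloor+1\leq s$ nonzero entries (with the boundary case $\|v\|_{2,1}=s\beta$ handled as you note). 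This vertex count is correct, and Carath\'eodory (or just the fact that a bounded polytope is the convex hull of its vertices) finishes the scalar step. What the inductive proof of \cite{CZ} buys is an explicit bound on the number $J$ of terms and a constructive decomposition; what your argument buys is brevity and a transparent reduction that makes clear why the block extension is free. Either way the conclusion needed downstream (existence of some finite convex combination) is the same, so your proof is a valid substitute.
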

 Cai and Zhang established  a useful elementary inequality in Lemma 5.3 \cite{CZ1}. Applying  the inequality,
we can perform finer estimation on mixed $\ell_2/\ell_1,\ \ell_2/\ell_2$ norms  in  the  proof of Theorem 3.1.
 \begin{lem}(\cite{CZ1}, Lemma 5.3)\label{l2}
Assume $m\geq l$, $a_{1}\geq a_{2}\geq\cdots\geq a_{m}\geq 0$,
$\sum\limits_{i=1}^{l}a_{i}\geq \sum\limits_{i=l+1}^{m}a_{i},$
then for all $\theta\geq 1$,
$$\sum\limits_{j=l+1}^{m}a_{j}^{\theta}\leq \sum\limits_{i=1}^{l}a_{i}^{\theta}.$$
More generally, assume $a_{1}\geq a_{2}\geq\cdots\geq a_{m}\geq 0$, $\lambda\geq 0$
and $\sum\limits_{i=1}^{l}a_{i}+\lambda\geq \sum\limits_{i=l+1}^{m}a_{i},$ then for all $\theta\geq 1$,
$$\sum\limits_{j=l+1}^{m}a_{j}^{\theta}\leq l\Big(\sqrt[\theta]{\frac{\sum_{i=1}^{l}a_{i}^{\theta}}{l}}+\frac{\lambda}{l}\Big)^{\theta}.$$
\end{lem}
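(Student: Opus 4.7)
The plan is to handle the two parts sequentially, with the second reducing to the first via a redistribution trick followed by Minkowski's inequality.

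For the first part, the strategy is to exploit monotonicity by peeling a single power $a_j^{\theta-1}$ off each tail term. Since $a_j\leq a_l$ for every $j\geq l+1$ and since $t\mapsto t^{\theta-1}$ is nondecreasing on $[0,\infty)$ when $\theta\geq 1$, I would first bound
$$\sum\limits_{j=l+1}^{m}a_{j}^{\theta}=\sum\limits_{j=l+1}^{m}a_{j}\cdot a_{j}^{\theta-1}\leq a_{l}^{\theta-1}\sum\limits_{j=l+1}^{m}a_{j}\leq a_{l}^{\theta-1}\sum\limits_{i=1}^{l}a_{i},$$
where the final step invokes the hypothesis $\sum_{i=1}^{l}a_i\geq\sum_{j=l+1}^{m}a_j$. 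Then, because $a_l\leq a_i$ for each $i\leq l$ (and hence $a_l^{\theta-1}\leq a_i^{\theta-1}$), I would write $a_l^{\theta-1}a_i\leq a_i^{\theta}$ and sum over $i\in\{1,\ldots,l\}$ to conclude $\sum_{j>l}a_j^{\theta}\leq\sum_{i\leq l}a_i^{\theta}$.

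For the second part, the strategy is to absorb the slack $\lambda$ uniformly into the first $l$ coordinates and fall back on part one. I would set $\tilde{a}_i=a_i+\lambda/l$ for $1\leq i\leq l$ and $\tilde{a}_j=a_j$ for $j\geq l+1$. A quick check confirms $\{\tilde{a}_i\}$ is still nonincreasing (the only delicate junction, $\tilde{a}_l$ versus $\tilde{a}_{l+1}$, is automatic from $a_l+\lambda/l\geq a_l\geq a_{l+1}$), and by construction
$$\sum\limits_{i=1}^{l}\tilde{a}_i=\sum\limits_{i=1}^{l}a_i+\lambda\geq\sum\limits_{j=l+1}^{m}a_j=\sum\limits_{j=l+1}^{m}\tilde{a}_j.$$
Applying part one to the modified sequence yields $\sum_{j=l+1}^{m}a_j^{\theta}\leq\sum_{i=1}^{l}(a_i+\lambda/l)^{\theta}$.

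To close the argument I would apply Minkowski's inequality in $\ell^{\theta}$ to the pair of sequences $(a_1,\ldots,a_l)$ and $(\lambda/l,\ldots,\lambda/l)$, giving
$$\left(\sum\limits_{i=1}^{l}(a_i+\lambda/l)^{\theta}\right)^{1/\theta}\leq\left(\sum\limits_{i=1}^{l}a_i^{\theta}\right)^{1/\theta}+l^{1/\theta}\cdot\frac{\lambda}{l},$$
then raise both sides to the $\theta$-th power and factor $l^{1/\theta}$ out of the bracket, landing exactly at the claimed bound $l\bigl(\sqrt[\theta]{\sum_{i=1}^{l}a_i^{\theta}/l}+\lambda/l\bigr)^{\theta}$. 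The main conceptual obstacle is recognizing that the uniform redistribution $\lambda\mapsto(\lambda/l,\ldots,\lambda/l)$ is precisely what keeps the modified sequence monotone (so part one applies) while matching the $\ell^{\theta}$-ball geometry invoked by Minkowski; everything else is a routine chain of convexity and monotonicity estimates.
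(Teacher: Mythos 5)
Your proof is correct. Note, however, that the paper itself offers no proof of this statement: it is quoted verbatim as Lemma 5.3 of the cited Cai--Zhang reference \cite{CZ1}, so there is no in-paper argument to compare against. On its own merits, your argument is sound: the first part's ``peel off $a_j^{\theta-1}$ and compare to $a_l^{\theta-1}$'' chain is exactly right (and handles the degenerate case $a_l=0$ automatically, since then all tail terms vanish); the second part's uniform redistribution $\tilde a_i=a_i+\lambda/l$ preserves monotonicity and the sum condition, so part one applies, and the final Minkowski step in $\ell^\theta$ together with factoring out $l^{1/\theta}$ lands precisely on the stated bound. Your route differs slightly from the original Cai--Zhang proof, which bounds the tail directly via $a_j\leq a_l\leq\frac{1}{l}\sum_{i\leq l}a_i$ and the power-mean inequality rather than reducing the general case to the first part; your redistribution-plus-Minkowski reduction is arguably more transparent and makes the role of the $\ell^\theta$-geometry explicit, at the cost of invoking Minkowski where the original gets by with the power-mean inequality alone. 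Either way, the lemma is established.
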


As we mentioned in the introduction, Chen and Li have obtained  a high order sufficient condition based on the block RIP to ensure the recovery of block sparse signals in \cite{CL2}.
The main result on the sufficient condition is stated as below.
\begin{thm}(\cite{CL2}, Theorem 3.1)\label{t1}
Let $x\in\R^N$ be an arbitrary vector consistent with \eqref{mod1}  and $\|z\|_{2}\leq \varepsilon$.
 If the measurement matrix $A$ satisfies the block RIP with
\begin{eqnarray}\label{g1}
 \delta_{tk}^\mathcal{I}<\sqrt{\frac{t-1}{t}}
\end{eqnarray}
for  $t>1$, the solution $\hat{x}$ to \eqref{pro3} obeys
\begin{eqnarray}\label{g2}
  \|\hat{x}-x\|_{2}\leq 2C_{0}\varepsilon
  +C_{1}\frac{2\|x[T^c]\|_{2,1}}{\sqrt{k}},
\end{eqnarray}
where
\begin{eqnarray}\label{g21}
  &&C_{0}=\frac{\sqrt{2t(t-1)(1+\delta_{tk}^\mathcal{I})}}{t(\sqrt{(t-1)/t}-\delta_{tk}^\mathcal{I})}, \notag\\
   &&C_{1}=\frac{\sqrt{2}\delta_{tk}^\mathcal{I}+\sqrt{t(\sqrt{(t-1)/t}-\delta_{tk}^\mathcal{I})
\delta_{tk}^\mathcal{I}}}{t(\sqrt{(t-1)/t}-\delta_{tk}^\mathcal{I})}+1.
\end{eqnarray}
\end{thm}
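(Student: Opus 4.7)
The plan is to adapt the Cai--Zhang shifting--inequality methodology to the block setting. First I would set $h=\hat{x}-x$. Feasibility of $x$ together with $\|z\|_2\leq\varepsilon$ and $\|y-A\hat{x}\|_2\leq\varepsilon$ yield the tube inequality $\|Ah\|_2\leq 2\varepsilon$. Letting $T=\mathrm{b}\text{-}\mathrm{supp}(x_{k})$ and comparing $\|\hat{x}\|_{2,1}\leq\|x\|_{2,1}$ block-by-block on $T$ and $T^c$ via the triangle inequality gives the cone inequality
\begin{equation*}
\|h[T^c]\|_{2,1}\;\leq\;\|h[T]\|_{2,1}+2\|x[T^c]\|_{2,1}\;=:\;\eta.
\end{equation*}

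Next I would implement the sharp polytope decomposition of the tail. Let $T_1\subset T^c$ index the largest $(t-1)k$ blocks of $h[T^c]$ in $\ell_2$-norm (non-integer $(t-1)k$ is absorbed into the ceiling convention attached to $\delta_{tk}^{\mathcal I}$). Choose
\begin{equation*}
\alpha\;=\;\frac{\|h[(T\cup T_1)^c]\|_{2,1}}{(t-1)k},
\end{equation*}
so that $v:=h[(T\cup T_1)^c]$ obeys $\|v\|_{2,\infty}\leq\alpha$ and $\|v\|_{2,1}\leq(t-1)k\,\alpha$; that is, $v\in T(\alpha,(t-1)k)$. Lemma~\ref{l1} then yields $v=\sum_i\lambda_i u_i$ with each $u_i$ block $(t-1)k$-sparse, supported in $(T\cup T_1)^c$, and satisfying $\|u_i\|_{2,\infty}\leq\alpha$ and $\|u_i\|_{2,1}=\|v\|_{2,1}$. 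In particular $\|u_i\|_2\leq\sqrt{(t-1)k}\,\alpha$, and because $u_i$ is disjointly supported from $h[T\cup T_1]$, each $h[T\cup T_1]\pm u_i$ is block $tk$-sparse.

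The heart of the argument is an inner-product bound on $\langle Ah,Ah[T\cup T_1]\rangle$. For each $u_i$, the polarization identity together with the block RIP applied to the block $tk$-sparse vectors $h[T\cup T_1]\pm u_i$ yields
\begin{equation*}
\langle Au_i,Ah[T\cup T_1]\rangle\;\leq\;\tfrac{\delta_{tk}^{\mathcal I}}{2}\bigl(\|h[T\cup T_1]\|_2^2+\|u_i\|_2^2\bigr),
\end{equation*}
the linear cross term $\langle h[T\cup T_1],u_i\rangle$ vanishing by disjointness of supports. Substituting $h=h[T\cup T_1]+\sum_i\lambda_i u_i$ and combining $\|Ah[T\cup T_1]\|_2^2\geq(1-\delta_{tk}^{\mathcal I})\|h[T\cup T_1]\|_2^2$ with $\langle Ah,Ah[T\cup T_1]\rangle\leq 2\varepsilon\sqrt{1+\delta_{tk}^{\mathcal I}}\,\|h[T\cup T_1]\|_2$ produces a quadratic inequality in $\|h[T\cup T_1]\|_2$ whose solvability is controlled exactly by the gap $\sqrt{(t-1)/t}-\delta_{tk}^{\mathcal I}$. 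The tail term $\sum_i\lambda_i\|u_i\|_2^2\leq(t-1)k\,\alpha^2\leq\eta^2/((t-1)k)$ feeds $\eta$, and hence $\|h[T]\|_{2,1}\leq\sqrt{k}\,\|h[T\cup T_1]\|_2$, back into the estimate.

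Solving that quadratic yields $\|h[T\cup T_1]\|_2$ in terms of $\varepsilon$ and $\|x[T^c]\|_{2,1}/\sqrt{k}$ with constants matching $C_0,C_1$ in \eqref{g21}. To then upgrade to the full $\|h\|_2$ I would invoke Lemma~\ref{l2} with $\theta=2$ and $\lambda=2\|x[T^c]\|_{2,1}$ on the sorted block norms of $h[T^c]$: the cone inequality furnishes precisely the hypothesis $\sum_{i\leq(t-1)k}a_i+\lambda\geq\sum_{i>(t-1)k}a_i$, and the conclusion bounds $\|h[(T\cup T_1)^c]\|_2$ by a comparable multiple of $\|h[T\cup T_1]\|_2$ plus an $\|x[T^c]\|_{2,1}/\sqrt{k}$ term. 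The identity $\|h\|_2^2=\|h[T\cup T_1]\|_2^2+\|h[(T\cup T_1)^c]\|_2^2$ then delivers \eqref{g2}. The main obstacle I foresee is keeping every constant sharp: the disjointness cancellation in the polarization step, the exact value of $\alpha$, and the $\ell_{2,1}$-to-$\ell_{2,2}$ conversion via Lemma~\ref{l2} must each contribute precisely the right factor so that the critical ratio emerges as $\sqrt{(t-1)/t}$ rather than a strictly looser threshold; any slack in these three places will degrade both the RIP bound \eqref{g1} and the constants in \eqref{g21}.
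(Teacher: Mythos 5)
Your overall skeleton (tube constraint, cone constraint, polytope decomposition via Lemma~\ref{l1}, an RIP-driven quadratic inequality in the norm of the ``head'', then Lemma~\ref{l2} to pass to $\|h\|_2$) is the same strategy the paper uses to prove the generalization Theorem~\ref{t3} (which reduces to this statement when all $\omega_i=1$, so $\Upsilon_L=1$ and $d=1$). However, there are two concrete gaps in your calibration of the decomposition, and the second one breaks the argument. First, with $T_1$ the largest $(t-1)k$ blocks of $h[T^c]$ and $\alpha=\|h[(T\cup T_1)^c]\|_{2,1}/((t-1)k)$, the claim $\|v\|_{2,\infty}\leq\alpha$ for $v=h[(T\cup T_1)^c]$ is false in general: if $h[T^c]$ has exactly $(t-1)k+1$ equal-norm blocks then $\|v\|_{2,\infty}=\|v\|_{2,1}>\alpha$. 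The sup-norm of the tail is controlled by the average of the \emph{selected large} blocks, $\|h[T_1]\|_{2,1}/((t-1)k)$, not by the average of the discarded ones. Second, and fatally, $h[T\cup T_1]\pm u_i$ is \emph{not} block $tk$-sparse: $|T\cup T_1|$ can be as large as $tk$ and each $u_i$ is supported in $(T\cup T_1)^c$ with up to $(t-1)k$ nonzero blocks, so the combined support can reach $(2t-1)k$ blocks. Disjointness makes sparsities add, so you cannot invoke $\delta_{tk}^{\mathcal I}$ at the step where you bound $\langle Au_i,Ah[T\cup T_1]\rangle$, and the quadratic inequality never materializes at order $tk$.

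The paper avoids exactly this by splitting the tail with a \emph{threshold} rather than by count: with $r=\bigl(\|h[T]\|_{2,1}+2\|x[T^c]\|_{2,1}\bigr)/k$, it puts into $h^{(1)}$ the tail blocks of $\ell_2$-norm exceeding $r/(t-1)$ (there are $m\leq(t-1)k$ of them) and applies Lemma~\ref{l1} to the remainder with $\beta=r/(t-1)$ and $s=(t-1)k-m$, so that the head $h[\max(k)]+h^{(1)}$ is $(k+m)$-sparse, each $u_i$ is $((t-1)k-m)$-sparse, and every vector fed to the RIP is genuinely block $tk$-sparse. Note also that even after fixing the bookkeeping, the plain polarization bound $\langle Au_i,Ah[T\cup T_1]\rangle\leq\tfrac{\delta}{2}(\|h[T\cup T_1]\|_2^2+\|u_i\|_2^2)$ does not by itself produce the sharp threshold $\sqrt{(t-1)/t}$; the paper needs the full convex-combination identity \eqref{g18} applied to $\beta_i=h[\max(k)]+h^{(1)}+\mu u_i$ with the tuned parameter $\mu=\sqrt{t(t-1)}-(t-1)$, and the constants $C_0,C_1$ come out of solving the resulting quadratic with that specific $\mu$. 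As written, your proposal would at best yield a strictly weaker RIP threshold at a higher order than $tk$.
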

Note that if $t\geq 4/3$,   the condition \eqref{g1} is sharp in Theorem 3.2 of  \cite{CL2}.

It is clear that the weighted $\ell_2/\ell_1$ minimization problem \eqref{pro4} is equivalent to the weighted $\ell_1$ minimization problem when
$\mathcal{I}=\{d_1=1,d_2=1,\cdots,d_M=1\}$, i.e, $M=N$. In the case,
Theorem \ref{theor1} below states the main result of \cite{NSW} for the weighted $\ell_1$ minimization with $L$ ($1\leq L\leq N$) weights.
\begin{thm}(\cite{NSW},Theorem 2)\label{theor1}
Let $x\in\R^N$,  $x_k$ denote its best $k$-sparse approximation, and denote the support of
$x_k$ by $T\subseteq\{1,2,\ldots,N\}$. Let $\widetilde{T}_i\subseteq\{1,2,\cdots,N\}$ for $i=1,\ldots,L$, where $1\leq L\leq N$, be arbitrary disjoint sets
and denote $\widetilde{T}=\cup_{i=1}^L\widetilde{T}_i$. Without loss of generality, assume that the weights in \eqref{pro7} are ordered so that
$1\geq\omega_1\geq\omega_2\geq\cdots\geq\omega_L\geq0$. For each $i$, define the relative size $\rho_i$ and $\alpha_i$ via
$|\widetilde{T}_i|=\rho_i k$ and $\frac{|\widetilde{T}_i\cap T|}{|\widetilde{T}_i|}=\alpha_i$. Suppose that there
exists $a>1,\ a\in\frac{1}{k}\Z$ with $\sum_{i=1}^L\rho_i(1-\alpha_i)\leq a$, and that the measurement matrix $A$ has the standard RIP with
\begin{align}\label{ghm1}
\delta_{ak}+\frac{a}{K_L^2}\delta_{(a+1)k}<\frac{a}{K_L^2}-1,
\end{align}
where $K_L=\omega_L+(1-\omega_1)\sqrt{1+\sum\limits_{i=1}^L(\rho_i-2\alpha_i\rho_i)}+\sum\limits_{j=2}^L\bigg((\omega_{j-1}-\omega_j)\sqrt{1+\sum\limits_{i=j}^L(\rho_i-2\alpha_i\rho_i)}\bigg)$.
Then  the minimizer $\hat{x}$ to \eqref{pro4} obeys
\begin{align*}
\|\hat{x}-x\|_2\leq 2\varepsilon C_0^{'}+2C_1^{'}k^{-\frac{1}{2}}\bigg(\|x-x_k\|_1\sum_{i=1}^L\omega_i+(1-\sum_{i=1}^L\omega_i)\|x_{\widetilde{T}^c\cap T^c}\|_1-\sum_{i=1}^L\sum_{j=1,j\neq i}^L\omega_j\|x_{\widetilde{T}_i\cap T^c}\|_1\bigg)
\end{align*}
where
the constants
\begin{align}\label{g6}
C_0^{'}=\frac{1+\frac{ K_L}{\sqrt{a}}}{\sqrt{1-\delta_{(a+1)k}}-\frac{K_L}{\sqrt{a}}\sqrt{1+\delta_{ak}}},
\ \ \ \ C_1^{'}=\frac{a^{-1/2}(\sqrt{1-\delta_{(a+1)k}}
+\sqrt{1+\delta_{ak}})}{\sqrt{1-\delta_{(a+1)k}}-\frac{K_L}{\sqrt{a}}\sqrt{1+\delta_{ak}}}.
\end{align}
\end{thm}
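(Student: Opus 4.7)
The plan is to follow the standard two-step recipe for RIP-based recovery bounds, adapted to the weighted setting and to multiple weight levels. Set $h=\hat{x}-x$. Since $x$ is feasible and $\hat{x}$ minimizes the weighted $\ell_1$ norm over the feasible set, $\|Ah\|_2\leq 2\varepsilon$ (tube constraint) and $\|\hat{x}_\mathrm{w}\|_1\leq \|x_\mathrm{w}\|_1$ (optimality). My goal is to convert the latter into a cone-type inequality of the shape $\|h_{T^c}\|_1 \leq K_L \|h_T\|_1 + (\text{compressibility tail})$, then plug into the Cai--Zhang sparse-decomposition machinery to close with the standard $\delta_{ak},\delta_{(a+1)k}$ condition.

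For the cone step, I would partition the ground set into the pieces $\widetilde{T}_i\cap T$, $\widetilde{T}_i\cap T^c$, and $\widetilde{T}^c$, on which the weights are $\omega_i$, $\omega_i$, and $1$ respectively. Expanding $\|\hat{x}_\mathrm{w}\|_1 = \|(x+h)_\mathrm{w}\|_1$ via the reverse/forward triangle inequalities piece by piece, and using $|\widetilde{T}_i\cap T|=\alpha_i\rho_i k$, gives a linear inequality in $\|h\|_{\cdot}$ on these pieces. The weights $\omega_L\leq\cdots\leq\omega_1\leq 1$ are next rewritten telescopically as $\omega_i=\omega_L+\sum_{j\leq i}(\omega_{j-1}-\omega_j)$; this is exactly what produces the three summands of $K_L$. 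Then Cauchy--Schwarz on each level, using that a block-sparse subset of $\widetilde{T}_j^c$ with relative size $\rho_i-2\alpha_i\rho_i$ accounts for the net $T^c$-excess at level $j$, produces the square-root factors $\sqrt{1+\sum_{i\ge j}(\rho_i-2\alpha_i\rho_i)}$.

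With this cone inequality in hand, I would then sort the entries of $h_{T^c}$ in decreasing magnitude, chop them into consecutive chunks of size $ak$, and bound the $\ell_2$-norm of each tail chunk by the $\ell_1$-norm of the previous one divided by $\sqrt{ak}$ (this is the standard Candès-style tail estimate, which works because $a\in\tfrac{1}{k}\mathbb{Z}$ makes the block sizes commensurate with $k$). Applying the RIP with parameters $\delta_{ak}$ and $\delta_{(a+1)k}$ to $Ah$ on the set $T\cup(\text{first chunk})$, combined with $\|Ah\|_2\leq 2\varepsilon$ and the cone inequality, produces a scalar inequality in $\|h_{T\cup\text{first}}\|_2$ whose coefficient is positive precisely under \eqref{ghm1}; solving and combining with the tail estimate yields the stated bound with constants $C_0',C_1'$ in \eqref{g6}.

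The main obstacle is the bookkeeping in the cone step: one must track the positive contribution from $\widetilde{T}_i\cap T^c$ and the negative (credit) contribution from $\widetilde{T}_i\cap T$ simultaneously for every weight level, and show that after the telescoping decomposition they combine into the single clean factor $1+\sum_{i\ge j}(\rho_i-2\alpha_i\rho_i)$ under each square root. I also expect some care to be needed in handling the compressibility terms $\|x-x_k\|_1$, $\|x_{\widetilde{T}^c\cap T^c}\|_1$ and $\|x_{\widetilde{T}_i\cap T^c}\|_1$ with the correct signs; these drop out of the same triangle-inequality accounting and should line up with the parenthesized expression in the conclusion provided the signs are tracked consistently throughout.
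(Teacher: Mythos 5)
First, a point of orientation: the paper itself contains no proof of this statement --- it is imported verbatim as Theorem 2 of \cite{NSW} and stated in the preliminaries purely for comparison purposes, so there is no in-paper proof to check you against. Your outline is, in substance, the argument of \cite{NSW} itself, and it is sound: the tube constraint $\|Ah\|_2\leq 2\varepsilon$; the weighted cone constraint obtained by splitting into $\widetilde{T}_i\cap T$, $\widetilde{T}_i\cap T^c$, $\widetilde{T}^c$ and telescoping the ordered weights (this is exactly what the paper's Lemma \ref{lem2} carries out, in block form, with $\Gamma=T$); the Cauchy--Schwarz step converting the $\ell_1$ norms over the enlarged sets of cardinality $\big(1+\sum_{i\geq j}(\rho_i-2\alpha_i\rho_i)\big)k$ into the square-root factors that assemble into $K_L$; and the Cand\`{e}s-style chunking of $h_{T^c}$ into pieces of size $ak$, so that the RIP enters at orders $(a+1)k$ and $ak$ and the resulting scalar inequality has positive leading coefficient exactly when $\frac{K_L}{\sqrt a}\sqrt{1+\delta_{ak}}<\sqrt{1-\delta_{(a+1)k}}$, which is a rearrangement of \eqref{ghm1} and produces the constants \eqref{g6}. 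Two details to make explicit if you write this up: the cone inequality is not literally $\|h_{T^c}\|_1\leq K_L\|h_T\|_1+\mathrm{tail}$ (the factor $K_L$ appears only after Cauchy--Schwarz on the enlarged sets), and the hypothesis $\sum_i\rho_i(1-\alpha_i)\leq a$ is what guarantees $|\widetilde T\cap T^c|\leq ak$, so that each enlarged set is dominated in $\ell_2$ norm by $T$ united with the first (largest) chunk and the order-$(a+1)k$ RIP suffices. Finally, note that your route is genuinely different from the one the paper uses for its own block analogue (Theorem \ref{t3}): there the cone constraint is kept but the chunking/tail estimate is replaced by the Cai--Zhang polytope representation (Lemma \ref{l1}) and the square identity \eqref{g18} from \cite{CZ}, which trades the pair $(\delta_{ak},\delta_{(a+1)k})$ for a single high-order constant $\delta_{tk}$ and yields the sharper one-constant threshold $\sqrt{(t-d)/(t-d+\Upsilon_L^2)}$; your approach is more elementary but gives only the weaker two-constant condition \eqref{ghm1}.
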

\begin{rmk}\label{rmk3}
Since $\delta_{ak}\leq\delta_{(a+1)k}$, the sufficient condition for \eqref{ghm1} to hold is
\begin{align}\label{g7}
\delta_{(a+1)k}<\frac{a-K_L^2}{a+K_L^2}\triangleq\delta({a+1},{K_L}).
\end{align}
\end{rmk}

From now on, let $h=\hat{x}-x$, where $\hat{x}$ over $\mathcal{I}$
is the minimizer of the weighted $\ell_2/\ell_1$ minimization
problem \eqref{pro4} and $x$ be the original block signal over
$\mathcal{I}$. For any block index set, one establishes a cone
constraint to prove our results (in Section \ref{3}) as following.
\begin{lem}(Block cone constraint)\label{lem2}
For any block index set $\Gamma\subseteq [M]$, it holds that
\begin{align}\label{cgl9}
  &\|h[\Gamma^{c}]\|_{2,1}\leq \omega_L\|h[\Gamma]\|_{2,1}+(1-\omega_1)\|h[\Gamma\cup\cup_{i=1}^L\widetilde{T}_i \backslash \cup_{i=1}^L(\widetilde{T}_i\cap\Gamma)]\|_{2,1}\nonumber\\
 &\ \ \ \ +\sum\limits_{i=2}^L(\omega_{i-1}-\omega_{i})\|h[\Gamma \cup \cup_{j=i}^L\widetilde{T}_j\backslash \cup_{j=i}^L(\widetilde{T}_j\cap \Gamma)]\|_{2,1}
+2\left( \sum_{i=1}^L\omega_i\|x[\Gamma^c]\|_{2,1}\right.\nonumber\\
&\ \ \ \  \left.+(1-\sum_{i=1}^L\omega_i)\|x[\widetilde{T}^c\cap\Gamma^c]\|_{2,1}-
\sum_{i=1}^L(\sum_{j=1}^L\omega_j-\omega_i)\|x[\widetilde{T}_i\cap\Gamma^c]\|_{2,1}\right).
\end{align}
\end{lem}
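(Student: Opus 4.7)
The plan is to start from the optimality of $\hat{x}$ for \eqref{pro4}, derive a weighted cone inequality on $h=\hat{x}-x$, and then convert it into the unweighted form displayed in the lemma via a telescoping decomposition of the weights $1\geq\omega_1\geq\cdots\geq\omega_L\geq 0$.

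Since $x$ is feasible, optimality gives $\sum_{i=1}^M\mathrm{w}_i\|\hat{x}[i]\|_2\leq\sum_{i=1}^M\mathrm{w}_i\|x[i]\|_2$. Writing $\hat{x}[i]=x[i]+h[i]$ and splitting the sum over $\Gamma$ and $\Gamma^c$, I apply the reverse triangle inequality $\|\hat{x}[i]\|_2\geq\|x[i]\|_2-\|h[i]\|_2$ on $\Gamma$ and $\|\hat{x}[i]\|_2\geq\|h[i]\|_2-\|x[i]\|_2$ on $\Gamma^c$. After canceling $\sum_{i\in\Gamma}\mathrm{w}_i\|x[i]\|_2$ on both sides, this yields the weighted cone bound
\[\sum_{i\in\Gamma^c}\mathrm{w}_i\|h[i]\|_2\leq\sum_{i\in\Gamma}\mathrm{w}_i\|h[i]\|_2+2\sum_{i\in\Gamma^c}\mathrm{w}_i\|x[i]\|_2.\]
To turn the left-hand side into $\|h[\Gamma^c]\|_{2,1}$, I add $\sum_{i\in\Gamma^c}(1-\mathrm{w}_i)\|h[i]\|_2$ to both sides.

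The crucial algebraic step is the telescoping identity for the weights. Setting $S_j=\bigcup_{k=j}^L\widetilde{T}_k$ (so $S_1=\widetilde{T}$), one checks from \eqref{pro7} that
\[\mathrm{w}_i=\omega_L+(1-\omega_1)\mathbf{1}_{\widetilde{T}^c}(i)+\sum_{j=2}^L(\omega_{j-1}-\omega_j)\mathbf{1}_{S_j^c}(i),\qquad 1-\mathrm{w}_i=(1-\omega_1)\mathbf{1}_{\widetilde{T}}(i)+\sum_{j=2}^L(\omega_{j-1}-\omega_j)\mathbf{1}_{S_j}(i).\]
Substituting these into $\sum_{i\in\Gamma}\mathrm{w}_i\|h[i]\|_2+\sum_{i\in\Gamma^c}(1-\mathrm{w}_i)\|h[i]\|_2$ and using the set identity $(\Gamma\cap E^c)\cup(\Gamma^c\cap E)=(\Gamma\cup E)\setminus(\Gamma\cap E)$ with $E=S_j$ (and with $E=\widetilde{T}$) collapses the $h$-contribution to exactly $\omega_L\|h[\Gamma]\|_{2,1}+(1-\omega_1)\|h[(\Gamma\cup\widetilde{T})\setminus(\Gamma\cap\widetilde{T})]\|_{2,1}+\sum_{j=2}^L(\omega_{j-1}-\omega_j)\|h[(\Gamma\cup S_j)\setminus(\Gamma\cap S_j)]\|_{2,1}$, which is the first three terms of the claim after unfolding $\Gamma\cap S_j=\bigcup_{k=j}^L(\widetilde{T}_k\cap\Gamma)$.

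Finally, for the $x$-term I decompose $\Gamma^c=(\Gamma^c\cap\widetilde{T}^c)\cup\bigcup_{j=1}^L(\Gamma^c\cap\widetilde{T}_j)$ and use $\mathrm{w}_i=1$ on $\widetilde{T}^c$ and $\mathrm{w}_i=\omega_j$ on $\widetilde{T}_j$, so $\sum_{i\in\Gamma^c}\mathrm{w}_i\|x[i]\|_2=\|x[\widetilde{T}^c\cap\Gamma^c]\|_{2,1}+\sum_{j=1}^L\omega_j\|x[\widetilde{T}_j\cap\Gamma^c]\|_{2,1}$. Setting $W=\sum_{j=1}^L\omega_j$ and using $\|x[\Gamma^c]\|_{2,1}=\|x[\widetilde{T}^c\cap\Gamma^c]\|_{2,1}+\sum_{j=1}^L\|x[\widetilde{T}_j\cap\Gamma^c]\|_{2,1}$ lets me rewrite this quantity as $W\|x[\Gamma^c]\|_{2,1}+(1-W)\|x[\widetilde{T}^c\cap\Gamma^c]\|_{2,1}-\sum_{j=1}^L(W-\omega_j)\|x[\widetilde{T}_j\cap\Gamma^c]\|_{2,1}$, which matches the last three terms of the statement. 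The only real obstacle is finding the telescoping decomposition of $\mathrm{w}_i$ and $1-\mathrm{w}_i$ that makes the $\Gamma$-part and the $\Gamma^c$-part of the $h$-contribution combine cleanly; once that is in hand, the remainder is set-theoretic bookkeeping.
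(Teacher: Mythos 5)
Your proposal is correct, and it reaches the same intermediate weighted cone bound
\begin{equation*}
\sum_{i\in\Gamma^c}\mathrm{w}_i\|h[i]\|_2\leq\sum_{i\in\Gamma}\mathrm{w}_i\|h[i]\|_2+2\sum_{i\in\Gamma^c}\mathrm{w}_i\|x[i]\|_2
\end{equation*}
that the paper obtains from optimality and the triangle inequality. Where you genuinely depart from the paper is in how this is converted to the stated unweighted form. The paper proceeds by a long chain of add-and-subtract manipulations (adding $\sum_i\omega_i\|h[\widetilde{T}_i^c\cap\Gamma^c]\|_{2,1}$ and similar quantities to both sides, then regrouping across several displayed identities), which is the most delicate and error-prone part of its argument. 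You instead isolate the single pointwise telescoping identity
\begin{equation*}
\mathrm{w}_i=\omega_L+(1-\omega_1)\mathbf{1}_{\widetilde{T}^c}(i)+\sum_{j=2}^L(\omega_{j-1}-\omega_j)\mathbf{1}_{S_j^c}(i),\qquad S_j=\bigcup_{k=j}^L\widetilde{T}_k,
\end{equation*}
together with its complement for $1-\mathrm{w}_i$, and then the symmetric-difference identity $(\Gamma\cap S_j^c)\cup(\Gamma^c\cap S_j)=(\Gamma\cup S_j)\setminus(\Gamma\cap S_j)$ does all the regrouping in one step; I verified that the identity is correct on $\widetilde{T}^c$ and on each $\widetilde{T}_m$ (the sum telescopes to $1$ and to $\omega_m$ respectively), and that your rewriting of $\sum_{i\in\Gamma^c}\mathrm{w}_i\|x[i]\|_2$ in terms of $W=\sum_j\omega_j$ matches the last three terms of \eqref{cgl9}. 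The trade-off is that your route makes the structural reason for the telescoped form of \eqref{cgl9} transparent and is much easier to check, while the paper's route stays closer to the norms appearing in the final statement at every stage; both are complete proofs of the lemma.
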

\begin{rmk}
When $\omega_i=1$ for all $i\in\{1,2,\ldots,L\}$,  the result of  Lemma \ref{lem2} is identical to that of Lemma 2.3 in \cite{CL2}.
Suppose  $d_i=1$ for all $i\in[M]$,  the result of  Lemma \ref{lem2} meets with that of Lemma 1 in \cite{NSW}. Under the above assumption, if
$0\leq \omega_1=\omega_2=\cdots=\omega_L=\omega<1$, the inequality \eqref{cgl9} is
\begin{align*}
 \|h[\Gamma^{c}]\|_{2,1}\leq& \omega\|h[\Gamma]\|_{2,1}+(1-\omega)\|h[\Gamma\cup\widetilde{T}\backslash\widetilde{T}\cap\Gamma]\|_{2,1}\nonumber\\
 &+2( \omega\|x[\Gamma^c]\|_{2,1}+(1-\omega)\|x[\widetilde{T}^c\cap\Gamma^c]\|_{2,1}),
\end{align*}
 which is (21) of \cite{FMSY}.
\end{rmk}
\begin{proof}
Using the fact that $\hat{x}=x+h$ is                                                                                                                                                                                                  a minimizer of the weighted  $\ell_2/\ell_1$ minimization  problem \eqref{pro4}, we have
\begin{eqnarray*}
\|\hat{x}_\mathrm{w}\|_{2,1}=\|(x+h)_\mathrm{w}\|_{2,1}\leq\|x_\mathrm{w}\|_{2,1}.
\end{eqnarray*}
We then obtain that
\begin{eqnarray*}
\sum_{i=1}^L\omega_i\|x[\widetilde{T}_i]+h[\widetilde{T}_i]\|_{2,1}+\|x[\widetilde{T}^c]+h[\widetilde{T}^c]\|_{2,1}\leq
\sum_{i=1}^L\omega_i\|x[\widetilde{T}_i]\|_{2,1}+\|x[\widetilde{T}^c]\|_{2,1},
\end{eqnarray*}
since $\widetilde{T}_i\cap \widetilde{T}_j=\emptyset\ (i\neq j)$.
Therefore,
\begin{eqnarray*}
&&\sum_{i=1}^L\omega_i\|x[\widetilde{T}_i\cap\Gamma]+h[\widetilde{T}_i\cap\Gamma]\|_{2,1}+\sum_{i=1}^L\omega_i\|x[\widetilde{T}_i\cap\Gamma^c]
+h[\widetilde{T}_i\cap\Gamma^c]\|_{2,1}\\
&&\ +\|x[\widetilde{T}^c\cap\Gamma]+h[\widetilde{T}^c\cap\Gamma]\|_{2,1}+\|x[\widetilde{T}^c\cap\Gamma^c]+h[\widetilde{T}^c\cap\Gamma^c]\|_{2,1}\\
&&\leq\sum_{i=1}^L\omega_i\|x[\widetilde{T}_i\cap\Gamma]\|_{2,1}+\sum_{i=1}^L\omega_i\|x[\widetilde{T}_i\cap\Gamma^c]\|_{2,1}
+\|x[\widetilde{T}^c\cap\Gamma]\|_{2,1}+\|x[\widetilde{T}^c\cap\Gamma^c]\|_{2,1}.
\end{eqnarray*}
From the triangle inequality, it follows that
\begin{eqnarray*}
&&\sum_{i=1}^L\omega_i\|x[\widetilde{T}_i\cap\Gamma]\|_{2,1}-\sum_{i=1}^L\omega_i\|h[\widetilde{T}_i\cap\Gamma]\|_{2,1}
+\sum_{i=1}^L\omega_i\|h[\widetilde{T}_i\cap\Gamma^c]\|_{2,1}-\sum_{i=1}^L\omega_i\|x[\widetilde{T}_i\cap\Gamma^c]\|_{2,1}\\
&&\ +\|x[\widetilde{T}^c\cap\Gamma]\|_{2,1}-\|h[\widetilde{T}^c\cap\Gamma]\|_{2,1}
+\|h[\widetilde{T}^c\cap\Gamma^c]\|_{2,1}-\|x[\widetilde{T}^c\cap\Gamma^c]\|_{2,1}\\
&&\leq\sum_{i=1}^L\omega_i\|x[\widetilde{T}_i\cap\Gamma]\|_{2,1}+\sum_{i=1}^L\omega_i\|x[\widetilde{T}_i\cap\Gamma^c]\|_{2,1}
+\|x[\widetilde{T}^c\cap\Gamma]\|_{2,1}+\|x[\widetilde{T}^c\cap\Gamma^c]\|_{2,1},
\end{eqnarray*}
i.e.,
\begin{eqnarray*}
\sum_{i=1}^L\omega_i\|h[\widetilde{T}_i\cap\Gamma^c]\|_{2,1}
+\|h[\widetilde{T}^c\cap\Gamma^c]\|_{2,1}&\leq&\sum_{i=1}^L\omega_i\|h[\widetilde{T}_i\cap\Gamma]\|_{2,1}
+\|h[\widetilde{T}^c\cap\Gamma]\|_{2,1}\\
&\ & +2\bigg(\|x[\widetilde{T}^c\cap\Gamma^c]\|_{2,1}+\sum_{i=1}^L\omega_i\|x[\widetilde{T}_i\cap\Gamma^c]\|_{2,1}\bigg).
\end{eqnarray*}

Adding and subtracting $\sum\limits_{i=1}^L\omega_i\|h[\widetilde{T}_i^c\cap\Gamma^c]\|_{2,1}$ on the left hand side, and $\sum\limits_{i=1}^L\omega_i\|h[\widetilde{T}_i^c\cap\Gamma]\|_{2,1}$, $2\sum\limits_{i=1}^L\omega_i\|x[\widetilde{T}_i^c\cap\Gamma^c]\|_{2,1}$ on the right hand side respectively, we obtain
\begin{eqnarray}\label{cgl7}
&&\sum_{i=1}^L\omega_i\|h[\Gamma^c]\|_{2,1}
+\|h[\widetilde{T}^c\cap\Gamma^c]\|_{2,1}-\sum_{i=1}^L\omega_i\|h[\tilde{T}_i^c\cap\Gamma^c]\|_{2,1}\nonumber\\
&&\leq\sum_{i=1}^L\omega_i\|h[\Gamma]\|_{2,1}+
\|h[\widetilde{T}^c\cap\Gamma]\|_{2,1}-\sum_{i=1}^L\omega_i\|h[\widetilde{T}_i^c\cap\Gamma]\|_{2,1}\nonumber\\
&\ &
+2\left(\sum_{i=1}^L\omega_i\|x[\Gamma^c]\|_{2,1}
+x[\widetilde{T}^c\cap\Gamma^c]\|_{2,1}-\sum_{i=1}^L\omega_i\|x[\widetilde{T}_i^c\cap\Gamma^c]\|_{2,1}\right).
\end{eqnarray}
Note that $\|h[\widetilde{T}^c\cap\Gamma^c]\|_{2,1}$ and $\sum\limits_{i=1}^L\omega_i\|h[\tilde{T}_i^c\cap\Gamma^c]\|_{2,1}$ are written as
\begin{eqnarray*}
&&\|h[\widetilde{T}^c\cap\Gamma^c]\|_{2,1}
=(1-\sum\limits_{i=1}^L\omega_i)\|h[\widetilde{T}^c\cap\Gamma^c]\|_{2,1}+\sum\limits_{i=1}^L\omega_i\|h[\widetilde{T}^c\cap\Gamma^c]\|_{2,1}\\
&&=(1-\sum\limits_{i=1}^L\omega_i)\|h[\widetilde{T}^c\cap\Gamma^c]\|_{2,1}+\sum\limits_{i=1}^L\omega_i\bigg(\|h[\Gamma^c]\|_{2,1}
-\|h[\widetilde{T}\cap\Gamma^c]\|_{2,1}\bigg)\\
&&=(1-\sum\limits_{i=1}^L\omega_i)\|h[\widetilde{T}^c\cap\Gamma^c]\|_{2,1}
+\sum\limits_{i=1}^L\omega_i\bigg(\|h[\Gamma^c]\|_{2,1}-\sum_{j=1}^L\|h[\widetilde{T}_j\cap\Gamma^c]\|_{2,1}\bigg)
\end{eqnarray*}
and
\begin{eqnarray*}
\sum_{i=1}^L\omega_i\|h[\widetilde{T}_i^c\cap\Gamma^c]\|_{2,1}=\sum_{i=1}^L\omega_i\bigg(\|h[\Gamma^c]\|_{2,1}-\|h[\widetilde{T}_i\cap\Gamma^c]\|_{2,1}\bigg).
\end{eqnarray*}
Then it is clear that
\begin{align*}
&\|h[\widetilde{T}^c\cap\Gamma^c]\|_{2,1}-\sum_{i=1}^L\omega_i\|h[\widetilde{T}_i^c\cap\Gamma^c]\|_{2,1}\\
&=(1-\sum\limits_{i=1}^L\omega_i)\|h[\widetilde{T}^c\cap\Gamma^c]\|_{2,1}-\sum_{i=1}^L(\sum_{j=1}^L\omega_j-\omega_i)\|h[\widetilde{T}_i\cap\Gamma^c]\|_{2,1}.
\end{align*}
Similarly, there are
\begin{align*}
&\|h[\widetilde{T}^c\cap\Gamma]\|_{2,1}-\sum_{i=1}^L\omega_i\|h[\widetilde{T}_i^c\cap\Gamma]\|_{2,1}\\
&=(1-\sum\limits_{i=1}^L\omega_i)\|h[\widetilde{T}^c\cap\Gamma]\|_{2,1}
-\sum_{i=1}^L(\sum_{j=1}^L\omega_j-\omega_i)\|h[\widetilde{T}_i\cap\Gamma]\|_{2,1}
\end{align*}
and
\begin{align*}
&\|x[\widetilde{T}^c\cap\Gamma^c]\|_{2,1}-\sum_{i=1}^L\omega_i\|x[\widetilde{T}_i^c\cap\Gamma^c]\|_{2,1}\\
&=(1-\sum\limits_{i=1}^L\omega_i)\|x[\widetilde{T}^c\cap\Gamma^c]\|_{2,1}
-\sum_{i=1}^L(\sum_{j=1}^L\omega_j-\omega_i)\|x[\widetilde{T}_i\cap\Gamma^c]\|_{2,1}.
\end{align*}
 Combining \eqref{cgl7} with the above equalities, one easily deduces  that
\begin{eqnarray*}
&&\sum_{i=1}^L\omega_i\|h[\Gamma^c]\|_{2,1}
\leq\sum_{i=1}^L\omega_i\|h[\Gamma]\|_{2,1}+(1-\sum_{i=1}^L\omega_i)\bigg(\|h[\widetilde{T}^c\cap\Gamma]\|_{2,1}
-\|h[\widetilde{T}^c\cap\Gamma^c]\|_{2,1}\bigg)\\
&&-\sum_{i=1}^L(\sum_{j=1}^L\omega_j-\omega_i)\bigg(\|h[\widetilde{T}_i\cap\Gamma]\|_{2,1}
-\|h[\widetilde{T}_i\cap\Gamma^c]\|_{2,1}\bigg)+2\bigg(\sum_{i=1}^L\omega_i\|x[\Gamma^c]\|_{2,1}\bigg.\\
&&\bigg.+(1-\sum_{i=1}^L\omega_i)\|x[\widetilde{T}^c\cap\Gamma^c]\|_{2,1}-
\sum_{i=1}^L(\sum_{j=1}^L\omega_j-\omega_i)\|x[\widetilde{T}_i\cap\Gamma^c]\|_{2,1}\bigg).
\end{eqnarray*}

In the remainder of the proof, denote
\begin{eqnarray*}
Z=\sum_{i=1}^L\omega_i\|x[\Gamma^c]\|_{2,1}+(1-\sum_{i=1}^L\omega_i)\|x[\widetilde{T}^c\cap\Gamma^c]\|_{2,1}-
\sum_{i=1}^L(\sum_{j=1}^L\omega_j-\omega_i)\|x[\widetilde{T}_i\cap\Gamma^c]\|_{2,1}.
\end{eqnarray*} Then the above inequality can be expressed as
\begin{eqnarray*}
\sum_{i=1}^L\omega_i\|h[\Gamma^c]\|_{2,1}
&\leq&\sum_{i=1}^L\omega_i\|h[\Gamma]\|_{2,1}+(1-\sum_{i=1}^L\omega_i)\bigg(\|h[\widetilde{T}^c\cap\Gamma]\|_{2,1}
-\|h[\widetilde{T}^c\cap\Gamma^c]\|_{2,1}\bigg)\nonumber\\
&\ &-\sum_{i=1}^L(\sum_{j=1}^L\omega_j-\omega_i)\bigg(\|h[\widetilde{T}_i\cap\Gamma]\|_{2,1}
-\|h[\widetilde{T}_i\cap\Gamma^c]\|_{2,1}\bigg)
+2Z.
\end{eqnarray*}
Since  $\|h[\Gamma^c]\|_{2,1}=\sum\limits_{i=1}^L\omega_i\|h[\Gamma^c]\|_{2,1}+(1-\sum\limits_{i=1}^L\omega_i)\bigg(\|h[\widetilde{T}^c\cup\Gamma^c]\|_{2,1}
+\|h[\widetilde{T}\cup \Gamma^c]\|_{2,1}\bigg)$ and $(\widetilde{T}^c\cap\Gamma)\cup(\widetilde{T}\cap\Gamma^c)=(\widetilde{T}\cup\Gamma)\backslash(\widetilde{T}\cap\Gamma)
=\Gamma\cup\cup_{i=1}^L\widetilde{T}_i\backslash\cup_{i=1}^L(\widetilde{T}_i\cap\Gamma)$, we deduce that
\begin{eqnarray}\label{cgl8}
\|h[\Gamma^c]\|_{2,1}
&\leq&\sum_{i=1}^L\omega_i\|h[\Gamma]\|_{2,1}+(1-\sum_{i=1}^L\omega_i)\bigg(\|h[\widetilde{T}^c\cap\Gamma]\|_{2,1}+\|h[\widetilde{T}\cup \Gamma^c]\|_{2,1}\bigg)
\nonumber\\
&\ &
-\sum_{i=1}^L(\sum_{j=1}^L\omega_j-\omega_i)\bigg(\|h[\widetilde{T}_i\cap\Gamma]\|_{2,1}-\|h[\widetilde{T}_i\cap\Gamma^c]\|_{2,1}\bigg)
+2Z\nonumber\\
&=&\omega_L\|h[\Gamma]\|_{2,1}+(1-\omega_1)\|h[\Gamma\cup\cup_{i=1}^L\widetilde{T}_i\backslash\cup_{i=1}^L(\widetilde{T}_i\cap\Gamma)]\|_{2,1}
+\sum_{i=1}^{L-1}\omega_i\|h[\Gamma]\|_{2,1}\nonumber\\
&\ &+(\omega_1-\sum_{i=1}^L\omega_i)\|h[\widetilde{T}^c\cap\Gamma]\|_{2,1}-\sum_{i=1}^L(\sum_{j=1}^L\omega_j-\omega_i)\|h[\widetilde{T}_i\cap\Gamma]\|_{2,1}
\nonumber\\
&\ &+(\omega_1-\sum_{i=1}^L\omega_i)\|h[\widetilde{T}\cup \Gamma^c]\|_{2,1}+\sum_{i=1}^L(\sum_{j=1}^L\omega_j-\omega_i)\|h[\widetilde{T}_i\cap\Gamma^c]\|_{2,1}+2Z\nonumber\\
&=&\omega_L\|h[\Gamma]\|_{2,1}+(1-\omega_1)\|h[\Gamma\cup\cup_{i=1}^L\widetilde{T}_i\backslash\cup_{i=1}^L(\widetilde{T}_i\cap\Gamma)]\|_{2,1}
+\sum_{i=1}^{L-1}\omega_i\|h[\Gamma]\|_{2,1}\nonumber\\
&\ &+(\omega_1-\sum_{i=1}^L\omega_i)\bigg(\|h[\Gamma]\|_{2,1}-\sum_{i=1}^L\|h[\widetilde{T}_i\cap\Gamma]\|_{2,1}\bigg)-\sum_{i=1}^L(\sum_{j=1}^L\omega_j-\omega_i)\|h[\widetilde{T}_i\cap\Gamma]\|_{2,1}
\nonumber\\
&\ &+\sum_{i=1}^L(\omega_1-\sum_{j=1}^L\omega_j)\|h[\widetilde{T}_i\cup \Gamma^c]\|_{2,1}+\sum_{i=1}^L(\sum_{j=1}^L\omega_j-\omega_i)\|h[\widetilde{T}_i\cap\Gamma^c]\|_{2,1}+2Z\nonumber\\
&=&\omega_L\|h[\Gamma]\|_{2,1}+(1-\omega_1)\|h[\Gamma\cup\cup_{i=1}^L\widetilde{T}_i\backslash\cup_{i=1}^L(\widetilde{T}_i\cap\Gamma)]\|_{2,1}
\nonumber\\
&\ &+(\omega_1-\omega_L)\|h[\Gamma]\|_{2,1}+\sum_{i=2}^L(\omega_1-\omega_i)\bigg(\|h[\widetilde{T}_i^c\cap\Gamma]\|_{2,1}-\|h[\Gamma]\|_{2,1}\bigg)\nonumber\\
&\ &+\sum_{i=2}^L(\omega_1-\omega_i)\|h[\widetilde{T}_i\cup \Gamma^c]\|_{2,1}+2Z\nonumber\\
&=&\omega_L\|h[\Gamma]\|_{2,1}+(1-\omega_1)\|h[\Gamma\cup\cup_{i=1}^L\widetilde{T}_i\backslash\cup_{i=1}^L(\widetilde{T}_i\cap\Gamma)]\|_{2,1}
\nonumber\\
&\ &+\sum_{i=2}^{L-1}(\omega_i-\omega_1)\|h[\Gamma]\|_{2,1}
+\sum_{i=2}^L(\omega_1-\omega_i)\bigg(\|h[\widetilde{T}_i^c\cap\Gamma]\|_{2,1}\bigg.\nonumber\\
&\ &\bigg.+\|h[\widetilde{T}_i\cup \Gamma^c]\|_{2,1}\bigg)+2Z.
\end{eqnarray}

Using the facts that
\begin{eqnarray*}
\|h[\widetilde{T}_j^c\cap\Gamma]\|_{2,1}&=&\|h[\Gamma]\|_{2,1}-\|h[\widetilde{T}_{j}\cap \Gamma]\|_{2,1}\\
&=&\sum_{i=1,i\neq j}^{L}\|h[\widetilde{T}_j\cap \Gamma]\|_{2,1}+\|h[\Gamma\cap\cap_{i=1}^L\widetilde{T}_{i}^c]\|_{2,1}
\end{eqnarray*}
and
\begin{eqnarray*}
\sum_{j=i}^L\|h[\widetilde{T}_{j}\cap \Gamma^c]\|_{2,1}+\|h[\Gamma \cap \cap_{j=i}^L\widetilde{T}_j^c]\|_{2,1} =\|h[\Gamma \cup \cup_{j=i}^L\widetilde{T}_j\backslash \cup_{j=i}^L(\widetilde{T}_j\cap \Gamma)]\|_{2,1},
\end{eqnarray*}
we obtain
\begin{eqnarray*}
&&\sum\limits_{i=2}^L(\omega_1-\omega_i)\bigg(\|h[\widetilde{T}_i^c\cap\Gamma]\|_{2,1}+\|h[\widetilde{T}_i\cup \Gamma^c]\|_{2,1}\bigg)\\
&&=\sum\limits_{i=2}^L(\omega_{i-1}-\omega_{i})\sum\limits_{j=i}^L\bigg(\|h[\widetilde{T}_j^c\cap\Gamma]\|_{2,1}+\|h[\widetilde{T}_j\cup \Gamma^c]\|_{2,1}\bigg)\\
&&=\sum\limits_{i=2}^L(\omega_{i-1}-\omega_{i})\bigg(\sum_{j=i}^L(\|h[\Gamma]\|_{2,1}-\|h[\widetilde{T}_{j}\cap \Gamma]\|_{2,1})-\|h[\Gamma \cap \cap_{j=i}^L\widetilde{T}_j^c]\|_{2,1}\bigg.\\
&\ \ &\bigg.+\|h[\Gamma \cap \cap_{j=i}^L\widetilde{T}_j^c]\|_{2,1}+\sum_{j=i}^L\|h[\widetilde{T}_{j}\cap \Gamma^c]\|_{2,1}\bigg)\\
&&=\sum\limits_{i=2}^L(\omega_{i-1}-\omega_{i})(L-i)\|h[\Gamma]\|_{2,1}+\sum\limits_{i=2}^L(\omega_{i-1}-\omega_{i})\|h[\Gamma \cup \cup_{j=i}^L\widetilde{T}_j\backslash \cup_{j=i}^L(\widetilde{T}_j\cap \Gamma)]\|_{2,1}\\
&&=-\sum_{i=2}^{L-1}(\omega_i-\omega_1)\|h[\Gamma]\|_{2,1}+\sum\limits_{i=2}^L(\omega_{i-1}-\omega_{i})\|h[\Gamma \cup \cup_{j=i}^L\widetilde{T}_j\backslash \cup_{j=i}^L(\widetilde{T}_j\cap \Gamma)]\|_{2,1}.
\end{eqnarray*}
Substituting the above equality into \eqref{cgl8}, we get the result \eqref{cgl9}.
\end{proof}

\section{ Main results }\label{3}
In this section, we present the main results. First, we consider the signal
  recovery  model \eqref{mod1} in the setting where the error vector $z\neq 0$ and the block signal $x$ is not exactly
 block $k$-sparse and establish the sufficient condition based on the high order block RIP.
The result implies  that the condition  guarantees
 the exact recovery in the noiseless setting and stable recovery in noisy setting when the block signal $x$ is  block $k$-sparse.
\begin{thm}\label{t3}
Consider the  signal recovery model \eqref{mod1} with $\|z\|_{2}\leq\varepsilon$, where $x\in\mathbb{R}^{N}$ over $\mathcal{I}$ is an arbitrary block signal.
Suppose that $x_{k}$ over $\mathcal{I}$ is the  best block $k$-sparse approximation and $\hat{x}$ is the minimizer of \eqref{pro4}.
Let $\widetilde{T}_i\subseteq [M]$ with $i=1,2,\ldots, L$ be disjoint block index  sets and denote $\widetilde{T}=\cup_{i=1}^L\widetilde{T}_i$
 where $L$ is a positive integer, such that $|\widetilde{T}_i|=\rho_i k$ and
  $|\widetilde{T}_i\cap T|=\alpha_i \rho_i k$ where $T$ is the support of $x_k$,\, $\rho_i \geq 0$ and $0\leq \alpha_i \leq 1$.
 Without loss of generality, assume that the weights in \eqref{pro7} are
ordered so that $0\leq\omega_L\leq\omega_{L-1}\leq\cdots\leq\omega_1\leq 1$.
If  $A$ satisfies the block RIP with
\begin{align}\label{g23}
  \delta_{tk}^\mathcal{I}<\sqrt{\frac{t-d}{t-d+\Upsilon_L^{2}}}\triangleq \delta^\mathcal{I}({t},{\Upsilon_L})
\end{align}
 for $t>d$, where $$\Upsilon_L=\omega_L+(1-\omega_1)\sqrt{1+\sum\limits_{i=1}^L\rho_i-2\sum\limits_{i=1}^L\alpha_i\rho_i}
 +\sum_{i=2}^L(\omega_{i-1}-\omega_i)\sqrt{1+\sum\limits_{j=i}^L\rho_j-2\sum\limits_{j=i}^L\alpha_j\rho_j}$$
 and
 \begin{align*}
   d=\left\{
 \begin{array}{cc}
   1, & \prod\limits_{i=1}^L\omega_i=1 \\
  \max\limits_{i\in\{1,2,\ldots,L\}}\{b_i(1-\sum\limits_{j=i}^L\alpha_j\rho_j+a_i)\}, & 0\leq\prod\limits_{i=1}^L\omega_i<1
 \end{array}
 \right.
 \end{align*}
 with $a_i=\max{\{\sum\limits_{j=i}^L\alpha_j\rho_j, \sum\limits_{j=i}^L(1-\alpha_j)\rho_j\}}$
 and
 \begin{align*}
 b_i=\left\{
       \begin{array}{ll}
         1, & \hbox{$i=1$} \\
        sgn (\omega_{i-1}-\omega_i), & \hbox{$i=2,\ldots,L.$}
       \end{array}
     \right.
 \end{align*}
Then
\begin{align}\label{g9}
\|\hat{x}-x\|_{2}&\leq 2D_{0}\varepsilon+\frac{2 D_{1}}{\sqrt{k}}\Big( \sum\limits_{i=1}^L\omega_i\|x-x_k\|_{2,1}
+(1-\sum\limits_{i=1}^L\omega_i)\|x[\widetilde{T}^c\cap T^c]\|_{2,1}\Big.\nonumber\\
&\ \ \Big.-\sum\limits_{i=1}^L(\sum\limits_{j=1}^L\omega_j-\omega_i)\|x[\widetilde{T}_i\cap T^c]\|_{2,1}\Big),
\end{align}
where
\begin{align}\label{g24}
D_{0}=& \frac{\sqrt{2(t-d)(t-d+\Upsilon_L^{2})(1+\delta_{tk}^\mathcal{I})}}{(t-d+\Upsilon_L^{2})(\sqrt{\frac{t-d}{t-d+\Upsilon_L^{2}}}-\delta_{tk}^\mathcal{I})},\notag\\
D_{1}
=&\frac{\sqrt{2}\delta_{tk}^\mathcal{I}\Upsilon_L+\sqrt{(t-d+\Upsilon_L^{2})(\sqrt{\frac{t-d}{t-d+\Upsilon_L^{2}}}
-\delta_{tk}^\mathcal{I})\delta_{tk}^\mathcal{I}}}
{(t-d+\Upsilon_L^{2})(\sqrt{\frac{t-d}{t-d+\Upsilon_L^{2}}}-\delta_{tk}^\mathcal{I})}
+\frac{1}{\sqrt{d}}.
\end{align}
\end{thm}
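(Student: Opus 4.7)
Let $h=\hat{x}-x$. Since both $x$ and $\hat{x}$ are feasible for \eqref{pro4}, the triangle inequality gives $\|Ah\|_2\leq 2\varepsilon$. The plan is to adapt the high-order Cai--Zhang shifting argument employed in Chen--Li \cite{CL2} to prove Theorem~\ref{t1}, using the weighted block cone constraint of Lemma~\ref{lem2} in place of the unweighted one. The residual weight dependence will give rise to $\Upsilon_L$ in the sufficient condition and to $d$ in the required block RIP order.

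First, apply Lemma~\ref{lem2} with $\Gamma=T=\mathrm{b}\text{-}\mathrm{supp}(x_{k})$ to control $\|h[T^c]\|_{2,1}$ by a weighted sum of $\ell_{2}/\ell_{1}$ norms of $h$ on the symmetric differences $T\cup\cup_{j=i}^L\widetilde{T}_j\setminus\cup_{j=i}^L(\widetilde{T}_j\cap T)$ for $i=1,\ldots,L$, plus the compressibility residual $2Z$ equal (up to a $\sqrt{k}$ factor) to the bracketed term on the right of \eqref{g9}. Bound each $\|h[\,\cdot\,]\|_{2,1}$ by $\sqrt{|\,\cdot\,|}\,\|h\|_{2}$ via Cauchy--Schwarz. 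Since $|T|=k$ and the above symmetric differences have cardinality $k(1+\sum_{j=i}^L\rho_j-2\sum_{j=i}^L\alpha_j\rho_j)$, summing the coefficients $\omega_L$, $(1-\omega_1)$, and $(\omega_{i-1}-\omega_i)$ against these square roots produces exactly the factor $\Upsilon_L\sqrt{k}$, yielding
$$\|h[T^c]\|_{2,1}\leq \Upsilon_L\sqrt{k}\,\|h\|_{2}+2Z.$$

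Next, pick a ``core'' block index set $S\supseteq T$ of cardinality $dk$, with $d$ as in the theorem (it encodes the worst case over the $L$ weighted levels; when all $\omega_i=1$ it collapses to $d=1$, recovering the setup of Theorem~\ref{t1}). Sort the blocks of $h[S^c]$ in decreasing $\ell_{2}$ norm and partition them into consecutive groups of $(t-d)k$ blocks. Apply Lemma~\ref{l2} to transfer the $\ell_{2}/\ell_{1}$ tail information into $\ell_{2}/\ell_{\infty}$ control on the leading tail, and then invoke Lemma~\ref{l1} to write that tail as a convex combination $\sum_{i}\lambda_i u_i$ of block-$(t-d)k$-sparse vectors with supports disjoint from $S$. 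Since each $h[S]\pm\alpha u_i$ is block-$tk$-sparse, the block RIP of order $tk$ applies, and the polarization-type identity
$$\sum_i\lambda_i\bigl(\|A(h[S]+\alpha u_i)\|_{2}^{2}+\|A(\alpha u_i-h[S])\|_{2}^{2}\bigr)=2\alpha^{2}\sum_i\lambda_i\|Au_i\|_{2}^{2}+2\|Ah[S]\|_{2}^{2}$$
combined with $\|Ah\|_2\leq 2\varepsilon$ and the previous display produces a quadratic inequality in $\|h[S]\|_{2}$ whose solution, under the hypothesis $\delta_{tk}^{\mathcal I}<\delta^{\mathcal I}(t,\Upsilon_L)$, rearranges into \eqref{g9} with constants exactly \eqref{g24}.

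The main obstacle is the bookkeeping in the selection of $S$ and the choice of $d$. With $L$ distinct weights, sizes $\rho_i$, and accuracies $\alpha_i$, the optimal cardinality of $S$ depends piecewise on $\mathrm{sgn}(\omega_{i-1}-\omega_i)$ and on whether $\sum_{j\geq i}\alpha_j\rho_j$ or $\sum_{j\geq i}(1-\alpha_j)\rho_j$ dominates---precisely the case distinctions that define $b_i$ and $a_i$. Preserving the tight factor $\Upsilon_L$ in the final constants (rather than a looser upper bound) requires a telescoping rearrangement of the weighted terms via the identity $\omega_1-\sum_{i=1}^{L}\omega_i=-\sum_{i=2}^{L}(\omega_1-\omega_i)$, the same maneuver that concluded the proof of Lemma~\ref{lem2}.
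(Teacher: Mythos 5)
Your overall strategy --- the weighted cone constraint of Lemma \ref{lem2} applied with $\Gamma=T$, a convex decomposition of the tail via Lemma \ref{l1}, and a block-RIP square identity yielding a quadratic inequality --- is indeed the route the paper takes. But the identity you propose to drive the argument is the wrong one, and this is a genuine gap rather than bookkeeping. What you wrote is the parallelogram law, $\sum_i\lambda_i\bigl(\|A(h[S]+\alpha u_i)\|_2^2+\|A(\alpha u_i-h[S])\|_2^2\bigr)=2\alpha^2\sum_i\lambda_i\|Au_i\|_2^2+2\|Ah[S]\|_2^2$. It contains no occurrence of the full error vector $h$, so the feasibility bound $\|Ah\|_2\le 2\varepsilon$ has no way to enter it, and no inequality involving $\varepsilon$ (hence no $D_0\varepsilon$ term) can come out; applying the block RIP to both sides of an exact identity in the same sparse vectors only yields a vacuous statement. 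The paper instead uses the Cai--Zhang convexity identity $\sum_i\lambda_i\|A(\sum_j\lambda_j\beta_j-\frac{1}{2}\beta_i)\|_2^2=\sum_i\frac{\lambda_i}{4}\|A\beta_i\|_2^2$ with $\beta_i=h[\max(dk)]+h^{(1)}+\mu u_i$, precisely because then $\sum_j\lambda_j\beta_j-\frac{1}{2}\beta_i=(\frac{1}{2}-\mu)(h[\max(dk)]+h^{(1)})-\frac{1}{2}\mu u_i+\mu h$ carries a $\mu h$ term; the cross term $\mu(1-\mu)\langle A(h[\max(dk)]+h^{(1)}),Ah\rangle$ is where $2\varepsilon$ enters, and the free parameter is tuned to $\mu=\frac{\sqrt{(t-d)(t-d+\Upsilon_L^2)}-(t-d)}{\Upsilon_L^2}$ to reach the sharp threshold $\sqrt{(t-d)/(t-d+\Upsilon_L^2)}$. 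A plain parallelogram or polarization argument will not produce the constants in \eqref{g24}.

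A second, related slip: you reduce the cone constraint to $\|h[T^c]\|_{2,1}\le\Upsilon_L\sqrt{k}\,\|h\|_2+2Z$. The bound actually needed is in terms of $\|h[\max(dk)]\|_2$ (and ultimately $X=\|h[\max(dk)]+h^{(1)}\|_2$), not $\|h\|_2$: every set appearing with a positive coefficient in Lemma \ref{lem2} has at most $dk$ blocks, so the restriction of $h$ to it is dominated in $\ell_2$ by the largest $dk$ blocks --- this is exactly why $d$ is defined as the maximum of those cardinalities (the $b_i$, $a_i$ case analysis). Bounding by $\|h\|_2$ makes the eventual quadratic inequality circular. You also conflate the paper's threshold split of $h[-\max(dk)]$ into $h^{(1)}$ and $h^{(2)}$ with a shifting-type partition into consecutive groups, misattribute a role to Lemma \ref{l2} in the tail control, and omit the closing step where Lemma \ref{l2} (with $\theta=2$, $l=dk$, $\lambda=P\sqrt{k}\,\Upsilon_L$) gives $\|h[-\max(dk)]\|_2\le\|h[\max(dk)]\|_2+P\Upsilon_L/\sqrt{d}$ and hence $\|h\|_2\le\sqrt{2}X+P\Upsilon_L/\sqrt{d}$ --- the source of the $1/\sqrt{d}$ in $D_1$ --- as well as the degenerate cases $r=0$, $\Upsilon_L=0$, and non-integer $tk$.
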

\begin{proof}

We  will prove  the associated recovery guarantees \eqref{g9} of the weighted $\ell_2/\ell_1$ minimization. To this end,
assume that $tk$ is an integer and $\hat{x}=x+h$,
where $x$ is the original block signal over $\mathcal{I}$ and $\hat{x}$ over $\mathcal{I}$ is a solution of the
  weighted $\ell_2/\ell_1$ minimization  problem \eqref{pro4}. From Lemma \ref{lem2}
  and the block support $T\subseteq[M]$, it follows that
  \begin{align}\label{g11}
&\|h[T^{c}]\|_{2,1}\leq \omega_L\|h[T]\|_{2,1}+(1-\omega_1)\|h[T\cup\cup_{i=1}^L\widetilde{T}_i\backslash\cup_{i=1}^L(\widetilde{T}_i\cap T)]\|_{2,1}\nonumber\\
 &\ \ \ \ +\sum\limits_{i=2}^L(\omega_{i-1}-\omega_{i})\|h[T \cup \cup_{j=i}^L\widetilde{T}_j\backslash \cup_{j=i}^L(\widetilde{T}_j\cap T)]\|_{2,1}
+2\bigg( \sum_{i=1}^L\omega_i\|x[T^c]\|_{2,1}\bigg.\nonumber\\
&\ \ \ \  \bigg.+(1-\sum_{i=1}^L\omega_i)\|x[\widetilde{T}^c\cap T^c]\|_{2,1}-
\sum_{i=1}^L(\sum_{j=1}^L\omega_j-\omega_i)\|x[\widetilde{T}_i\cap T^c]\|_{2,1}\bigg).
\end{align}
Based on  \begin{align*}
   d=\left\{
 \begin{array}{cc}
   1, & \prod\limits_{i=1}^L\omega_i=1 \\
  \max\limits_{i\in\{1,2,\ldots,L\}}\{b_i(1-\sum\limits_{j=i}^L\alpha_j\rho_j+a_i)\}, & 0\leq\prod\limits_{i=1}^L\omega_i<1
 \end{array}
 \right.
 \end{align*}
 with  $a_i=\max{\{\sum\limits_{j=i}^L\alpha_j\rho_j, \sum\limits_{j=i}^L(1-\alpha_j)\rho_j\}}$
 and
 \begin{align*}
 b_i=\left\{
       \begin{array}{ll}
         1, & \hbox{$i=1$} \\
        sgn (\omega_{i-1}-\omega_i), & \hbox{$i=2,\ldots,L$},
       \end{array}
     \right.
 \end{align*}
it is clear that $d$ is an  integer and   $d\geq 1$. Recall  $h[\max(dk)]$ as the block $dk$-sparse vector $h$ over $\mathcal{I}$ with all but the largest $dk$ blocks in $\ell_2$ norm set to zero.
From \eqref{g11} and $d\geq 1$, we have
\begin{align}\label{g12}
&\|h[-\max(dk)]\|_{2,1}\leq \|h[T^c]\|_{2,1}\leq \omega_L\|h[T]\|_{2,1}+(1-\omega_1)\|h[T\cup\cup_{i=1}^L\widetilde{T}_i\backslash\cup_{i=1}^L(\widetilde{T}_i\cap T)]\|_{2,1}\nonumber\\
 &\ \ \ \ +\sum\limits_{i=2}^L(\omega_{i-1}-\omega_{i})\|h[T \cup \cup_{j=i}^L\widetilde{T}_j\backslash \cup_{j=i}^L(\widetilde{T}_j\cap T)]\|_{2,1}
+2\bigg( \sum_{i=1}^L\omega_i\|x[T^c]\|_{2,1}\bigg.\nonumber\\
&\ \ \ \  \bigg.+(1-\sum_{i=1}^L\omega_i)\|x[\widetilde{T}^c\cap T^c]\|_{2,1}-
\sum_{i=1}^L(\sum_{j=1}^L\omega_j-\omega_i)\|x[\widetilde{T}_i\cap T^c]\|_{2,1}\bigg).
\end{align}
Let
\begin{align*}
r&=\frac{1}{k}\left[ \omega_L\|h[T]\|_{2,1}+(1-\omega_1)\|h[T\cup\cup_{i=1}^L\widetilde{T}_i\backslash\cup_{i=1}^L(\widetilde{T}_i\cap T)]\|_{2,1}\right.\nonumber\\
 &\ \ \ \ +\sum\limits_{i=2}^L(\omega_{i-1}-\omega_{i})\|h[T \cup \cup_{j=i}^L\widetilde{T}_j\backslash \cup_{j=i}^L(\widetilde{T}_j\cap T)]\|_{2,1}
+2\bigg( \sum_{i=1}^L\omega_i\|x[T^c]\|_{2,1}\bigg.\nonumber\\
&\ \ \ \  \bigg.\bigg.+(1-\sum_{i=1}^L\omega_i)\|x[\widetilde{T}^c\cap T^c]\|_{2,1}-
\sum_{i=1}^L(\sum_{j=1}^L\omega_j-\omega_i)\|x[\widetilde{T}_i\cap T^c]\|_{2,1}\bigg)\bigg],
\end{align*}
then $r\geq 0$. If $r=0$, then $\|h[T^c]\|_{2,1}=0$ and $h$ is a block $k$-sparse vector. From the definition of the block RIP and $t>d\geq 1$, it follows that $(1-\delta_{tk}^\mathcal{I})\|h\|_2^2\leq\|Ah\|_2^2=\|A\hat{x}-Ax\|_{2}^2
\leq(\|y-A\hat{x}\|_{2}+\|Ax-y\|_{2})^2\leq 4\varepsilon^2$, that is, $\|h\|_2\leq \frac{2\varepsilon}{\sqrt{1-\delta_{tk}^\mathcal{I}}}$. By means of a series of calculation, we obtain
$D_0\geq\frac{1}{\sqrt{1-\delta_{tk}^\mathcal{I}}}$. Therefore, we get the associated recovery guarantees \eqref{g9} of the weighted $\ell_2/\ell_1$ minimization under $r=0$.

From now on, we only consider $r>0$. Decompose $h[-\max(dk)]$ over $\mathcal{I}$ into two parts,
 $h[-\max(dk)]=h^{(1)}+h^{(2)}$, where for all $i\in [M]$, the $i$th block of the block vectors $h^{(1)}$ and $h^{(2)}$   satisfies respectively
 \begin{eqnarray*}
 h^{(1)}[i]=\left\{
             \begin{array}{ll}
               h[-\max(dk)][i], & \hbox{$\|h[-\max(dk)][i]\|_2>\frac{r}{t-d}$} \\
               \mathbf{0}\in\R^{d_i}, & \hbox{else}
             \end{array}
           \right.
 \end{eqnarray*}
and
\begin{eqnarray*}
 h^{(2)}[i]=\left\{
             \begin{array}{ll}
               h[-\max(dk)][i], & \hbox{$\|h[-\max(dk)][i]\|_2\leq\frac{r}{t-d}$} \\
               \mathbf{0}\in\R^{d_i}, & \hbox{else.}
             \end{array}
           \right.
 \end{eqnarray*}

In view of the  definition of the block vector $h^{(1)}$ and  \eqref{g12}, we obtain
 $$\|h^{(1)}\|_{2,1}\leq \|h[-\max(dk)]\|_{2,1}\leq kr.$$
Let $\|h^{(1)}\|_{2,0}=m.$
Because the $\ell_2$ norm of every non-zero blocks of  $h^{(1)}$ is larger than $\frac{r}{t-d}$ ($t>d,\ r>0$), we have
\begin{align*}
kr\geq \|h^{(1)}\|_{2,1}
=\sum\limits_{i\in \mathrm{b}\text{-}\mathrm{supp}(h^{(1)})}\|h^{(1)}[i]\|_{2}
\geq \sum\limits_{i\in \mathrm{b}\text{-}\mathrm{supp}(h^{(1)})}\frac{r}{t-d}
=\frac{mr}{t-d}.
\end{align*}
Namely $m\leq k(t-d)$. In addition, we have
\begin{align}\label{cgl11}
\|h[\max(dk)]+h^{(1)}\|_{2,0}=dk+m\leq dk+k(t-d)=tk,
\end{align}
\begin{align*}
\|h^{(2)}\|_{2,1}=\|h[-\max(dk)]\|_{2,1}-\|h^{(1)}\|_{2,1}
\leq kr-\frac{mr}{t-d}
=(k(t-d)-m)\cdot\frac{r}{t-d}
\end{align*}
and
\begin{eqnarray*}
\|h^{(2)}\|_{2,\infty}\leq \frac{r}{t-d},
\end{eqnarray*}
where the last inequality follows from all non-zero blocks of $h^{(2)}$ having $\ell_2$ norm smaller  than $\frac{r}{t-d}$.

Now, using Lemma \ref{l1} with $s=k(t-d)-m$ and $\beta=\frac{r}{t-d}$, then $h^{(2)}$ can be expressed as a convex combination of block-sparse vectors, i.e.,
$h^{(2)}=\sum\limits_{i=1}^{J}\lambda_{i}u_{i},$
 where $\sum\limits_{i=1}^{J}\lambda_{i}=1$ and
  $u_i\in U(\frac{r}{t-d},k(t-d)-m, h^{(2)})$. In the remainder of the proof, one considers the following
  two case.

  Case 1: $\Upsilon_L\neq 0$

 In the case, denote $X=\|h[\max(dk)]+h^{(1)}\|_{2}$ and
\begin{align*}
P=\frac{2\bigg(\sum\limits_{i=1}^L\omega_i\|x[T^c]\|_{2,1}+(1-\sum\limits_{i=1}^L\omega_i)\|x[\widetilde{T}^c\cap T^c]\|_{2,1}-
\sum\limits_{i=1}^L(\sum\limits_{j=1}^L\omega_j-\omega_i)\|x[\widetilde{T}_i\cap T^c]\|_{2,1}\bigg)}{\sqrt{k}\Upsilon_L}.
\end{align*}
Thus, we have the upper bound
\begin{align}\label{cgl10}
\|u_{i}\|_{2}&=\|u_{i}\|_{2,2}\leq \sqrt{\|u_{i}\|_{2,0}}\|u_{i}\|_{2,\infty}\leq \sqrt{k(t-d)-m}\|u_{i}\|_{2,\infty}\leq \sqrt{k(t-d)}\cdot\frac{r}{t-d} \nonumber\\
 &\leq \sqrt{\frac{k}{t-d}}r
 =\sqrt{\frac{k}{t-d}}\cdot\frac{1}{k}\Big[\omega_L\|h[T]\|_{2,1}
 +(1-\omega_1)\|h[T\cup\cup_{i=1}^L\widetilde{T}_i\backslash\cup_{i=1}^L(\widetilde{T}_i\cap T)]\|_{2,1}\Big.\nonumber\\
 &\ \ \ \ +\sum\limits_{i=2}^L(\omega_{i-1}-\omega_{i})\|h[T \cup \cup_{j=i}^L\widetilde{T}_j\backslash \cup_{j=i}^L(\widetilde{T}_j\cap T)]\|_{2,1}
+2\Big( \sum_{i=1}^L\omega_i\|x[T^c]\|_{2,1}\Big.\nonumber\\
&\ \ \ \  \Big.\Big.+(1-\sum_{i=1}^L\omega_i)\|x[\widetilde{T}^c\cap T^c]\|_{2,1}-
\sum_{i=1}^L(\sum_{j=1}^L\omega_j-\omega_i)\|x[\widetilde{T}_i\cap T^c]\|_{2,1}\Big)\Big] \nonumber\\
=&\frac{1}{\sqrt{k(t-d)}}\Big[\omega_L\|h[T]\|_{2,1}+(1-\omega_1)\|h[T\cup\cup_{i=1}^L\widetilde{T}_i\backslash\cup_{i=1}^L(\widetilde{T}_i\cap T)]\|_{2,1}
\Big.\nonumber\\
 &\ \ \ \ \Big.+\sum\limits_{i=2}^L(\omega_{i-1}-\omega_{i})\|h[T \cup \cup_{j=i}^L\widetilde{T}_j\backslash \cup_{j=i}^L(\widetilde{T}_j\cap T)]\|_{2,1}\Big]
+\frac{\Upsilon_L P}
{\sqrt{t-d}}\nonumber\\
\leq&\frac{1}{\sqrt{t-d}}\Big[\omega_L\|h[T]\|_{2}+(1-\omega_1)\sqrt{1+\sum_{i=1}^L\rho_i-2\sum_{i=1}^L\alpha_i\rho_i}
\|h[T\cup\cup_{i=1}^L\widetilde{T}_i\backslash\cup_{i=1}^L(\widetilde{T}_i\cap T)]\|_{2}
\Big.\nonumber\\
 &\ \ \ \ \Big.+\sum\limits_{i=2}^L(\omega_{i-1}-\omega_{i})\sqrt{1+\sum_{j=i}^L\rho_j-2\sum_{j=i}^L\alpha_j\rho_j}\|h[T \cup \cup_{j=i}^L\widetilde{T}_j\backslash \cup_{j=i}^L(\widetilde{T}_j\cap T)]\|_{2}\Big]\nonumber\\
 &\ \ \ \
+\frac{\Upsilon_L P}
{\sqrt{t-d}}\nonumber\\
\leq&\frac{\|h[\max(dk)]\|_{2}}{\sqrt{t-d}}\Big[\omega_L+(1-\omega_1)\sqrt{1+\sum_{i=1}^L\rho_i-2\sum_{i=1}^L\alpha_i\rho_i}\Big.\nonumber\\
 &\ \ \ \
\Big.+\sum\limits_{i=2}^L(\omega_{i-1}-\omega_{i})\sqrt{1+\sum_{j=i}^L\rho_i-2\sum_{j=i}^L\alpha_i\rho_i}\Big]
+\frac{\Upsilon_L P}
{\sqrt{t-d}}\nonumber\\
\leq&\frac{\Upsilon_L \|h[\max(dk)]+h^{(1)}\|_{2}}{\sqrt{t-d}}+\frac{\Upsilon_L P}
{\sqrt{t-d}}\nonumber \\
=&\frac{\Upsilon_L}{\sqrt{t-d}}(X+P),
\end{align}
 where we use the facts $|T\cup\cup_{i=1}^L\widetilde{T}_i\backslash\cup_{i=1}^L(\widetilde{T}_i\cap T)|=(1+\sum\limits_{i=1}^L\rho_i-2\sum\limits_{i=1}^L\alpha_i\rho_i)k\leq dk$
  and  for $i=2,\ldots,L$, $|T \cup \cup_{j=i}^L\widetilde{T}_j\backslash \cup_{j=i}^L(\widetilde{T}_j\cap T)|=(1+\sum\limits_{j=i}^L\rho_i-2\sum\limits_{j=i}^L\alpha_i\rho_i)k\leq dk$  when $sgn(\omega_{i-1}-\omega_i)=1$.

Let $\beta_{i}=h[\max(dk)]+h^{(1)}+\mu u_{i}$ where $0\leq \mu
\leq 1$, then we get
\begin{align}\label{g16}
  \sum\limits_{j=1}^{J}\lambda_{j}\beta_{j}-\frac{1}{2}\beta_{i}
  =&h[\max(dk)]+h^{(1)}+\mu h^{(2)}-\frac{1}{2}\beta_{i} \notag\\
  =&(\frac{1}{2}-\mu)(h[\max(dk)]+h^{(1)})-\frac{1}{2}\mu u_{i}+\mu h,
\end{align}
where $\sum\limits_{i=1}^J\lambda_i=1$ and $h^{(1)}+h^{(2)}=h-h[\max(dk)]$.
Since $h[\max(dk)]$, $h^{(1)}$, $u_{i}$ are block $dk$-, $m$-, $((t-d)k-m)$-sparse vectors respectively, $\beta_{i}$ and $\sum\limits_{j=1}^{N}\lambda_{j}\beta_{j}-\frac{1}{2}\beta_{i}-\mu h=(\frac{1}{2}-\mu)(h[\max(dk)]+h^{(1)})-\frac{1}{2}\mu u_{i}$ are block $tk$-sparse vectors.

Next, we compute an upper bound of $X=\|h[\max(dk)]+h^{(1)}\|_{2}$.
We shall use the facts that
\begin{align}\label{g14}
  \|Ah\|_{2}=\|A\hat{x}-Ax\|_{2}
\leq\|y-A\hat{x}\|_{2}+\|Ax-y\|_{2}\leq 2\varepsilon
\end{align}
and the following identity (see (25) in \cite{CZ})
\begin{align}\label{g18}
 \sum\limits_{i=1}^{J}\lambda_{i}\Big\|A\Big(\sum\limits_{j=1}^{J}\lambda_{j}\beta_{j}-\frac{1}{2}\beta_{i}\Big)\Big\|_{2}^{2}
 =\sum\limits_{i=1}^{J}\frac{\lambda_{i}}{4}\|A\beta_{i}\|_{2}^{2}.
\end{align}
Besides,
\begin{align}\label{g15}
 \langle A(h[\max(dk)]+h^{(1)}), Ah\rangle
  &\leq \|A(h[\max(dk)]+h^{(1)})\|_{2}\|Ah\|_{2}\nonumber\\
  &\leq\sqrt{1+\delta_{tk}^{\mathcal{I}}}\|h[\max(dk)]+h^{(1)}\|_{2}\cdot(2\varepsilon),
\end{align}
where the last inequality uses  the definition of block RIP with $\delta_{tk}^\mathcal{I}$, \eqref{cgl11} and \eqref{g14}.
Combining \eqref{g15} and \eqref{g16}, we estimate the left hand side of \eqref{g18}
\begin{align*}
&\sum\limits_{i=1}^{J}\lambda_{i}\Big\|A\Big(\sum\limits_{j=1}^{J}\lambda_{j}\beta_{j}-\frac{1}{2}\beta_{i}\Big)\Big\|_{2}^{2}\\
&=\sum\limits_{i=1}^{J}\lambda_{i}\Big\|A\Big((\frac{1}{2}-\mu)(h[\max(dk)]+h^{(1)})-\frac{1}{2}\mu u_{i}+\mu h\Big)\Big\|_{2}^{2} \\
&=\sum\limits_{i=1}^{J}\lambda_{i}\Big \|A
 \Big((\frac{1}{2}-\mu)(h[\max(dk)]+h^{(1)})-\frac{1}{2}\mu u_{i}\Big)\Big\|_{2}^{2}+\mu \|Ah\|_2^2\\
 &\ \ +2\sum\limits_{i=1}^{J}\lambda_{i}\langle A(\frac{1}{2}-\mu)(h[\max(dk)]+h^{(1)})-\frac{1}{2}\mu u_{i},\mu h\rangle\\
&=\sum\limits_{i=1}^{J}\lambda_{i}\Big \|A
 \Big((\frac{1}{2}-\mu)(h[\max(dk)]+h^{(1)})-\frac{1}{2}\mu u_{i}\Big)\Big\|_{2}^{2}
+\mu(1-\mu)\langle A(h[\max(dk)]+h^{(1)}), Ah\rangle  \\
&\leq(1+\delta_{tk}^{\mathcal{I}})\sum\limits_{i=1}^{J}\lambda_{i}\Big\|(\frac{1}{2}-\mu)(h[\max(dk)]+h^{(1)})-\frac{1}{2}\mu u_{i}\Big\|_{2}^{2}  \\
&\ \ +2\varepsilon\mu(1-\mu)\sqrt{1+\delta_{tk}^{\mathcal{I}}}\|h[\max(dk)]+h^{(1)}\|_{2} \\
&=(1+\delta_{tk}^{\mathcal{I}})\Big[(\frac{1}{2}-\mu)^{2}\|h[\max(dk)]+h^{(1)}\|_{2}^{2}
   +\frac{\mu^{2}}{4}
   \sum\limits_{i=1}^{J}\lambda_{i}\|u_{i}\|_{2}^{2}\Big]\\
&\ \ +2\varepsilon\mu(1-\mu)\sqrt{1+\delta_{tk}^{\mathcal{I}}}\|h[\max(dk)]+h^{(1)}\|_{2}\\
&=(1+\delta_{tk}^{\mathcal{I}})(\frac{1}{2}-\mu)^{2}X^{2}
   +\frac{\mu^{2}(1+\delta_{tk}^{\mathcal{I}})}{4}
   \sum\limits_{i=1}^{J}\lambda_{i}\|u_{i}\|_{2}^{2}+2\varepsilon\mu(1-\mu)\sqrt{1+\delta_{tk}^{\mathcal{I}}}X,
\end{align*}
where the last but one equality applies $\sum\limits_{i=1}^J\lambda_i=1$ and
\begin{align*}
\langle\lambda_iu_i,h[\max(dk)]+h^{(1)}\rangle=0.
\end{align*}

For the right hand side of \eqref{g18}, from the expression of $\beta_{i}$ and  the definition of the block RIP with $\delta_{tk}^{\mathcal{I}}$ we have
\begin{align*}
\sum\limits_{i=1}^{J}\frac{\lambda_{i}}{4}\|A\beta_{i}\|_{2}^{2}
&=\sum\limits_{i=1}^{J}\frac{\lambda_{i}}{4}\|A(h[\max(dk)]+h^{(1)}+\mu u_{i})\|_{2}^{2} \\
&\geq\sum\limits_{i=1}^{J}\frac{\lambda_{i}}{4}(1-\delta_{tk}^{\mathcal{I}})\|h[\max(dk)]+h^{(1)}+\mu u_{i}\|_{2}^{2}\\
&= (1-\delta_{tk}^{\mathcal{I}})\sum\limits_{i=1}^{J}\frac{\lambda_{i}}{4}
\Big(\|h[\max(dk)]+h^{(1)}\|_{2}^{2}+\mu^{2}\|u_{i}\|_{2}^{2}\Big)\\
&=\frac{1-\delta_{tk}^{\mathcal{I}}}{4}X^2+\frac{\mu^2(1-\delta_{tk}^{\mathcal{I}})}{4}\sum_{i=1}^{J}\lambda_i \|u_{i}\|_{2}^{2}.
\end{align*}
In consideration of the above two inequalities and \eqref{g18} we
have
\begin{align}\label{gg1}
 0=&\sum\limits_{i=1}^{J}\lambda_{i}\Big\|A\Big(\sum\limits_{j=1}^{J}\lambda_{j}\beta_{j}-\frac{1}{2}\beta_{i}\Big)\Big\|_{2}^{2}
-\sum\limits_{i=1}^{J}\frac{\lambda_{i}}{4}\|A\beta_{i}\|_{2}^{2} \nonumber \\
\leq&\bigg((1+\delta_{tk}^{\mathcal{I}})(\frac{1}{2}-\mu)^{2}-\frac{1}{4}(1-\delta_{tk}^{\mathcal{I}})\bigg)
X^{2}+\frac{1}{2}\delta_{tk}^{\mathcal{I}}\mu^{2}\sum\limits_{i=1}^{J}\lambda_{i}\|u_{i}\|_{2}^{2}\nonumber\\
&+2\varepsilon\mu(1-\mu)\sqrt{1+\delta_{tk}^{\mathcal{I}}}X\nonumber\\
\leq&\bigg[(1+\delta_{tk}^{\mathcal{I}})(\frac{1}{2}-\mu)^{2}-\frac{1}{4}(1-\delta_{tk}^{\mathcal{I}})
+\frac{\delta_{tk}^{\mathcal{I}}\mu^{2}\Upsilon_L^{2}}{2(t-d)}\bigg]
X^{2} \nonumber \\
&+\bigg[\mu(1-\mu)\sqrt{1+\delta_{tk}^{\mathcal{I}}}\cdot(2\varepsilon)+\frac{\delta_{tk}^{\mathcal{I}}\mu^{2}\Upsilon_L^{2}P}{t-d}\bigg]X
+\frac{\delta_{tk}^{\mathcal{I}}\mu^{2}\Upsilon_L^{2}P^{2}}{2(t-d)}\nonumber\\
=&\Big[(\mu^{2}-\mu)+\Big(\frac{1}{2}-\mu+(1+\frac{\Upsilon_L^{2}}{2(t-d)})\mu^{2}\Big)\delta_{tk}^{\mathcal{I}}\Big]X^{2} \nonumber\\
  &+\Big[2\varepsilon\mu(1-\mu)\sqrt{1+\delta_{tk}^{\mathcal{I}}}+\frac{\delta_{tk}^{\mathcal{I}}\mu^{2}\Upsilon_L^{2}P}{t-d}\Big]X
+\frac{\delta_{tk}^{\mathcal{I}}\mu^{2}\Upsilon_L^{2}P^{2}}{2(t-d)},
\end{align}
where we apply the estimate of $\|u_i\|_2$ in \eqref{cgl10}.
Substituting $\mu=\frac{\sqrt{(t-d)(t-d+\Upsilon_L^{2})}-(t-d)}{\Upsilon_L^{2}}\in (0,1)$ into \eqref{gg1} yields
\begin{align*}
&-\frac{t-d+\Upsilon_L^{2}}{t-d}\mu^{2}\bigg(\sqrt{\frac{t-d}{t-d+\Upsilon_L^{2}}}-\delta_{tk}^{\mathcal{I}}\bigg)X^{2}
+\bigg(2\varepsilon\mu^{2}\frac{t-d+\Upsilon_L^{2}}{t-d}\sqrt{\frac{(1+\delta_{tk}^{\mathcal{I}})(t-d)}{t-d+\Upsilon_L^{2}}}\\
&\ +\frac{\delta_{tk}^{\mathcal{I}}\mu^{2}\Upsilon_L^{2}P}{t-d}\bigg)X
+\frac{\delta_{tk}^{\mathcal{I}}\mu^{2}\Upsilon_L^{2}P^{2}}{2(t-d)}\geq0,
\end{align*}
i.e.,
\begin{eqnarray*}
&&\frac{\mu^{2}}{t-d}\Big[-(t-d+\Upsilon_L^{2})\Big(\sqrt{\frac{t-d}{t-d+\Upsilon_L^{2}}}-\delta_{tk}^{\mathcal{I}}\Big)X^{2}\\
&&+\Big(2\varepsilon\sqrt{(t-d)(t-d+\Upsilon_L^{2})(1+\delta_{tk}^{\mathcal{I}})}
+\delta_{tk}^{\mathcal{I}}\Upsilon_L^{2}P\Big)X
+\frac{\delta_{tk}^{\mathcal{I}}\Upsilon_L^{2}P^{2}}{2}\Big]\geq 0,
\end{eqnarray*}
which is a second-order inequality for $X$. Hence, under the conditions $\delta_{tk}^\mathcal{I}<\sqrt{\frac{t-d}{t-d+\Upsilon_L^2}}$ and $t>d$ we have
\begin{align*}
 &X\leq\bigg\{\Big( 2\varepsilon\sqrt{(t-d)(t-d+\Upsilon_L^{2})(1+\delta_{tk}^{\mathcal{I}})}+\delta_{tk}^{\mathcal{I}}\Upsilon_L^{2}P \Big)\\
 &+\Big[\Big( 2\varepsilon\sqrt{(t-d)(t-d+\Upsilon_L^{2})(1+\delta_{tk}^{\mathcal{I}})}+\delta_{tk}^{\mathcal{I}}\Upsilon_L^{2}P \Big)^{2}\\
 &+2(t-d+\Upsilon_L^{2})\Big( \sqrt{\frac{t-d}{t-d+\Upsilon_L^{2}}}-\delta_{tk}^{\mathcal{I}} \Big)\delta_{tk}^{\mathcal{I}}\Upsilon_L^{2}P^{2} \Big]^{1/2}\bigg\} \\
&\cdot \Big(2(t-d+\Upsilon_L^{2})(\sqrt{\frac{t-d}{t-d+\Upsilon_L^{2}}}-\delta_{tk}^{\mathcal{I}})\Big)^{-1}\\
&\leq\frac{2\varepsilon\sqrt{(t-d)(t-d+\Upsilon_L^{2})(1+\delta_{tk}^{\mathcal{I}})}}{(t-d+\Upsilon_L^{2})(\sqrt{\frac{t-d}{t-d+\Upsilon_L^{2}}}-\delta_{tk}^{\mathcal{I}})}
 \\
&+\frac{2\delta_{tk}^{\mathcal{I}}\Upsilon_L^{2}+\sqrt{2(t-d+\Upsilon_L^{2})(\sqrt{\frac{t-d}{t-d+\Upsilon_L^{2}}}-\delta_{tk}^{\mathcal{I}})
\delta_{tk}^{\mathcal{I}}\Upsilon_L^{2}}}
{2(t-d+\Upsilon_L^{2})(\sqrt{\frac{t-d}{t-d+\Upsilon_L^{2}}}-\delta_{tk}^{\mathcal{I}})}P,
\end{align*}
which is an upper bound of $X=\|h[\max(dk)]+h^{(1)}\|_{2}$.

Last, it remains to develop an upper bound on $\|h\|_2$.
To this end, we express $\|h\|_2^2=\|h[\max(dk)]\|_2^2+\|h[-\max(dk)]\|_2^2$.

 Considering the inequality \eqref{g12} and the definition of $P$, we have
\begin{align*}
\|h[-\max(dk)]\|_{2,1}&\leq \big(\omega_L+(1-\omega_1)+
\sum\limits_{i=2}^L(\omega_{i-1}-\omega_{i})\big)\|h[\max(dk)]\|_{2,1}
+P\sqrt{k}\Upsilon_L\\
&=\|h[\max(dk)]\|_{2,1}+P\sqrt{k}\Upsilon_L,
\end{align*}
where  we use that $|T|\leq dk \ (d\geq1)$, $|T\cup\cup_{i=1}^L\widetilde{T}_i\backslash\cup_{i=1}^L(\widetilde{T}_i\cap T)|=(1+\sum\limits_{i=1}^L\rho_i-2\sum\limits_{i=1}^L\alpha_i\rho_i)k\leq dk$ and for all $i\in\{2,\ldots,L\}$ as $sgn(\omega_{i-1}-\omega_i)=1$
\begin{align*}
|T\cup\cup_{j=i}^L\widetilde{T}_j\backslash\cup_{j=i}^L(\widetilde{T}_j\cap T)|=(1+\sum\limits_{j=i}^L\rho_j-2\sum\limits_{j=i}^L\alpha_j\rho_j)k\leq dk.
\end{align*}
Thanks to  Lemma \ref{l2} with $\theta=2$ , $l=dk$, and $\lambda=P\sqrt{k}\Upsilon_L$, we have
$$\|h[-\max(dk)]\|_{2} =\|h[-\max(dk)]\|_{2,2}\leq\|h[\max(dk)]\|_{2,2}+\frac{P\Upsilon_L}{\sqrt{d}}=\|h[\max(dk)]\|_{2}+\frac{P\Upsilon_L}{\sqrt{d}}.$$
Therefore,  we conclude that
\begin{align*}
\|h\|_{2}
\leq&\sqrt{\|h[\max(dk)]\|_{2}^{2}+\left(\|h[\max(dk)]\|_{2}+\frac{P\Upsilon_L}{\sqrt{d}}\right)^{2}}
\leq\sqrt{2}\|h[\max(dk)]\|_{2}+\frac{P\Upsilon_L}{\sqrt{d}}
\\
&\leq\sqrt{2}\|h[\max(dk)]+h^{(1)}\|_{2}+\frac{P\Upsilon_L}{\sqrt{d}}
=\sqrt{2}X+\frac{P\Upsilon_L}{\sqrt{d}}
\\
&\leq\frac{2\varepsilon\sqrt{2(t-d)(t-d+\Upsilon_L^{2})(1+\delta_{tk}^{\mathcal{I}})}}{(t-d+\Upsilon_L^{2})(\sqrt{\frac{t-d}{t-d+\Upsilon_L^{2}}}
-\delta_{tk}^{\mathcal{I}})} \\
&\ \ +\Bigg(\frac{\sqrt{2}\delta_{tk}\Upsilon_L^{2}+\sqrt{(t-d+\Upsilon_L^{2})(\sqrt{\frac{t-d}{t-d+\Upsilon_L^{2}}}
-\delta_{tk}^{\mathcal{I}})\delta_{tk}^{\mathcal{I}}\Upsilon_L^{2}}}
{(t-d+\Upsilon_L^{2})(\sqrt{\frac{t-d}{t-d+\Upsilon_L^{2}}}-\delta_{tk}^{\mathcal{I}})}
+\frac{\Upsilon_L}{\sqrt{d}}\Bigg)P \\
=&\frac{2\varepsilon\sqrt{2(t-d)(t-d+\Upsilon_L^{2})(1+\delta_{tk}^{\mathcal{I}})}}{(t-d+\Upsilon_L^{2})(\sqrt{\frac{t-d}{t-d+\Upsilon_L^{2}}}
-\delta_{tk}^{\mathcal{I}})} \\
&+\Bigg(\frac{\sqrt{2}\delta_{tk}^{\mathcal{I}}\Upsilon_L+\sqrt{(t-d+\Upsilon_L^{2})(\sqrt{\frac{t-d}{t-d+\Upsilon_L^{2}}}-\delta_{tk}^{\mathcal{I}})
\delta_{tk}^{\mathcal{I}}}}
{(t-d+\Upsilon_L^{2})(\sqrt{\frac{t-d}{t-d+\Upsilon_L^{2}}}-\delta_{tk}^{\mathcal{I}})}
+\frac{1}{\sqrt{d}}\Bigg)\\
&\ \frac{2\left( \sum\limits_{i=1}^L\omega_i\|x[T^c]\|_{2,1}
+(1-\sum\limits_{i=1}^L\omega_i)\|x[\widetilde{T}^c\cap T^c]\|_{2,1}-
\sum\limits_{i=1}^L(\sum\limits_{j=1}^L\omega_j-\omega_i)\|x[\widetilde{T}_i\cap T^c]\|_{2,1}\right)}{\sqrt{k}}.
\end{align*}

Case 2: $\Upsilon_L=0$

Similarly, let $X=\|h[\max(dk)]+h^{(1)}\|_{2}$ and
\begin{align*}
P'=\frac{2\left(\sum\limits_{i=1}^L\omega_i\|x[T^c]\|_{2,1}+(1-\sum\limits_{i=1}^L\omega_i)\|x[\widetilde{T}^c\cap T^c]\|_{2,1}-
\sum\limits_{i=1}^L(\sum\limits_{j=1}^L\omega_j-\omega_i)\|x[\widetilde{T}_i\cap T^c]\|_{2,1}\right)}{\sqrt{k}}.
\end{align*}
 In the same way, we  have that
$\|u_{i}\|_{2}\leq\frac{P'}{\sqrt{t-d}}$, $\|h[-\max(dk)]\|_{2,1}\leq\|h[\max(dk)]\|_{2,1}+P'\sqrt{k}$
and
\begin{align}\label{cgl13}
\|h[-\max(dk)]\|_{2} \leq\|h[\max(dk)]\|_{2}+\frac{P'}{\sqrt{d}}.
\end{align}
Then
 \begin{align*}
 0=&\sum\limits_{i=1}^{J}\lambda_{i}\Big\|A\Big(\sum\limits_{j=1}^{J}\lambda_{j}\beta_{j}-\frac{1}{2}\beta_{i}\Big)\Big\|_{2}^{2}
-\sum\limits_{i=1}^{J}\frac{\lambda_{i}}{4}\|A\beta_{i}\|_{2}^{2} \nonumber \\
\leq&\left((1+\delta_{tk}^{\mathcal{I}})(\frac{1}{2}-\mu)^{2}-\frac{1}{4}(1-\delta_{tk}^{\mathcal{I}})\right)
X^2+\frac{1}{2}\delta_{tk}^{\mathcal{I}}\mu^{2}\sum\limits_{i=1}^{J}\lambda_{i}\|u_{i}\|_{2}^{2}\nonumber\\
&+2\varepsilon\mu(1-\mu)\sqrt{1+\delta_{tk}^{\mathcal{I}}}X\nonumber\\
\leq&\left[(1+\delta_{tk}^{\mathcal{I}})(\frac{1}{2}-\mu)^{2}-\frac{1}{4}(1-\delta_{tk}^{\mathcal{I}})
\right]
X^{2}+2\varepsilon\mu(1-\mu)\sqrt{1+\delta_{tk}^{\mathcal{I}}}X
+\frac{\delta_{tk}^{\mathcal{I}}\mu^{2}(P')^{2}}{2(t-d)}.
\end{align*}
Taking $\mu=\frac{1}{2}$, we have
\begin{eqnarray*}
-(1-\delta_{tk}^{\mathcal{I}})X^{2}+2\varepsilon\sqrt{1+\delta_{tk}^{\mathcal{I}}} X
+\frac{\delta_{tk}^{\mathcal{I}}(P')^{2}}{2(t-d)}\geq 0.
\end{eqnarray*}
So, under the conditions $\delta_{tk}^\mathcal{I}<\sqrt{\frac{t-d}{t-d+\Upsilon_L^2}}$, i.e., $\delta_{tk}^\mathcal{I}<1$ and $t>d$ we obtain
\begin{align*}
 X\leq\frac{2\varepsilon\sqrt{1+\delta_{tk}^{\mathcal{I}}}}{1-\delta_{tk}^{\mathcal{I}}}
 +\sqrt{\frac{\delta_{tk}^{\mathcal{I}}}{2(t-d)
(1-\delta_{tk}^{\mathcal{I}})}}P'.
\end{align*}
From the above inequality and \eqref{cgl13}, it follows that
\begin{align*}
\|h\|_{2}
\leq&\sqrt{\|h[\max(dk)]\|_{2}^{2}+\left(\|h[\max(dk)]\|_{2}+\frac{P'}{\sqrt{d}}\right)^{2}}\\
\leq&\sqrt{2}X+\frac{P'}{\sqrt{d}}\\
\leq& \frac{2\varepsilon\sqrt{2(1+\delta_{tk}^{\mathcal{I}})}}{1-\delta_{tk}^{\mathcal{I}}}
 +\bigg(\sqrt{\frac{\delta_{tk}^{\mathcal{I}}}{2(t-d)
(1-\delta_{tk}^{\mathcal{I}})}}+\frac{1}{\sqrt{d}}\bigg)P'\\
=&\frac{2\varepsilon\sqrt{2(t-d)(t-d+\Upsilon_L^{2})(1+\delta_{tk}^{\mathcal{I}})}}{(t-d+\Upsilon_L^{2})(\sqrt{\frac{t-d}{t-d+\Upsilon_L^{2}}}
-\delta_{tk}^{\mathcal{I}})} \\
&+\Bigg(\frac{\sqrt{2}\delta_{tk}^{\mathcal{I}}\Upsilon_L+\sqrt{(t-d+\Upsilon_L^{2})(\sqrt{\frac{t-d}{t-d+\Upsilon_L^{2}}}-\delta_{tk}^{\mathcal{I}})
\delta_{tk}^{\mathcal{I}}}}
{(t-d+\Upsilon_L^{2})(\sqrt{\frac{t-d}{t-d+\Upsilon_L^{2}}}-\delta_{tk}^{\mathcal{I}})}
+\frac{1}{\sqrt{d}}\Bigg)\\
&\ \frac{2\left( \sum\limits_{i=1}^L\omega_i\|x[T^c]\|_{2,1}
+(1-\sum\limits_{i=1}^L\omega_i)\|x[\widetilde{T}^c\cap T^c]\|_{2,1}-
\sum\limits_{i=1}^L(\sum\limits_{j=1}^L\omega_j-\omega_i)\|x[\widetilde{T}_i\cap T^c]\|_{2,1}\right)}{\sqrt{k}}
\end{align*}
where in last equality $\Upsilon_L=0$.

When $tk$ is not an integer, take $t'=\lceil tk\rceil/k$, then $t'k$ is an integer, $t< t'$ and
$$\delta_{t'k}^{\mathcal{I}}=\delta_{tk}^{\mathcal{I}}<\sqrt{\frac{t-d}{t-d+\gamma^{2}}}<\sqrt{\frac{t'-d}{t'-d+\gamma^{2}}}$$
which implies that the case can be deduced to the former case ($tk$ is an integer).
To sum up, we complete the proof of Theorem \ref{t3}.
\end{proof}

\begin{rmk}\label{rmk1}
From Theorem \ref{t3}, it is clear that  the block signal $x$ can be   recovered exactly and stably
 from $y$ and $A$ in the noiseless and noisy cases as  $x$ is a block
$k$-sparse over $\mathcal{I}$.
\end{rmk}
Now, we present the sufficient condition and  associated constants in Theorem \ref{t3} for some special cases.
 As well as, we compare them with the sufficient condition and associated constants mentioned in previous works.
 The following results are easy to verify.
 \begin{pro}\label{pro6}
\begin{description}
 \item[{\rm (1)}]
 If $\omega_1=\omega_2=\cdots=\omega_L=\omega\in[0,1]$, then
\begin{align}\label{cgl12}
\Upsilon_L=\omega+(1-\omega)\sqrt{1+\sum\limits_{i=1}^L\rho_i-2\sum\limits_{i=1}^L\alpha_i\rho_i},\ \
 d=\left\{
             \begin{array}{ll}
               1, & \hbox{$\omega=1$} \\
               1-\sum\limits_{i=1}^L\alpha_i\rho_i+a_1, & \hbox{$0\leq\omega<1$}
             \end{array}
           \right.
\end{align}
and $\|\hat{x}-x\|_2\leq2 D_0\varepsilon
+\frac{2D_1}{\sqrt{k}}(\omega\|x[T^c]\|_{2,1}+(1-\omega)\|x[\widetilde{T}^c\cap
T^c]\|_{2,1})$, which can be regarded as an extension of  Theorem
3.1 \cite{CL} to block signals. In this case, denote $\Upsilon_L$ by
$\Upsilon_L^\omega$. For $d_1=d_2=\cdots=d_M=1$, the above result of
Theorem \ref{t3} is identical to that of Theorem 3.1 in \cite{CL}
with
 $\rho=\sum\limits_{i=1}^L\rho_i$ and $\alpha=\sum\limits_{i=1}^L\alpha_i\rho_i/\sum\limits_{i=1}^L\rho_i$.

\item[{\rm (2)}] If $\omega_i=1$ for all $i\in\{1,2,\ldots,L\}$,
then $\Upsilon_L=1$ and $d=1$. The result reduces to that of Theorem \ref{t1}. That is,
$D_0=C_0$, $D_1=C_1$ and
 the sufficient condition  for Theorem \ref{t3} given in \eqref{g23} is identical to \eqref{g1} in Theorem \ref{t1}.

\item[{\rm (3)}]If $\alpha_i=\frac{1}{2}$  for all $i\in\{1,2,\ldots,L\}$, then $\Upsilon_L=1$, $d=1$, $D_0=C_0$, $D_1=C_1$ and
 the sufficient condition  for Theorem \ref{t3} given in \eqref{g23} is identical to \eqref{g1} in Theorem \ref{t1}.

 \item[{\rm (4)}] Suppose that $0\leq\prod_{i=1}^L\omega_i<1$ and $\alpha_i>\frac{1}{2}$ for all $i=1,\ldots,L$, then $d=1$,
  $\Upsilon_L<1$, $D_0<C_0$, $D_1<C_1$ and the sufficient condition \eqref{g23} is weaker than
\eqref{g1} in Theorem \ref{t1}. For $d_1=\cdots=d_M=1$ then $D_0<C_0'$, $D_1<C_1'$ and
 the sufficient condition \eqref{g23} is weaker than the sufficient condition \eqref{g7} in Remark \ref{rmk3}.
\end{description}
 \end{pro}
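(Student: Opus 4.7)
The plan is to verify the four assertions by direct case-by-case substitution into the expressions for $\Upsilon_L$ and $d$ from Theorem \ref{t3}, after which $D_0$, $D_1$ and the threshold \eqref{g23} can be read off from \eqref{g24} and compared with $C_0$, $C_1$ and \eqref{g1}. Because $\Upsilon_L$ is a telescoping sum in the $\omega_i$, the weight-only specializations (1) and (2) reduce to pure algebra, while (3) and (4) amount to simplifying the radicands under the assumptions $\alpha_i=1/2$ or $\alpha_i>1/2$.

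For (1), I would use uniformity $\omega_1=\cdots=\omega_L=\omega$ to annihilate the second sum in $\Upsilon_L$ and to kill the $i\ge 2$ terms in the maximum defining $d$ via $b_i=\mathrm{sgn}(0)=0$, leaving only the $i=1$ contribution; substituting the resulting $\Upsilon_L$ and $d$ into \eqref{g9} yields the weighted block analogue of Theorem~3.1 of \cite{CL}. For (2), $\omega_i\equiv 1$ makes every coefficient $(1-\omega_1)$ and $(\omega_{i-1}-\omega_i)$ vanish, so $\Upsilon_L=\omega_L=1$; simultaneously $\prod_i\omega_i=1$ selects the $d=1$ branch, and \eqref{g23}, \eqref{g24} collapse to \eqref{g1}, \eqref{g21}. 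For (3), $\alpha_i=1/2$ forces each radicand $1+\sum\rho_j-2\sum\alpha_j\rho_j$ to equal $1$, so the coefficient sum $\omega_L+(1-\omega_1)+\sum_{i=2}^L(\omega_{i-1}-\omega_i)$ telescopes to $1$ and gives $\Upsilon_L=1$; the bracket $1-\sum\alpha_j\rho_j+a_i$ also equals $1$, so $d=\max_i b_i=1$ (using $b_1=1$).

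The bulk of the work lies in part (4). First I would show $d=1$: under $\alpha_i>1/2$, $a_i=\sum_{j\ge i}\alpha_j\rho_j$ dominates $\sum_{j\ge i}(1-\alpha_j)\rho_j$, so $1-\sum_{j\ge i}\alpha_j\rho_j+a_i=1$, and since $b_1=1$ while each $b_i\in\{0,1\}$ for $i\ge 2$ (the weights being non-increasing), the maximum equals $1$. Next I establish $\Upsilon_L<1$: each radicand satisfies $\sqrt{1-\sum(2\alpha_j-1)\rho_j}\le 1$ with strict inequality whenever the corresponding block has any mass, and I need at least one of the non-negative coefficients $1-\omega_1$ or $\omega_{i-1}-\omega_i$ to be strictly positive, which is exactly the content of $\prod_i\omega_i<1$ combined with the non-increasing chain $\omega_1\ge\cdots\ge\omega_L$. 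Once $\Upsilon_L<1$ and $d=1$ are in hand, $D_0<C_0$, $D_1<C_1$ and $\sqrt{(t-1)/(t-1+\Upsilon_L^2)}>\sqrt{(t-1)/t}$ follow from a one-variable monotonicity check of the relevant rational-radical expressions in $\Upsilon_L^2$; the scalar case $d_i\equiv 1$ then recovers the comparison with \eqref{g7} via Remark \ref{rmk3}.

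The main obstacle I anticipate is the strict-positivity argument in (4): if $\omega_1=1$, one must locate a strict drop $\omega_{i-1}>\omega_i$ further along the chain to keep $\Upsilon_L$ strictly below $1$, and care is needed in the boundary situation where some $\rho_j$ are zero so that the corresponding radicand equals $1$ rather than something smaller. The monotonicity comparisons of $D_0$ and $D_1$ in $\Upsilon_L^2$ are a genuine but short calculation, since numerator and denominator in \eqref{g24} both depend on $\Upsilon_L^2$.
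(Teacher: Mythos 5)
Your proposal is correct and follows exactly the route the paper intends: the paper offers no proof at all (it merely states that the results are ``easy to verify''), and your case-by-case substitution into the formulas for $\Upsilon_L$, $d$, $D_0$, $D_1$ — using the telescoping of the coefficients $\omega_L+(1-\omega_1)+\sum_{i=2}^L(\omega_{i-1}-\omega_i)=1$, the identification $a_i=\sum_{j\geq i}\alpha_j\rho_j$ when $\alpha_j\geq\tfrac12$, and the monotonicity of $D_0$, $D_1$ and $\sqrt{(t-1)/(t-1+\Upsilon_L^2)}$ in $\Upsilon_L^2$ — is precisely that verification. Your flagged caveat in part (4) is a genuine (if minor) gap in the statement itself rather than in your argument: if some $\rho_j=0$ (or if the only strict weight drop occurs at an index $i$ with $\rho_j=0$ for all $j\geq i$), then $\Upsilon_L=1$ and the strict inequalities degenerate to equalities, so the proposition implicitly assumes all $\rho_j>0$ and $\delta_{tk}^{\mathcal{I}}>0$.
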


Furthermore, we compare  the sufficient condition \eqref{g23} used the single weight with that used the combination of weights when all accuracies $\alpha_i$ are greater than $\frac{1}{2}$.

\begin{pro}\label{pro5}
Let $\omega_1\geq\omega_2\geq\cdots\geq \omega_L$, $\sum\limits_{i=1}^L \rho_i=\rho$ and $\alpha_1=\alpha_2=\cdots=\alpha_L=\alpha$. Then
$\delta^\mathcal{I}(t,\Upsilon_L^{\omega_1})\leq\delta^\mathcal{I}(t,\Upsilon_L)\leq\delta^\mathcal{I}(t,\Upsilon_L^{\omega_L})$ if and only if
$\alpha\geq\frac{1}{2}$.

\end{pro}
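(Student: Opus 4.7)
The plan is to reduce the comparison of $\delta^\mathcal{I}$-values to a comparison of the underlying $\Upsilon$-values. Since $\delta^\mathcal{I}(t,\Upsilon) = \sqrt{(t-d)/(t-d+\Upsilon^{2})}$ is strictly decreasing in $\Upsilon$ whenever $d$ is held fixed, the claimed inequalities on $\delta^\mathcal{I}$ will follow from reversed inequalities on the $\Upsilon$-side, provided I first verify that the same value of $d$ is used in all three expressions.

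For that first step, set $s_i := \sum_{j=i}^{L}\rho_j$, so $s_1 = \rho$ and $s_i$ is decreasing in $i$. Since $\alpha_j = \alpha$ for all $j$, the quantity $\sum_{j=i}^{L}\alpha_j\rho_j$ equals $\alpha s_i$, and $a_i = s_i\max\{\alpha, 1-\alpha\}$. Thus each term $b_i(1-\alpha s_i + a_i)$ occurring in the definition of $d$ equals either $1$ (when $\alpha\geq 1/2$) or $1+(1-2\alpha)s_i$ (when $\alpha<1/2$), whenever $b_i=1$. Because $s_1=\rho$ is maximal and $b_1=1$, the maximum reduces to $d=1$ for $\alpha\geq 1/2$ and to $d=1+(1-2\alpha)\rho$ for $\alpha<1/2$. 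By Proposition \ref{pro6}(1) this is exactly the value of $d$ attached to both $\Upsilon_L^{\omega_1}$ and $\Upsilon_L^{\omega_L}$, so $d$ is common.

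Next, introduce $f(s):=\sqrt{1+(1-2\alpha)s}$, so by Proposition \ref{pro6}(1), $\Upsilon_L^{\omega}=\omega+(1-\omega)f(s_1)$ for any single weight $\omega$, and by the definition in Theorem \ref{t3},
\begin{equation*}
\Upsilon_L=\omega_L+(1-\omega_1)f(s_1)+\sum_{i=2}^{L}(\omega_{i-1}-\omega_i)f(s_i).
\end{equation*}
Using the telescoping identity $\omega_1-\omega_L=\sum_{i=2}^{L}(\omega_{i-1}-\omega_i)$, a direct computation yields
\begin{equation*}
\Upsilon_L^{\omega_1}-\Upsilon_L=\sum_{i=2}^{L}(\omega_{i-1}-\omega_i)\bigl(1-f(s_i)\bigr),\qquad \Upsilon_L-\Upsilon_L^{\omega_L}=\sum_{i=2}^{L}(\omega_{i-1}-\omega_i)\bigl(f(s_i)-f(s_1)\bigr).
\end{equation*}

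Finally, the coefficients $\omega_{i-1}-\omega_i$ are nonnegative by hypothesis, so the signs are driven by the factors $1-f(s_i)$ and $f(s_i)-f(s_1)$. Note that $f$ is nonincreasing on $[0,\infty)$ and bounded above by $1$ exactly when $1-2\alpha\leq 0$, i.e.\ $\alpha\geq 1/2$; in that regime both bracketed differences are $\geq 0$ (using $s_i\leq s_1$), giving $\Upsilon_L^{\omega_1}\geq\Upsilon_L\geq\Upsilon_L^{\omega_L}$, hence the stated ordering of the $\delta^\mathcal{I}$. When $\alpha<1/2$, $f$ is strictly increasing and $f(s_i)>1$, so both differences flip sign and both inequalities reverse, yielding the converse and completing the "if and only if". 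The only step that takes real care is verifying the common value of $d$, because $d$ is built from a maximum of expressions that a priori depend on the whole weight vector; once that is pinned down, the remainder is a clean telescoping argument.
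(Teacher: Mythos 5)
Your argument is correct and reaches the same conclusion, but it is genuinely more self-contained than the paper's. The paper disposes of this proposition in two lines: it cites the proof of Proposition 1 in \cite{NSW} for the key equivalence $\Upsilon_L^{\omega_L}\leq \Upsilon_L\leq \Upsilon_L^{\omega_1} \Leftrightarrow \alpha\geq\tfrac{1}{2}$ and then invokes the monotonicity of $\Upsilon\mapsto\sqrt{(t-d)/(t-d+\Upsilon^2)}$. You instead prove the $\Upsilon$-comparison directly via the telescoping identities
\begin{equation*}
\Upsilon_L^{\omega_1}-\Upsilon_L=\sum_{i=2}^{L}(\omega_{i-1}-\omega_i)\bigl(1-f(s_i)\bigr),\qquad
\Upsilon_L-\Upsilon_L^{\omega_L}=\sum_{i=2}^{L}(\omega_{i-1}-\omega_i)\bigl(f(s_i)-f(s_1)\bigr),
\end{equation*}
which I have checked and which are exactly right; the sign analysis of $f(s)=\sqrt{1+(1-2\alpha)s}$ then settles both directions. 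What your route buys is independence from the external reference and, more importantly, an explicit treatment of the parameter $d$, which the paper silently ignores even though $\delta^\mathcal{I}(t,\Upsilon)$ depends on it. One small residual point: your verification that $d$ is common is airtight for $\alpha\geq\tfrac{1}{2}$ (where $d=1$ throughout), but for $\alpha<\tfrac{1}{2}$ the single-weight value of $d$ from Proposition \ref{pro6}(1) reverts to $1$ when the corresponding weight equals $1$, so if $\omega_1=1>\omega_L$ the three expressions do not all share the same $d$ and the reduction to comparing $\Upsilon$'s does not literally apply in that corner. This does not affect the ``if'' direction, which is the one used elsewhere in the paper, and the paper's own proof is no more careful here; but if you want the ``only if'' direction in full generality you should either exclude that degenerate case or handle it separately.
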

\begin{proof}
Needell  et.al have shown  $\Upsilon_L^{\omega_L}\leq \Upsilon_L\leq
\Upsilon_L^{\omega_1}$ if and only if $\alpha\geq\frac{1}{2}$ in the
proof of Proposition 1 in \cite{NSW}. From  the definition of
$\delta^\mathcal{I}(t,\Upsilon_L)$ in \eqref{g23}, it is clear that
$\delta^\mathcal{I}(t,\Upsilon_L^{\omega_1})\leq\delta^\mathcal{I}(t,\Upsilon_L)\leq\delta^\mathcal{I}(t,\Upsilon_L^{\omega_L})$
if and only if $\alpha\geq\frac{1}{2}$.
\end{proof}

\begin{figure*}[htbp!]
\centering
\begin{minipage}[htbp]{0.46\linewidth}
\centering
\includegraphics[width=9cm,height=7cm]{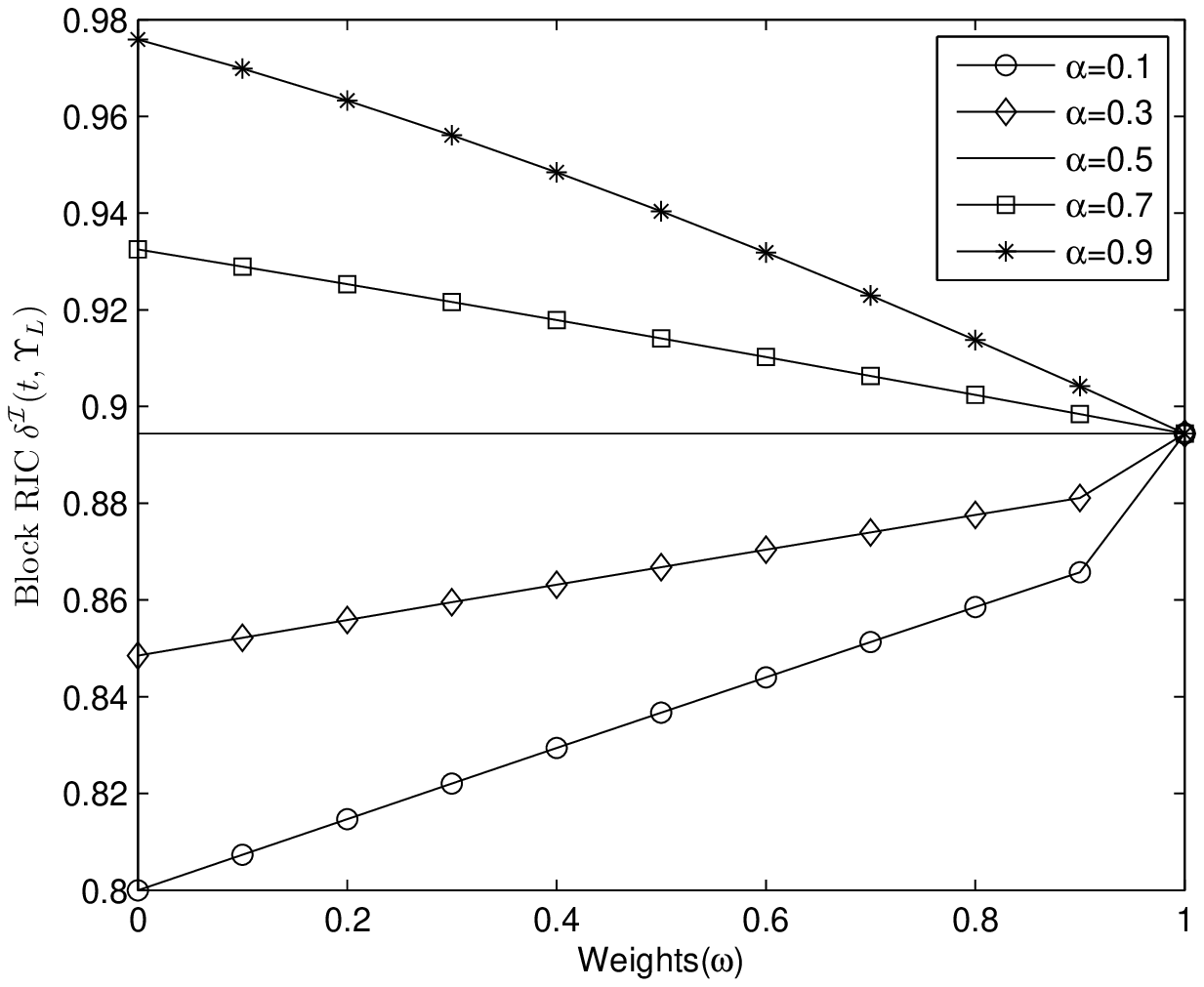}\\[-0.3cm]
{(a) $\delta^\mathcal{I}(t,\Upsilon_L^{\omega})$ versus $\omega$}
\end{minipage}
\\
\centering
\begin{minipage}[htbp]{0.46\linewidth}
\centering
\includegraphics[width=3.0in]{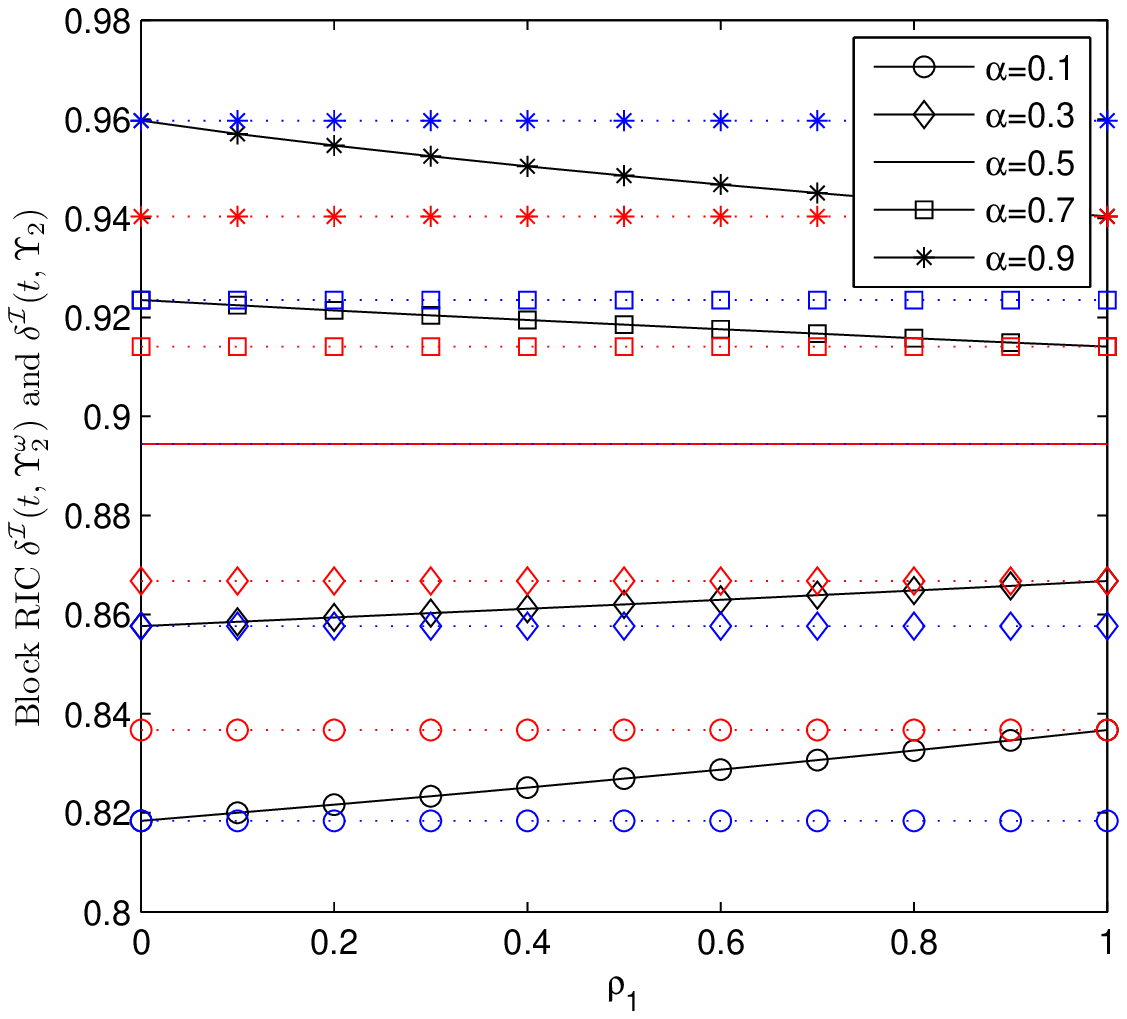}\\[-0.3cm]
{(b) $\delta^\mathcal{I}(t,\Upsilon_2^\omega)$ and $\delta^\mathcal{I}(t,\Upsilon_2)$ versus $\rho_1$}
\end{minipage}
\centering
\begin{minipage}[htbp]{0.46\linewidth}
\centering
\includegraphics[width=3.0in]{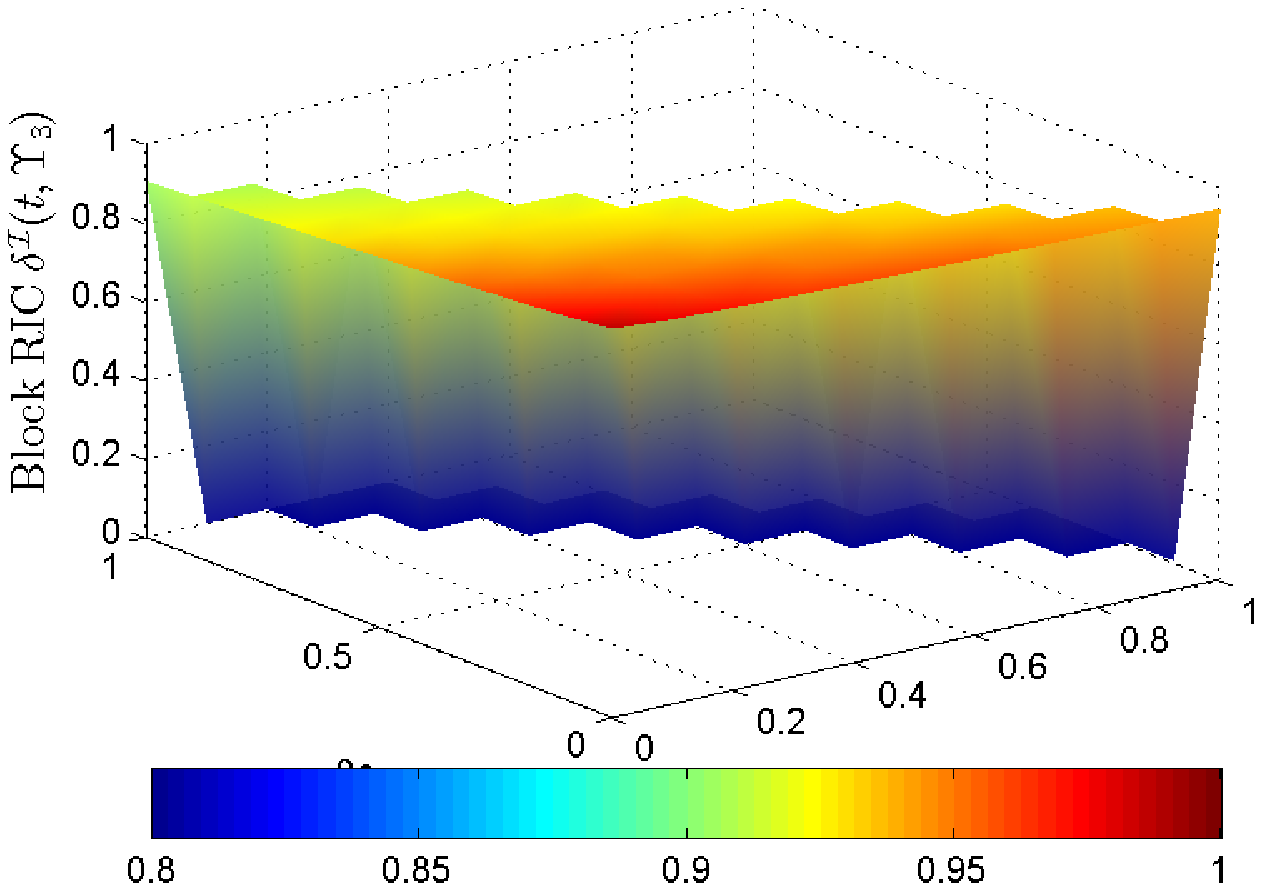}\\[-0.3cm]
{(c) $\delta^\mathcal{I}(t,\Upsilon_3)$ versus $\rho_1$ and $\rho_2$}
\end{minipage}
\\
\centering
\begin{minipage}[htbp]{0.46\linewidth}
\centering
\includegraphics[width=3.0in]{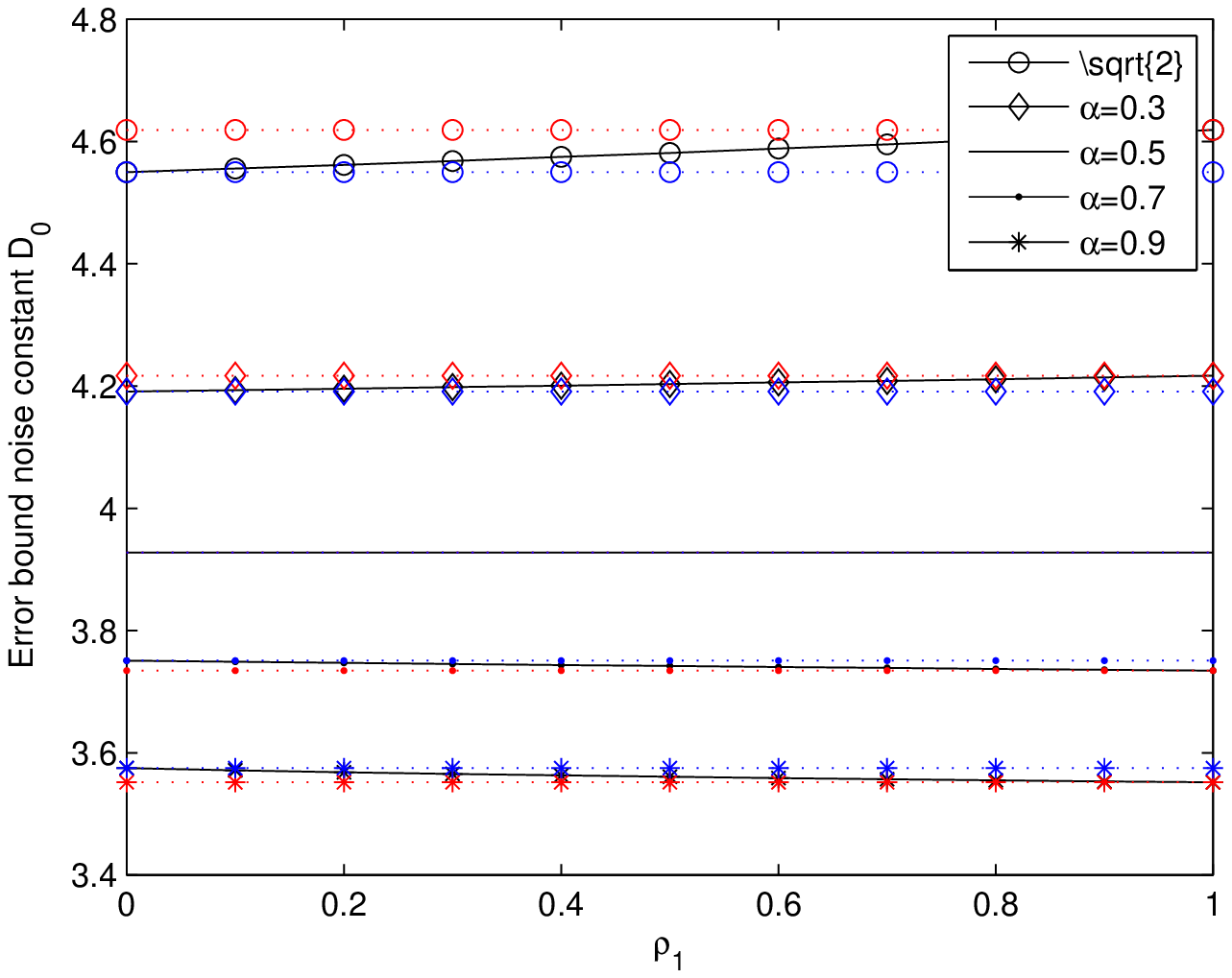}\\[-0.3cm]
{($\mathrm{d}$) $D_{0}$ versus $\rho_1$}
\end{minipage}
\begin{minipage}[htbp]{0.46\linewidth}
\centering
\includegraphics[width=3.0in]{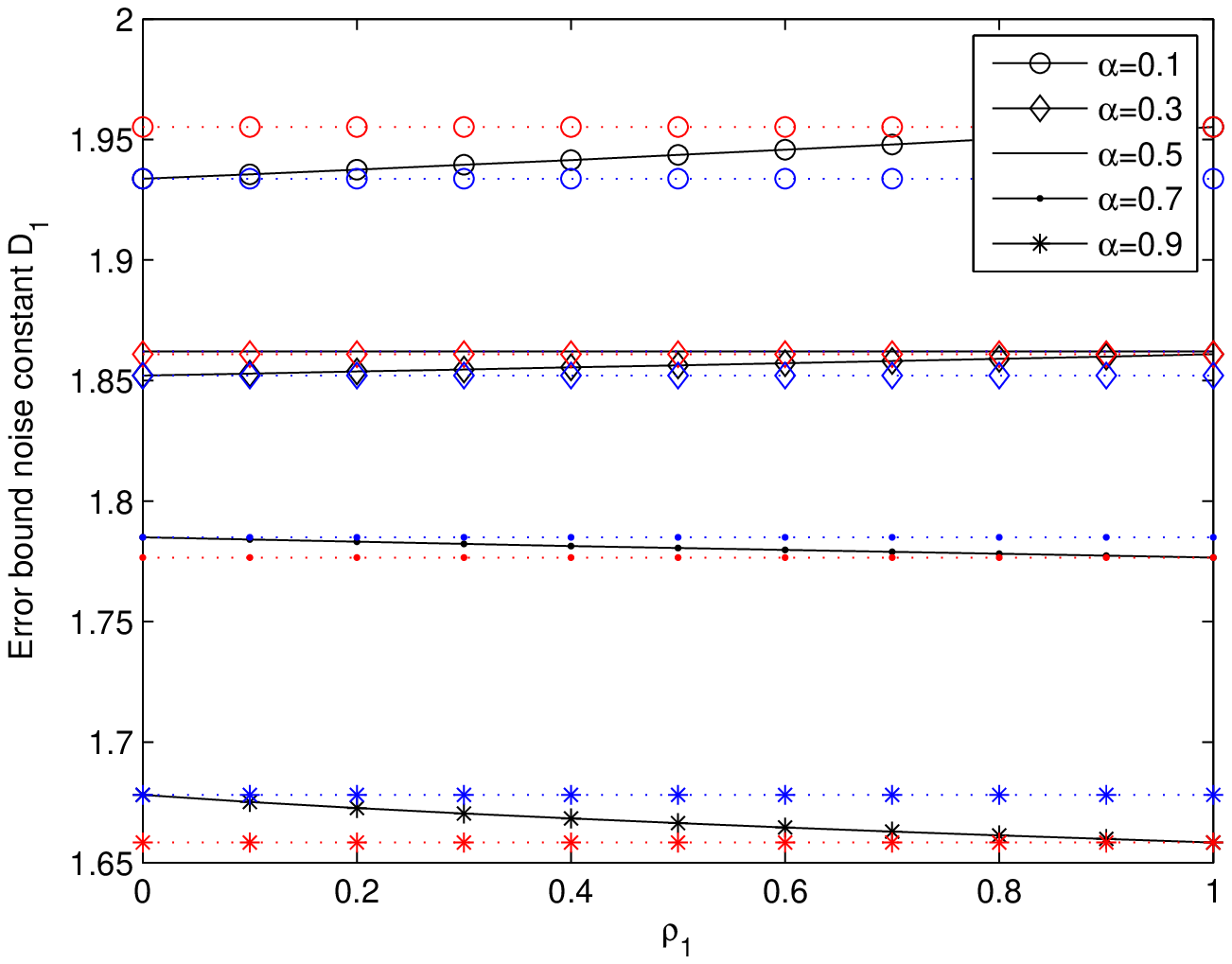}\\[-0.3cm]
{($\mathrm{d}'$) $D_{1}$ versus $\rho_1$}
\end{minipage}
\\
\caption{\label{fig1}Comparison of the sufficient conditions for
recovery and stability constants for the weighted $\ell_2/\ell_{1}$
 reconstruction. In all figures, we set $t=5$.
 In ($\mathrm{b}$),  ($\mathrm{d}$) and ($\mathrm{d}'$), the red dotted lines and the blue dotted lines
  indicate respectively the
  cases of $\omega=0.5$ and $\omega=0.25$ while the two weights case uses the solid lines.
 In ($\mathrm{d}$) and ($\mathrm{d}'$), we fix $\delta_{tk}^{\mathcal{I}}=0.5$.  }
\end{figure*}
\begin{figure*}[htbp!]
\centering
\begin{minipage}[htbp]{0.46\linewidth}
\centering
\includegraphics[width=4in]{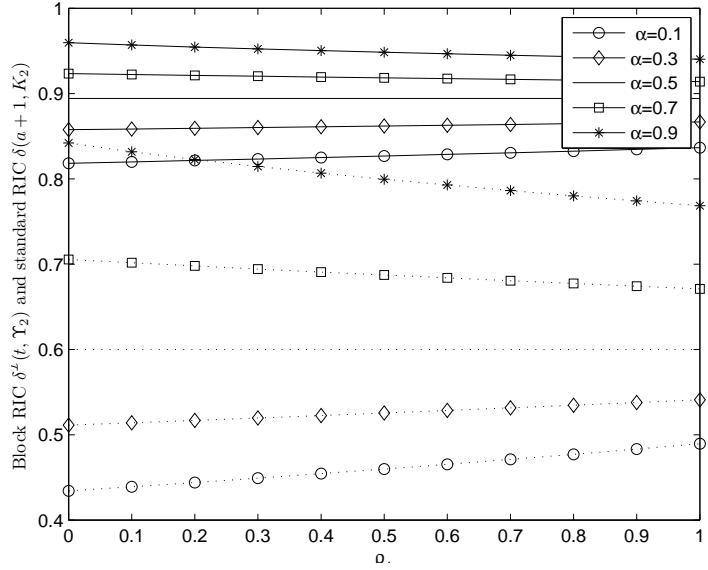}\\[-0.3cm]
{(a) $\delta^\mathcal{I}(t,\Upsilon_2)$ and $\delta(a+1,K_2)$ versus $\rho_1$}
\end{minipage}
\\
\centering
\begin{minipage}[htbp]{0.46\linewidth}
\centering
\includegraphics[width=3.0in]{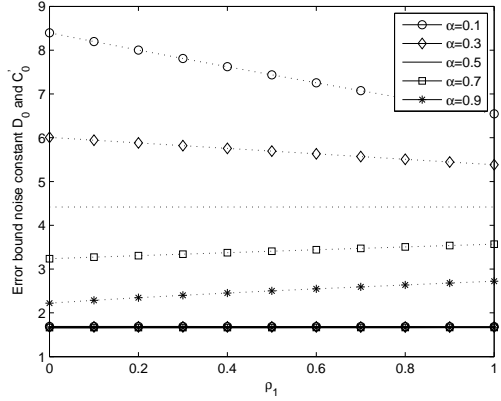}\\[-0.3cm]
{(b) $D_0$ and $C_0^{'}$ versus $\rho_1$}
\end{minipage}
\centering
\begin{minipage}[htbp]{0.46\linewidth}
\centering
\includegraphics[width=3.0in]{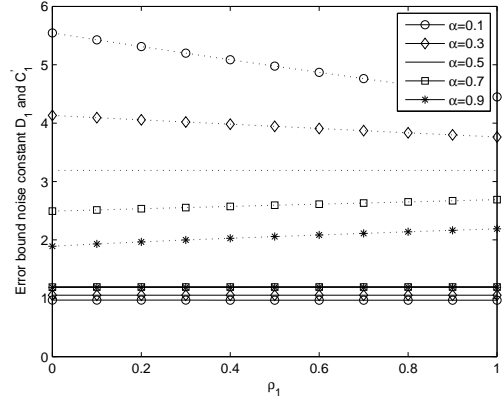}\\[-0.3cm]
{(c) $D_1$ and $C_1^{'}$ versus $\rho_1$}
\end{minipage}
\\
\caption{\label{fig2}Comparison of
$\delta^\mathcal{I}(t,\Upsilon_2)$ (solid lines) and
$\delta(a+1,K_2)$ (dotted lines) and stability constants. In all the
figures, we set $t=5$.
 In ($\mathrm{b}$) and ($\mathrm{c}$), comparing  stability constants $D_0$ (solid lines) and $C_0^{'}$ (dotted lines) as well as  $D_1$ (solid lines) and $C_1^{'}$ (dotted lines)
  we fix $\delta_{tk}^\mathcal{I}=0.2$ and $\delta_{ak}=0.15$.}
\end{figure*}

Fig.\ref{fig1} illustrates how  the bound $\delta^\mathcal{I}(t, \Upsilon_L)$ of the block RIP
constant $\delta_{tk}^{\mathcal{I}}$ given in \eqref{g23} and the stability constants given in
\eqref{g24} change with  weights  and  the prior block support estimate sizes for the different accuracy of prior block support estimate in
the case of weighted $\ell_2/\ell_{1}$ when $t=5$. And
 we also compare  the bound $\delta^\mathcal{I}(t, \Upsilon_L)$  and the stability constants when the
single weight is used with that using two or three distinct weights as a function of the block support estimate sizes
for the different accuracy of prior support estimate.

In Fig.\ref{fig1}(a), we set $\omega_1=\omega_2=\cdots=\omega_L=\omega\in[0,1]$, $\alpha_1=\alpha_2=\cdots=\alpha_L=\alpha$
and $\rho_1+\rho_2+\cdots+\rho_L=\rho=1$. That is, we only consider the weighted $\ell_2/\ell_1$ minimization with the single weight $\omega$,
the block support estimate size $\rho$ and the  accuracy $\alpha$.
We plot the bound $\delta^\mathcal{I}(t, \Upsilon_L^\omega)$  versus $\omega$ with different values of $\alpha$. We
observe that the bound $\delta^\mathcal{I}(t, \Upsilon_L^\omega)$ gets larger as $\alpha$
increases, which implies  the sufficient condition on the block RIP
constant becomes weaker as $\alpha$ increases in the case of the weighted $\ell_2/\ell_1$ minimization with the single weight.
 And when $\omega=1$ or $\alpha=\frac{1}{2}$, $\delta^\mathcal{I}(t, \Upsilon_L^\omega)$ is a constant (see Proposition \ref{pro6}). In addition,
 $\delta^\mathcal{I}(t, \Upsilon_L^\omega)$ decreases as $\omega$  increases with $\alpha>\frac{1}{2}$, which means the condition \eqref{g23} is weaker for smaller weight $\omega$.

In Fig.\ref{fig1}(b), we compare the bound $\delta^\mathcal{I}(t, \Upsilon_L)$ ($L=2$) when using either two disjoint prior block support estimates
 $\widetilde{T}_1$ and $\widetilde{T}_2$ or a single prior block support estimate $\widetilde{T}=\widetilde{T}_1\cup \widetilde{T}_2$,
which implies $\rho=\rho_1+\rho_2$.
Let  $\omega_1=0.5$ (applied on $\widetilde{T}_1$), $\omega_2=0.25$ (applied on $\widetilde{T}_2$), the single weight $\omega=0.5$ or $0.25$
(applied on $\widetilde{T}$), $\rho=1$ and $\alpha_1=\alpha_2=\alpha$.
The figure displays  the bounds  $\delta^\mathcal{I}(t, \Upsilon_L)$   and $\delta^\mathcal{I}(t, \Upsilon_L^\omega)$  as a function of the size  $\rho_1$ for  different $\alpha$.
As expected, the bounds $\delta^\mathcal{I}(t, \Upsilon_L)$  and  $\delta^\mathcal{I}(t, \Upsilon_L^\omega)$  get larger as $\alpha$ increases   both in the single and two weights cases.
 The bound $\delta^\mathcal{I}(t, \Upsilon_L)$ lies between the bound $\delta^\mathcal{I}(t, \Upsilon_L^\omega)$
  in the single weight $\omega=0.5$  case
 and the bound $\delta^\mathcal{I}(t, \Upsilon_L^\omega)$  applied the single weight  $\omega=0.25$
when $\rho_1\in[0,1]$ and $\alpha\neq 0.5$. In addition,
 when $\alpha\geq \frac{1}{2}$, the figure demonstrates the result of Proposition \eqref{pro5},
i.e., $\delta^\mathcal{I}(t, \Upsilon_L^{0.5}) \leq \delta^\mathcal{I}(t, \Upsilon_L) \leq \delta^\mathcal{I}(t, \Upsilon_L^{0.25})$.

 Fig.\ref{fig1}(c)  displays the transition of $\delta^\mathcal{I}(t, \Upsilon_L)$ as $\rho_1$ and $\rho_2$  vary  with $L=3$,
$\rho_1+\rho_2+\rho_3=1$, $\alpha_1=\alpha_2=\alpha_3=0.9$, $\omega_1=0.9,\ \omega_2=0.5$ and $\omega_3=0.1$.

In Fig.\ref{fig1}(d) and ($\mathrm{d}'$),  set $L=2$, $\delta_{tk}^{\mathcal{I}}=0.5$ and $\alpha_1=\alpha_2=\alpha$.
And we set $\omega_1=0.5$, $\omega_2=0.25$ and the single weight $\omega=0.5$ or $0.25$. One can easily see that
  $D_{0}$ and $D_1$ in \eqref{g24} decreases as $\alpha$ increases  for the case of  two distinct  weights
 and the cases of the single weight.   We  observe that constants $D_{0}$ and $D_1$ with two distinct  weights
 lie between those with  a single weight  for the accuracy $\alpha$.  For $\alpha>\frac{1}{2}$,
the smallest weight results  in  the best (smallest) constants $D_{0}$ and $D_1$  and the largest weight results in the worst (largest) constants $D_{0}$ and $D_1$.\,

Fig.\ref{fig2} compares the bound  $\delta^\mathcal{I}(t, \Upsilon_L)$  in \eqref{g23} for $\mathcal{I}=\{d_1=1, d_2=1,\cdots,d_M=1\}$  with the bound of the standard  RIC
 $\delta(a+1, K_L)$ in
\eqref{g7} as well as stability constants in \eqref{g24} and
\eqref{g6} for various accuracy $\alpha_1=\alpha_2=\alpha$. Set
$L=2$, $\rho_1+\rho_2=1$,  $t=5, ~a=4$, $\omega_1=0.5$ and
$\omega_2=0.25$. Here we depict the bounds $\delta^\mathcal{I}(t,
\Upsilon_L)$ in \eqref{g23}, $\delta(a+1, K_L)$ in \eqref{g7} and
the constants in \eqref{g24} and \eqref{g6} versus $\rho_1$ with
various $\alpha$.

Fig.\ref{fig2}(a) illustrates $\delta^\mathcal{I}(t,\Upsilon_L)$ is
larger than $\delta(a+1,K_L)$ under the same support estimate.
Moreover, Fig.\ref{fig2}(b) and (c) describe that constants $D_{0}$
and $D_{1}$ are always smaller than $C'_{0}$ and $C'_{1}$,
respectively. Therefore,  the sufficient condition (\ref{g23}) is
weaker than (\ref{g7}), and error bound constants (\ref{g24}) in
Theorem \ref{t3}  are better than those (\ref{g6}) in Theorem
\ref{theor1}.

\section{Random matrices}\label{4}
Theorem \ref{t3} established that the  block $k$-sparse signal $x$ can be  exactly recovered
under a sufficient condition $\delta_{tk}^\mathcal{I}<\delta^\mathcal{I}(t,\Upsilon_L)=\sqrt{\frac{t-d}{t-d+\Upsilon_L^{2}}}$.
In this section, we  prove that how many random measurements are needed for $\delta_{tk}^\mathcal{I}<\delta^\mathcal{I}(t,\Upsilon_L)$ to be satisfied with high probability.
Firstly, we recall Lemma 5.1 of \cite{BDDW}, which plays an important role in the proof of Theorem \ref{t5}.
\begin{lem}\label{lem1}(\cite{BDDW}\ Lemma 5.1)
Let $\Phi(\omega),\ \omega\in \Omega^{nN}$,  be a random matrix of size $n\times N$ drawn according to any
 distribution that satisfies
the concentration inequality
\begin{eqnarray}\label{cgl5}
P(|\|\Phi(\omega)x\|_2^2-\|x\|_2^2|\geq \varepsilon \|x\|_2^2)\leq2e^{-nc_0(\varepsilon)},\ \ 0<\varepsilon<1,
\end{eqnarray}
 where $c_0(\varepsilon)$
is a constant depending on $\varepsilon$. Then, for any set
$T$ with $|T|=k<n$ and any $0<\delta<1$, we have that
\begin{eqnarray}\label{cgl3}
(1-\delta)\|x\|_2\leq\|\Phi(\omega)x\|_2\leq (1+\delta)\|x\|_2,\ \ \ \ \mathrm{for\  all}\  x\in X_T
\end{eqnarray}
with probability
\begin{eqnarray}\label{cgl4}
\geq 1-2\bigg(\frac{12}{\delta}\bigg)^k\exp(-c_0(\delta/2)n),
\end{eqnarray}
where $X_T$ denotes the set of all signals in $\R^N$ that are zero outside of $T$.
\end{lem}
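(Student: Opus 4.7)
The plan is to prove Lemma \ref{lem1} via the classical covering-net and union-bound argument originally due to Baraniuk--Davenport--DeVore--Wakin. I would proceed in three steps.

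First, because $X_T$ is a linear subspace of dimension $k$, its unit sphere $S_T = \{x \in X_T : \|x\|_2 = 1\}$ admits a $(\delta/4)$-net $Q \subseteq S_T$ with cardinality $|Q| \leq (12/\delta)^k$. This is the standard volumetric packing bound in a $k$-dimensional Euclidean ball (a ball of radius $1+\delta/8$ in $\R^k$ can be covered by at most $(1 + 16/\delta)^k \leq (12/\delta)^k$ balls of radius $\delta/8$ for $0<\delta<1$).

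Second, I would apply the concentration inequality \eqref{cgl5} to each fixed $q \in Q$ with tolerance $\varepsilon = \delta/2$ and take a union bound over the at most $(12/\delta)^k$ points of the net. On the intersection event, whose failure probability is at most $2(12/\delta)^k \exp(-n c_0(\delta/2))$ in perfect agreement with \eqref{cgl4}, one has
\[
(1-\delta/2)\|q\|_2^2 \;\leq\; \|\Phi(\omega) q\|_2^2 \;\leq\; (1+\delta/2)\|q\|_2^2 \qquad \text{for every } q \in Q .
\]

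Third, I would extend this bound from the net $Q$ to every $x \in S_T$ by a perturbation/boot-strapping step. Let $M$ denote the smallest constant such that $\|\Phi(\omega) x\|_2 \leq (1+M)\|x\|_2$ for all $x \in S_T$; $M$ is finite since $S_T$ is compact. For an arbitrary $x \in S_T$ pick $q \in Q$ with $\|x-q\|_2 \leq \delta/4$, and invoke the triangle inequality together with step two to estimate
\[
\|\Phi(\omega) x\|_2 \;\leq\; \|\Phi(\omega) q\|_2 + \|\Phi(\omega)(x-q)\|_2 \;\leq\; \sqrt{1+\delta/2} + (1+M)\,\tfrac{\delta}{4}.
\]
Taking the supremum over $x \in S_T$ gives $(1+M)(1-\delta/4) \leq \sqrt{1+\delta/2}$, which for $\delta \in (0,1)$ rearranges to $M \leq \delta$. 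The companion lower bound $\|\Phi(\omega) x\|_2 \geq (1-\delta)\|x\|_2$ follows by the reverse triangle inequality applied to $\|\Phi(\omega) x\|_2 \geq \|\Phi(\omega) q\|_2 - \|\Phi(\omega)(x-q)\|_2$ and the net estimate $\|\Phi(\omega) q\|_2 \geq \sqrt{1-\delta/2}$.

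The main technical subtlety is the calibration of the net resolution $\delta/4$ against the concentration tolerance $\delta/2$: both constants must be chosen strictly smaller than $\delta$ so that the triangle-inequality pass of step three closes at exactly the constant $\delta$ appearing in \eqref{cgl3}, while keeping the covering number at $(12/\delta)^k$ so as to preserve the explicit pre-factor in the failure-probability bound \eqref{cgl4}. No finer argument is needed; the rest is bookkeeping of constants.
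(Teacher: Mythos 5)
The paper does not prove this lemma at all: it is imported verbatim, with proof deferred to \cite{BDDW} (Lemma 5.1). Your proposal reconstructs precisely the covering-net argument of that reference --- a $(\delta/4)$-net of the unit sphere of $X_T$ of cardinality at most $(12/\delta)^k$, a union bound applying \eqref{cgl5} with tolerance $\delta/2$, and the bootstrap via the extremal constant $M$ --- and the argument closes correctly: $(1+M)(1-\delta/4)\leq\sqrt{1+\delta/2}$ does give $M\leq\delta$, and the lower bound $\sqrt{1-\delta/2}-(1+\delta)\delta/4\geq 1-\delta$ holds for $0<\delta<1$. The one slip is in your volumetric justification of the net size: the inequality $(1+16/\delta)^k\leq(12/\delta)^k$ is false for $0<\delta<1$; the correct route is the standard covering bound $(1+2/\epsilon)^k$ for an $\epsilon$-net of the unit ball of a $k$-dimensional subspace, which with $\epsilon=\delta/4$ gives $(1+8/\delta)^k\leq(12/\delta)^k$ since $\delta\leq 4$.
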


In the section,
we  consider special random measurement matrices $A=(A_{ij})_{n\times N}$,
where
\begin{eqnarray}\label{matr1}
&A_{ij}\sim \mathcal{N}(0,1/n),\ \ \ \  A_{ij}=\left\{
                                                \begin{array}{ll}
                                                  1/\sqrt{n}, & \hbox{w.p. $1/2$} \\
                                                  - 1/\sqrt{n}, & \hbox{w.p. $1/2$}
                                                \end{array}
                                             \right.\nonumber \\
&\mathrm{or} \ \ \ \
A_{ij}=\left\{
         \begin{array}{ll}
           \sqrt{3/n}, & \hbox{w.p. $1/6$} \\
           0, & \hbox{w.p. $2/3$} \\
           -\sqrt{3/n}, & \hbox{w.p. $1/6$.}
         \end{array}
       \right.
\end{eqnarray}
Achlioptas \cite{A} showed  that the above  random measurement matrices \eqref{matr1} satisfy
\eqref{cgl5} with $c_0(\varepsilon)=\varepsilon^2/4-\varepsilon^3/6$.
Therefore,  for each of the $k$-dimensional spaces $X_T$, random measurement matrices \eqref{matr1}
will fail to satisfy \eqref{cgl3} with probability
\begin{eqnarray}\label{cgl6}
\leq 2\bigg(\frac{12}{\delta}\bigg)^k\exp\bigg(-n\bigg(\frac{\delta^2}{16}-\frac{\delta^3}{48}\bigg)\bigg)
\end{eqnarray}
by  Lemma \ref{lem1}.
\begin{thm}\label{t5}
For random measurement matrices \eqref{matr1}, suppose
\begin{align*}
n\geq\frac
{tk\log\frac{M}{k}}{\frac{t-d}{16(t-d+\Upsilon_L^{2})}-\frac{\left(\sqrt{(t-d)/(t-d+\Upsilon_L^{2})}\right)^3}{48}}.
\end{align*}
Then $\delta_{tk}^\mathcal{I}<\delta^\mathcal{I}(t,\Upsilon_L)=\sqrt{(t-d)/(t-d+\Upsilon_L^{2})}$ ($t>d$) holds in high probability.
\end{thm}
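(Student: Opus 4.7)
The plan is to apply the classical net/union-bound argument of Baraniuk--Davenport--DeVore--Wakin, for which Lemma \ref{lem1} does all the heavy lifting. Set $\delta = \delta^\mathcal{I}(t,\Upsilon_L)=\sqrt{(t-d)/(t-d+\Upsilon_L^{2})}$. For a fixed block index set $T \subseteq [M]$ with $|T| = tk$, the set $X_T$ of vectors supported on the blocks of $T$ is a linear subspace of $\R^N$. Since the random matrices in \eqref{matr1} satisfy the concentration inequality \eqref{cgl5} with $c_0(\varepsilon) = \varepsilon^2/4 - \varepsilon^3/6$, Lemma \ref{lem1} applied to this subspace tells us that the two-sided bound \eqref{cgl3} (equivalently, the block RIP relation \eqref{brip}) holds on $X_T$ with constant $\delta$ on a complementary event of probability at most $2(12/\delta)^{|T|}\exp(-c_0(\delta/2)\,n)$, where $c_0(\delta/2) = \delta^2/16 - \delta^3/48$.

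The second step is a union bound over all block supports of cardinality $tk$. Using the standard estimate $\binom{M}{tk}\le (eM/(tk))^{tk}\le(eM/k)^{tk}$, the probability that \eqref{brip} fails for some block $tk$-sparse vector is at most
\begin{equation*}
2\binom{M}{tk}\Bigl(\tfrac{12}{\delta}\Bigr)^{tk}\exp(-c_0(\delta/2)n)
\;\le\;2\exp\!\Bigl(tk\log\tfrac{M}{k}+tk\,C(\delta) - c_0(\delta/2)\,n\Bigr),
\end{equation*}
where $C(\delta)$ absorbs the $\log(12e/(t\delta))$ constants. Plugging in $c_0(\delta/2) = \frac{t-d}{16(t-d+\Upsilon_L^{2})} - \frac{(\sqrt{(t-d)/(t-d+\Upsilon_L^{2})})^3}{48}$ and requiring the exponent to be negative forces exactly the lower bound on $n$ announced in the theorem; for $n$ strictly larger than this threshold the failure probability decays exponentially in $n$, yielding $\delta_{tk}^{\mathcal{I}}<\delta$ with high probability.

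The main obstacle, and the reason the proof is not entirely mechanical, is the block structure: $X_T$ has linear dimension $\sum_{i\in T}d_i$ rather than $|T|=tk$, so Lemma \ref{lem1} a priori contributes a covering-number factor $(12/\delta)^{\sum_{i\in T}d_i}$ instead of $(12/\delta)^{tk}$. One has to either (i) invoke the fact that after normalisation the $d_i$ may be regarded as bounded/absorbed into the constant $C(\delta)$, or (ii) replace $tk$ by $\sum_{i\in T}d_i$ and then bound this uniformly; either way the scaling in $n$ remains linear in $tk\log(M/k)$, which is what the statement records. Once this bookkeeping is settled, the remaining steps are the standard exponentiation and comparison with $c_0(\delta/2)$ already spelled out in the statement.
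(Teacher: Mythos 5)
Your proposal follows essentially the same route as the paper's proof: apply Lemma \ref{lem1} with $c_0(\varepsilon)=\varepsilon^2/4-\varepsilon^3/6$ to each block-support subspace, take a union bound over the $\binom{M}{tk}\le (eM/(tk))^{tk}$ supports, and compare the exponent with $c_0(\delta/2)$; the dimension issue you flag is handled in the paper exactly as in your option (ii), by writing the covering factor as $(12/\delta)^{tk\hat{d}}$ and absorbing the resulting $\hat{d}\log(12/\delta)$ term asymptotically as $M/k\to\infty$. The argument is correct and matches the paper's.
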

\begin{proof}
Without loss of generality, let $tk$ is a  positive integer.
By \eqref{cgl6}, a $n\times tk\hat{d} $ submatrix of random measurement matrices $A$ \eqref{matr1}
fails to fulfil \eqref{cgl3} with probability
$$\leq 2\bigg(\frac{12}{\delta^\mathcal{I}(t,\Upsilon_L)}\bigg)^{tk\hat{d}}\exp\bigg(-n\bigg(\frac{(\delta^\mathcal{I}(t,\Upsilon_L))^2}{16}-\frac{(\delta^\mathcal{I}(t,\Upsilon_L))^3}{48}\bigg)\bigg).$$

As discussed in \cite{EM}, we know that a block sparse signal lies in a structured
union of subspaces.
Then random measurement matrices \eqref{matr1} fail to satisfy \eqref{brip}
with probability
\begin{eqnarray*}
\leq 2\left(^M_{tk}\right)\bigg(\frac{12}{\delta^\mathcal{I}(t,\Upsilon_L)}\bigg)^{tk\hat{d}}\exp\bigg(-n\bigg(\frac{(\delta^\mathcal{I}(t,\Upsilon_L))^2}{16}-\frac{(\delta^\mathcal{I}(t,\Upsilon_L))^3}{48}\bigg)\bigg).
\end{eqnarray*}

Note that $\left(^M_{tk}\right)\leq (\frac{eM}{tk})^{tk}$. Then for
$t>d$ and
$\delta^\mathcal{I}(t,\Upsilon_L)=\sqrt{(t-d)/(t-d+\Upsilon_L^{2})}$,
we have
\begin{align*}
P(\delta_{tk}^\mathcal{I}\geq \delta^\mathcal{I}(t,\Upsilon_L))
 &\leq 2\left(^M_{tk}\right)\bigg(\frac{12}{\delta^\mathcal{I}(t,\Upsilon_L)}\bigg)^{tk\hat{d}}
 \exp\bigg(-n\bigg(\frac{(\delta^\mathcal{I}(t,\Upsilon_L))^2}{16}-\frac{(\delta^\mathcal{I}(t,\Upsilon_L))^3}{48}\bigg)\bigg)\\
&\leq 2\bigg(\frac{eM}{tk}\bigg)^{tk}\bigg(\frac{12}{\sqrt{(t-d)/(t-d+\Upsilon_L^{2})}}\bigg)^{tk\hat{d}}\\
&\ \ \ \ \times \exp\bigg(-n\bigg(\frac{t-d}{16(t-d+\Upsilon_L^{2})}-\frac{\left(\sqrt{(t-d)/(t-d+\Upsilon_L^{2})}\right)^{3}}{48}\bigg)\bigg)\\
&=2\exp\bigg(-n\bigg(\frac{t-d}{16(t-d+\Upsilon_L^{2})}-\frac{\left(\sqrt{(t-d)/(t-d+\Upsilon_L^{2})}\right)^{3}}{48}\bigg)\\
&\ \ \ \ \ \ \ \ \ \ +tk\bigg(\log\frac{eM}{tk}
+\hat{d}\log\frac{12}{\sqrt{(t-d)/(t-d+\Upsilon_L^{2})}}\bigg)\bigg)
\end{align*}
 Hence,
\begin{align*}
&P(\delta_{tk}^\mathcal{I}<\sqrt{(t-d)/(t-d+\Upsilon_L^{2})})\\
&\geq
1-2\exp\bigg(-n\bigg(\frac{t-d}{16(t-d+\Upsilon_L^{2})}-\frac{\left(\sqrt{(t-d)/(t-d+\Upsilon_L^{2})}\right)^{3}}{48}\bigg)\\
&\ \ \ \ \ \ \ \ \ \ \ \  \  +tk\bigg(\log\frac{eM}{tk}
+\hat{d}\log\frac{12}{\sqrt{(t-d)/(t-d+\Upsilon_L^{2})}}\bigg)\bigg)
\end{align*}
It is easy to see that the  random measurements $n\geq\frac
{tk\log(M/k)}{(t-d)/(16(t-d+\Upsilon_L^{2}))-((t-d)/(t-d+\Upsilon_L^{2}))^{3/2}/48}$
when $M/k\rightarrow\infty$ to sure
$\delta_{tk}^\mathcal{I}<\sqrt{(t-d)/(t-d+\Upsilon_L^{2})}$ ($t>d$)
to hold in high probability. We have completed the proof of the
theorem.
\end{proof}
\section{Numerical experiments}\label{5}
In this section, we present several numerical experiments to compare the weighted  $\ell_2/\ell_1$ minimization
method  with  the  $\ell_2/\ell_1$ minimization method
in the context of block signal recovery. By numerical experiments, we illustrate the benefits of the weighted $\ell_2/\ell_1$ minimization to
recover block sparse signals in both noiseless and noisy cases.
 In addition, we also demonstrate that non-uniform block support information  can be preferable to
 uniform  block support information.

For the solution of the  $\ell_2/\ell_1$ minimization problem, Wang et.al adopt an efficient iteratively reweighted least squares (IRLS)
algorithm \cite{WWX1}, \cite{WWX2}. Inspired by the ideas of \cite{WWX1},
 we present a generalized algorithm of the IRLS to solve the weighted $\ell_2/\ell_1$ minimization problem
\eqref{pro4} with \eqref{pro7}. First, we rewrite the problem \eqref{pro4} as the following regularized unconstrained smoothed weighted $\ell_2/\ell_1$ minimization
\begin{align}\label{cgl14}
\min_x\|x_\mathrm{w}\|_{2,1}^\varepsilon+\frac{1}{2\tau}\|y-Ax\|_2^2,
\end{align}
where $\|x_\mathrm{w}\|_{2,1}^\varepsilon=\sum\limits_{i=1}^M\mathrm{w}_i(\|x[i]\|_2^2+\varepsilon^2)^{1/2}$ and $\mathrm{w}_i \in (0,1]$.  Let
\begin{align*}
f(x,\varepsilon,\tau)=\sum_{i=1}^M\mathrm{w}_i(\|x[i]\|_2^2+\varepsilon^2)^{1/2}+\frac{1}{2\tau}\|y-Ax\|_2^2
\end{align*}
 be the objective function associated with the minimization problem \eqref{cgl14}. For the solution of $x$,
  it is known that the first-order necessary condition
 is
 \begin{align*}
\left[\frac{\mathrm{w}_ix[i]}{(\|x[i]\|_2^2+\varepsilon^2)^{1/2}}\right]_{1\leq i\leq M}+\frac{1}{\tau}A'(Ax-y)=0.
 \end{align*}
Let  the block vector $\widetilde{x}\in\R^N$ over $\mathcal{I}=\{d_1,d_2,\ldots,d_M\}$ satisfy
\begin{align*}
\widetilde{x}[i]=(\sqrt{\mathrm{w}_i}(\|x[i]\|_2^2+\varepsilon^2)^{-1/4},
\ldots ,\sqrt{\mathrm{w}_i}(\|x[i]\|_2^2+\varepsilon^2)^{-1/4})'\in\R^{d_i}
\end{align*}
 for all  $i\in[M]$. Define the diagonal weighting matrix $W=\mathrm{diag}(\widetilde{x})$,
 Therefore, we obtain the necessary optimality condition $(\tau W^2+A'A)x=A'y$. Due to the nonlinearity of the above system, we apply an iterative method to solve the above equations. That
 is, if we fix $W=W^{(t)}$  to be that determined already in the $t$-th iteration step, we set the solution of the above equations
  $x^{(t+1)}=(W^{(t)})^{-1}((A(W^{(t)})^{-1})^{'}(A(W^{(t)})^{-1})+\tau I)^{-1}(A(W^{(t)})^{-1})^{'} y$
  as the $(t+1)$-th iterate.

By the above analysis, we extend naturally the IRLS algorithm to the above problem \eqref{cgl14} denoting by  Algorithm 1 as  following:

\textbf{Input}:\ \ \ \ \  measurements $y\in\R^n$, sensing matrix $A\in\R^{n\times N}$,  estimated block-sparsity $\hat{k}$,

\ \ \ \ \ \ \ \ \ \ \ \ \ \ weighted vector $\mathrm{w}\in \R^M$.

\textbf{Step\ 1}:\ \ \ \  choose appropriate parameter $\tau>0$, set iteration count $t=0$ and $\varepsilon_0=1$,

\ \ \ \ \ \ \ \ \ \ \ \ \ \ initialize $x^{(0)}=\arg\min \|y-Ax\|_2^2$ .

\textbf{Step\ 2}: ``stopping criterion is not met'' \textbf{do}

1: $W^{(t)}=\mathrm{diag}(\sqrt{\mathrm{w}_i}(\varepsilon_t^2+\|x^{(t)}[i]\|_2^2)^{-1/4})$, $i=1,\ldots,M$;

2: $B^{(t)}=A(W^{(t)})^{-1}$;

3: $x^{(t+1)}=(W^{(t)})^{-1}((B^{(t)})'B^{(t)}+\tau I)^{-1}(B^{(t)})'y$;

4: $\varepsilon_{t+1}=\min\{\varepsilon_t,\nu r(x^{(t+1)})_{\hat{k}+1}/N\}$;

5: t=t+1.

\textbf{End}

\textbf{Output} $x^{(t+1)}$ is an approximation solution.

In the  algorithm 1, $r(x^{(t+1)})_{\hat{k}+1}$ is the $(\hat{k}+1)$-th largest $\ell_2$ norm value of the block of $x^{(t+1)}$ in the decreasing order, $\nu\in (0,1)$
satisfies $\nu r(x^{(1)})_{\hat{k}+1}/N<1$ and $\tau$ is an appropriately chosen parameter, which controls the tolerance of noise term. Note that the algorithm 1 is
the IRLS when $\mathrm{w}_i=1$ for all $i=[M]$, i.e., $\Pi_{i=1}^L\omega_i=1$.
In this paper, we don't make a detailed analysis including convergence, local convergence rate and error bound of the algorithm leaving to the interested reader.

In all of our experiments, we apply the algorithm 1 to solve the
weighted $\ell_2/\ell_1$ minimization problem  with
$0<\Pi_{i=1}^L\omega_i\leq 1$. For the algorithm 1, we set the
estimated $\hat{k}=k$ and $\nu=0.7$. If $\varepsilon_{t+1}<10^{-7}$
or $\|x^{(t+1)}-x^{(t)}\|_2<10^{-8}$, the iteration terminates and
outputs $x^{(t+1)}$; otherwise, the maximum number of iterations is
$1000$. The measurement matrix $A\in \R^{n\times N}$ was generated
randomly  with i.i.d draws from a standard Gaussian distribution and
the measurement vector $y$ was observed from $y=Ax+z$, where $z$ was
zero-mean Gaussian noise with standard deviation $\sigma$ or zero
vector. In the noise-free case ($\sigma=0$), $\tau=10^{-3}$
 and the average exact recovery frequency over 50 experiments is plotted by the following figures.
  If $\|x^{(t+1)}-x\|_2/\|x\|_2\leq 10^{-4}$, the recovery is regarded exact.
 For the presence of noise ($\sigma=0.01$), $\tau=10^{-1}\max|A'y|$ and we draw up the average reconstruction signal to noise ratio (SNR) over 50 experiments. The SNR is given by
 $\mathrm{SNR}(x^{(t+1)},x)=20\log_{10}\|x\|_2^2/\|x^{(t+1)}-x\|_2^2$, where the measure of the SNR is dB.

\subsection{The  uniform weight  case}
We first  consider   the uniform weight $\omega\in(0,1]$, i.e., $\omega_1=\cdots=\omega_L=\omega$,
 applied on $\widetilde{T}=\cup_{i=1}^L\widetilde{T}_i$ for the block $k$-sparse signal $x$ over $\mathcal{I}$  with length $N=256$ and $k=10$,
  generated  by choosing $k$ blocks uniformly at random, where $\mathcal{I}=\{d_1=\hat{d},\ldots,d_M=\hat{d}\}$.
For these $k$ blocks, we choose the nonzero values from a standard Gaussian distribution.
Let $\alpha_1=\alpha_2=\cdots=\alpha_L=\alpha$.
\begin{figure*}[htbp!]
\centering
\subfigure[Noise Free]{
\begin{minipage}{4cm}
\centering
{$\alpha=0.8$}
\includegraphics[width=4cm,height=6cm]{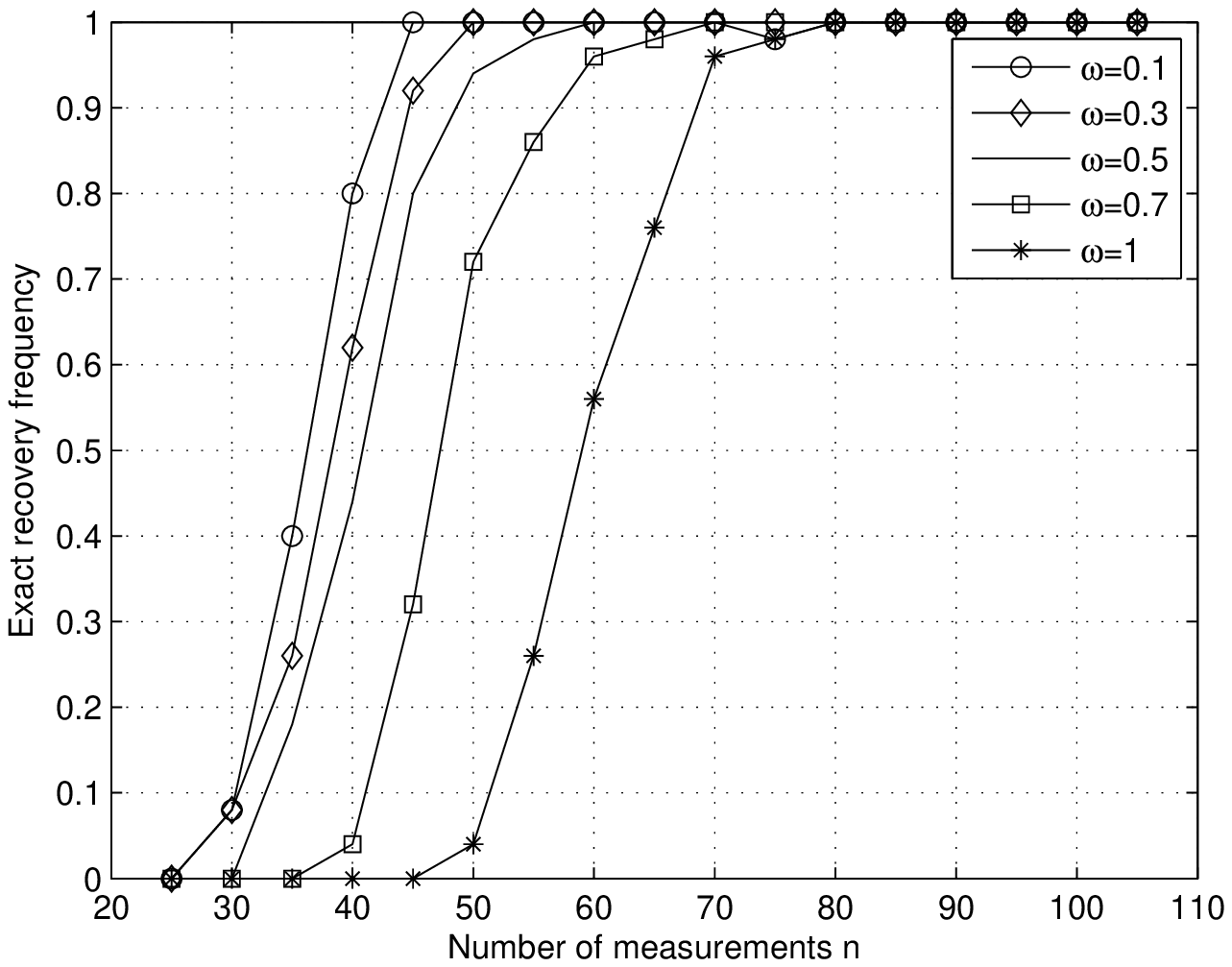}
\end{minipage}
\centering
\begin{minipage}{4cm}
\centering
{$\alpha=0.5$}
\includegraphics[width=4cm,height=6cm]{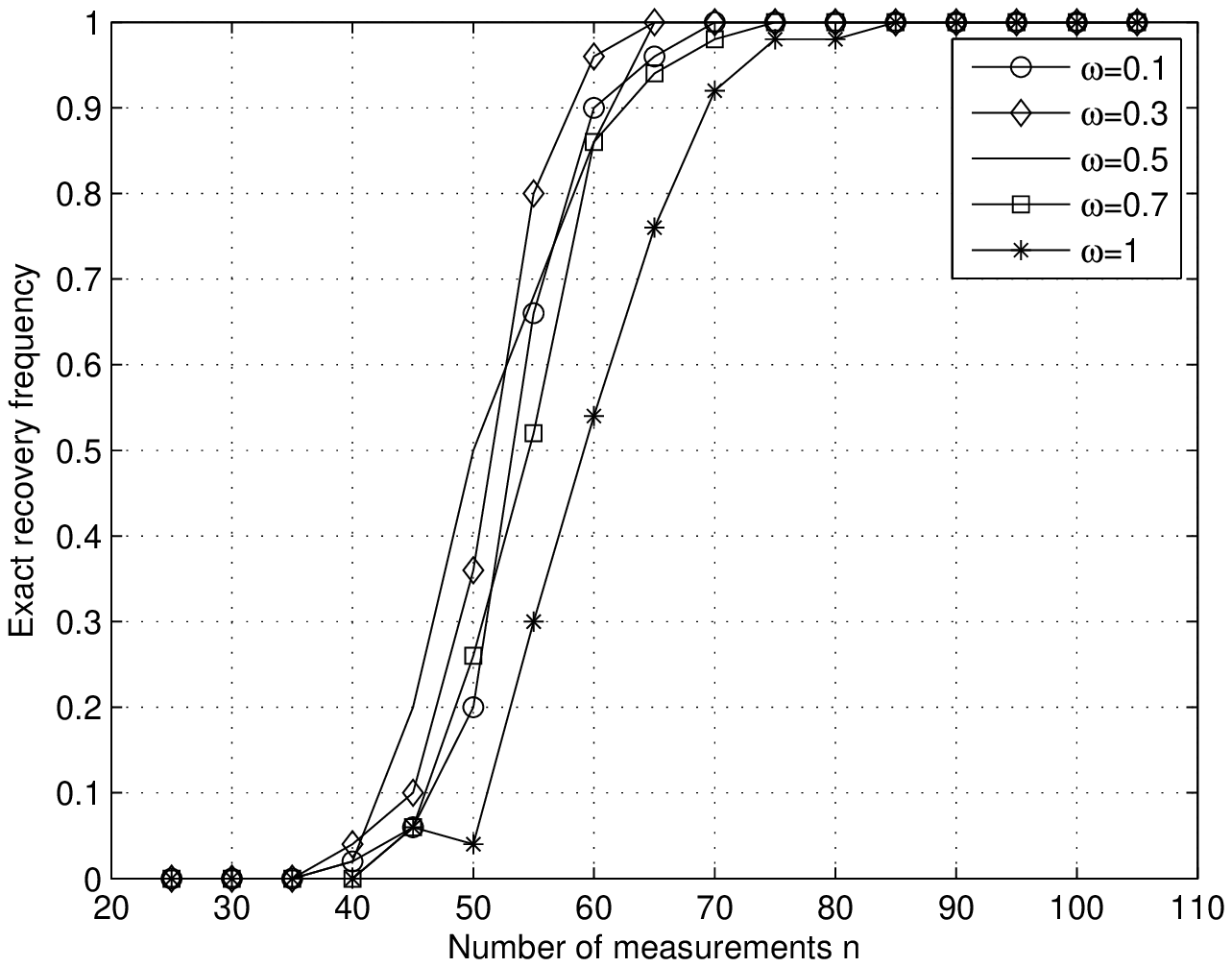}
\end{minipage}
\centering
\begin{minipage}{4cm}
\centering
{$\alpha=0.2$}
\includegraphics[width=4cm,height=6cm]{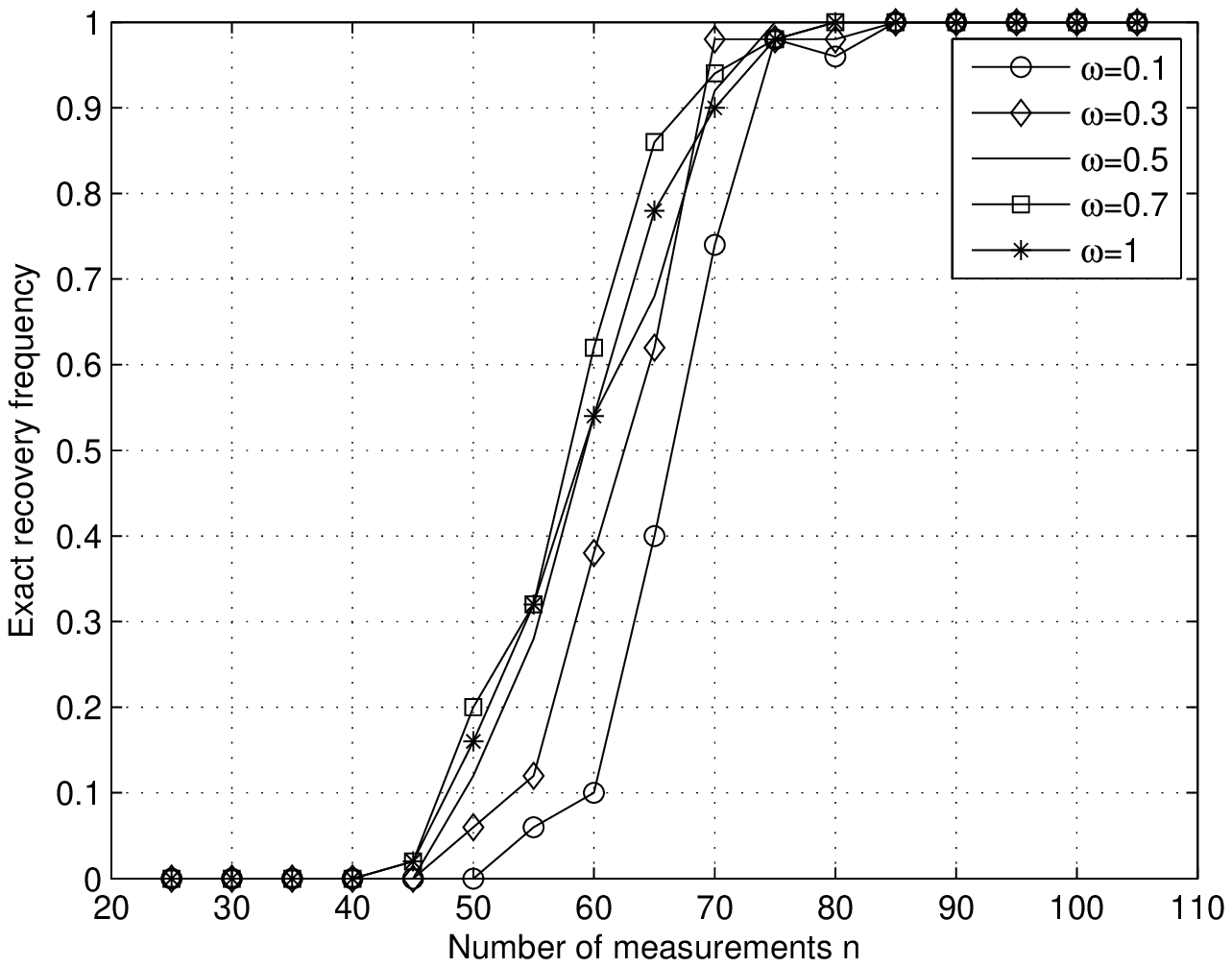}
\end{minipage}}

\subfigure[Noisy Case ]{
\centering
\begin{minipage}{4cm}
\centering
{$\alpha=0.8$}
\includegraphics[width=4cm,height=6cm]{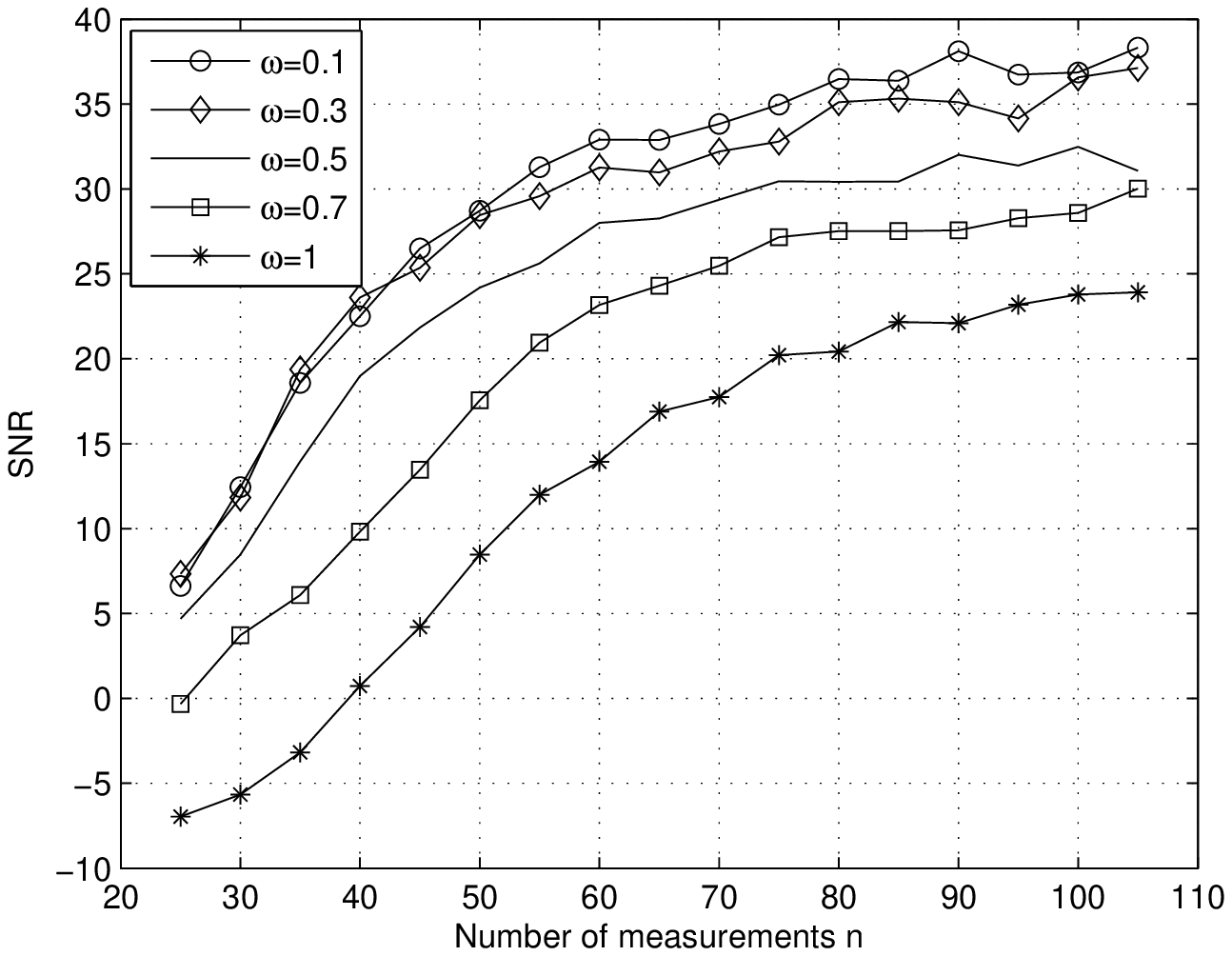}\\[-0.3cm]
\end{minipage}
\centering
\begin{minipage}{4cm}
\centering
{$\alpha=0.5$}
\includegraphics[width=4cm,height=6cm]{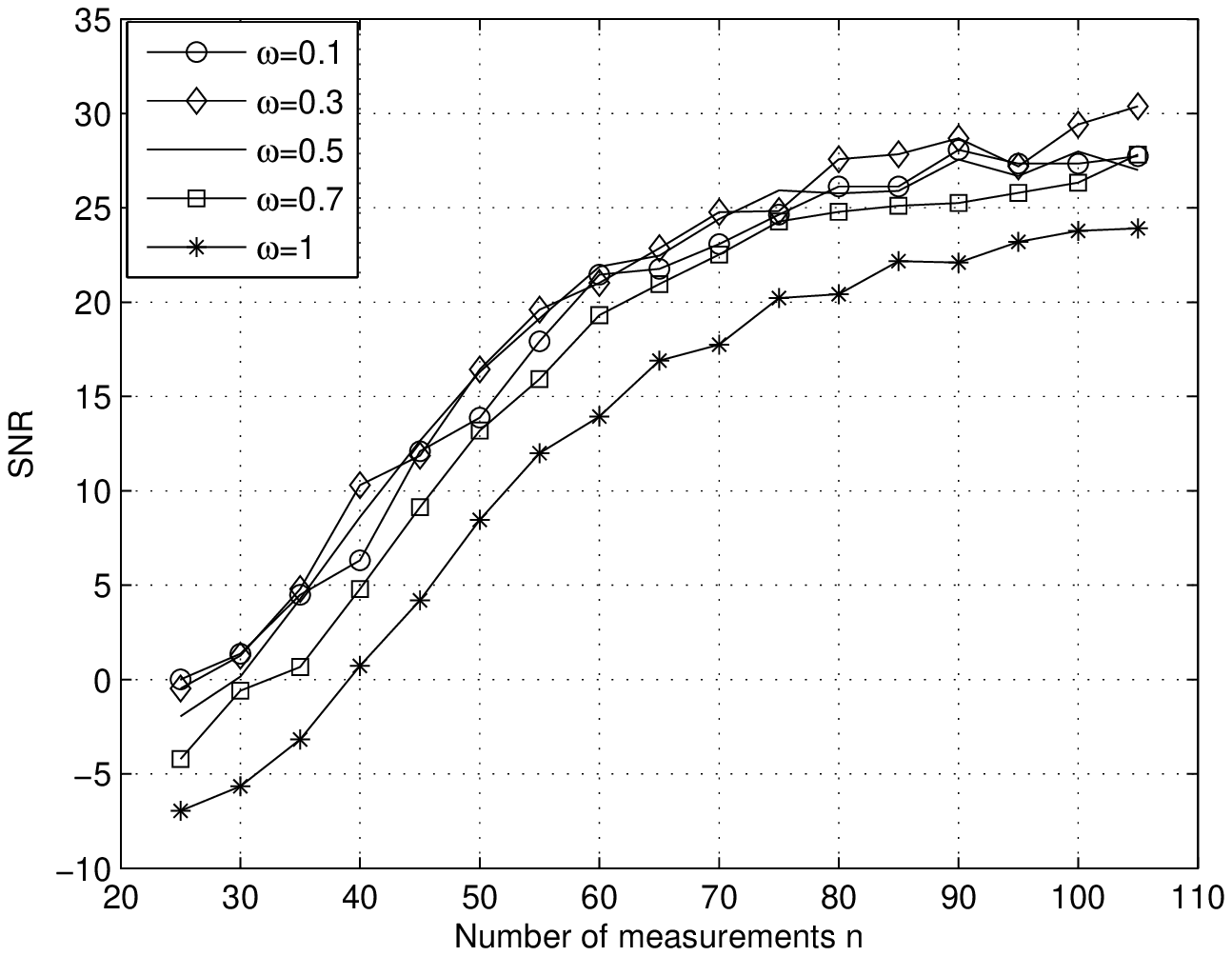}\\[-0.3cm]
\end{minipage}
\centering
\begin{minipage}{4cm}
\centering
{$\alpha=0.2$}
\includegraphics[width=4cm,height=6cm]{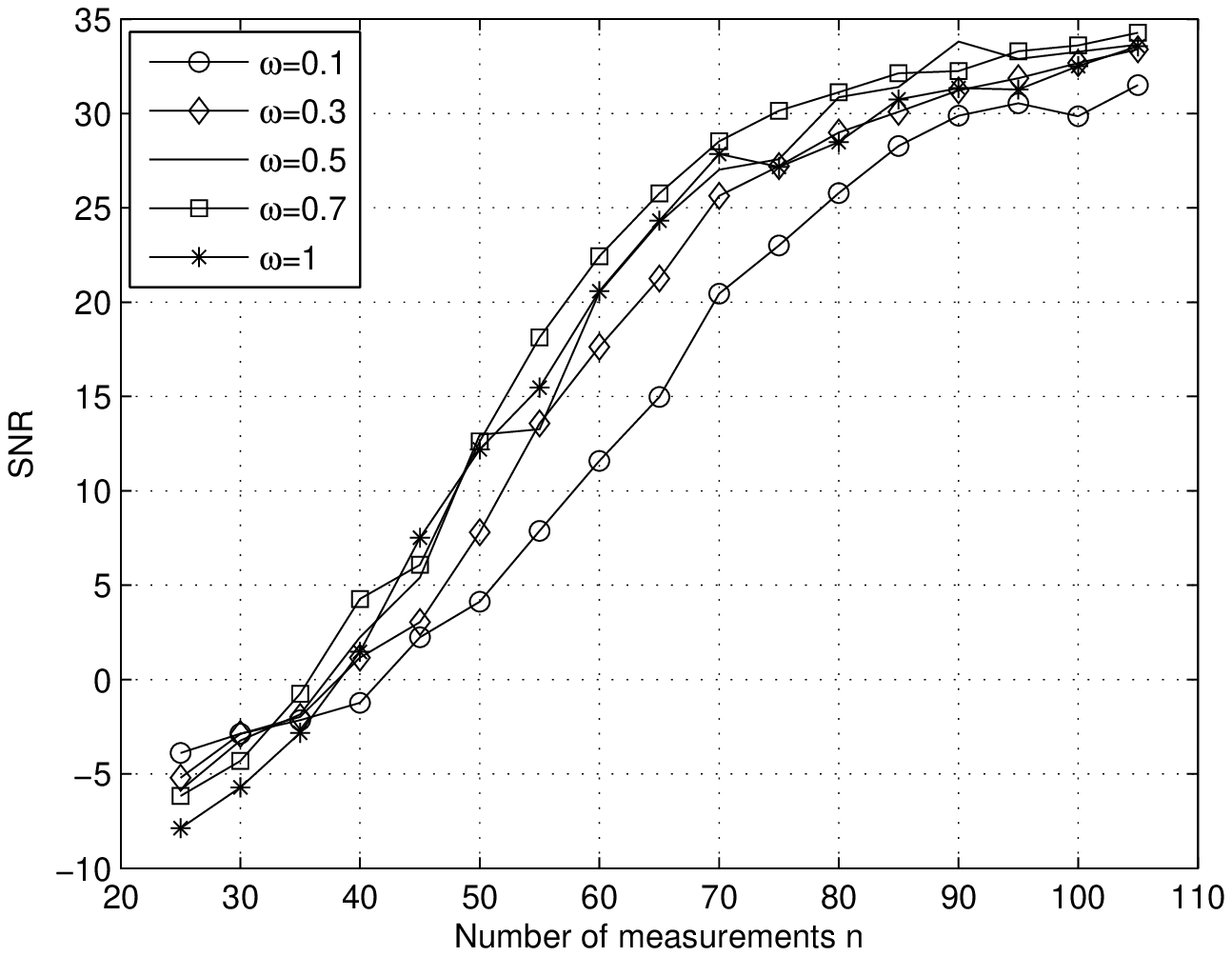}\\[-0.3cm]
\end{minipage}}
\\
\caption{\label{fig5} The recovery performance of the weighted $\ell_2/\ell_1$ minimization is in terms of the exact recovery frequency in noiseless case and the SNR  in
 noisy case. The block sparse signal $x$ with $\hat{d}=2$ has $k=10$ nonzero blocks. }
\end{figure*}

\begin{figure*}[htbp!]
\centering
\subfigure[Noise Free]{
\begin{minipage}{4cm}
\centering
{$\alpha=0.8$}
\includegraphics[width=4cm,height=6cm]{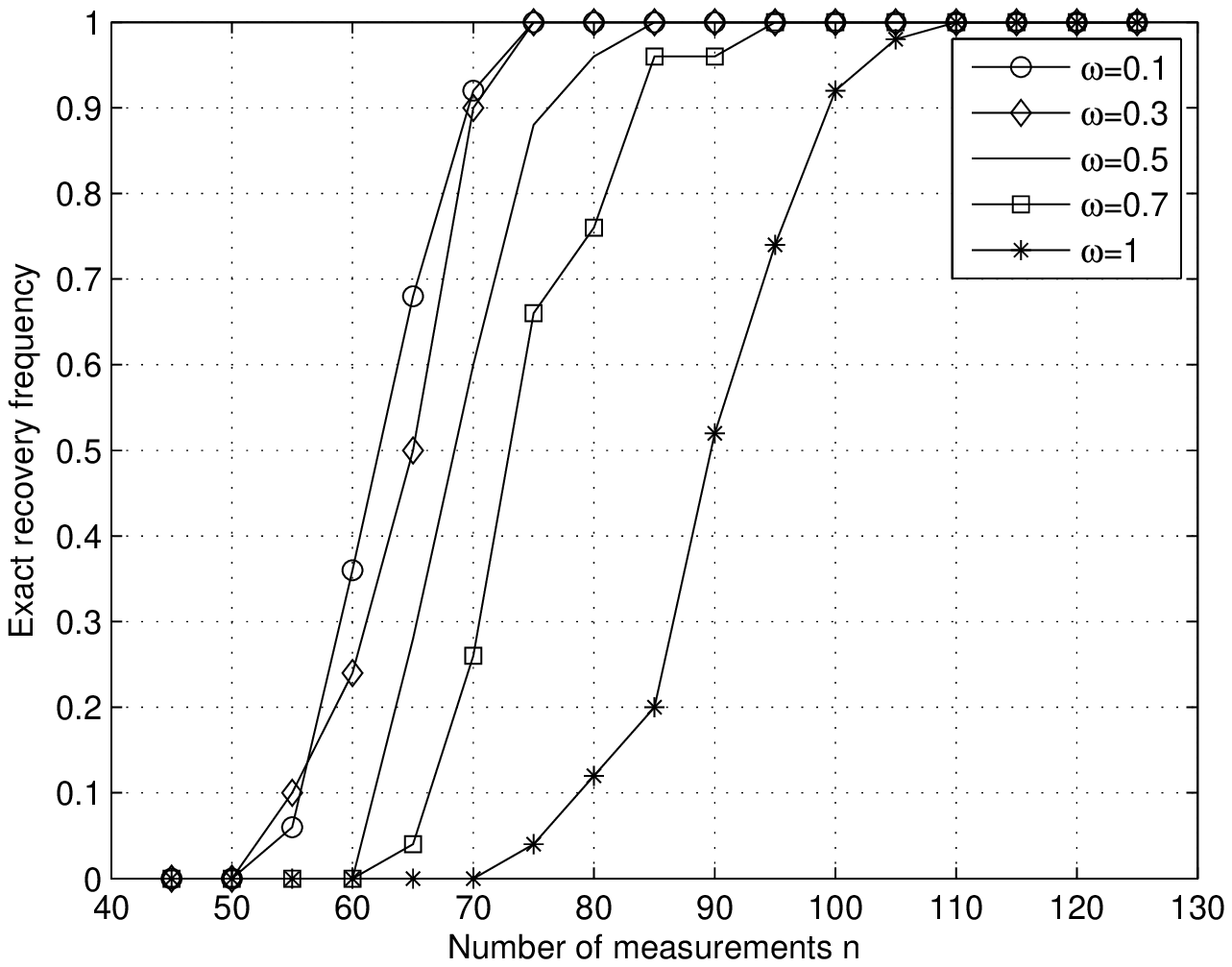}
\end{minipage}
\centering
\begin{minipage}{4cm}
\centering
{$\alpha=0.5$}
\includegraphics[width=4cm,height=6cm]{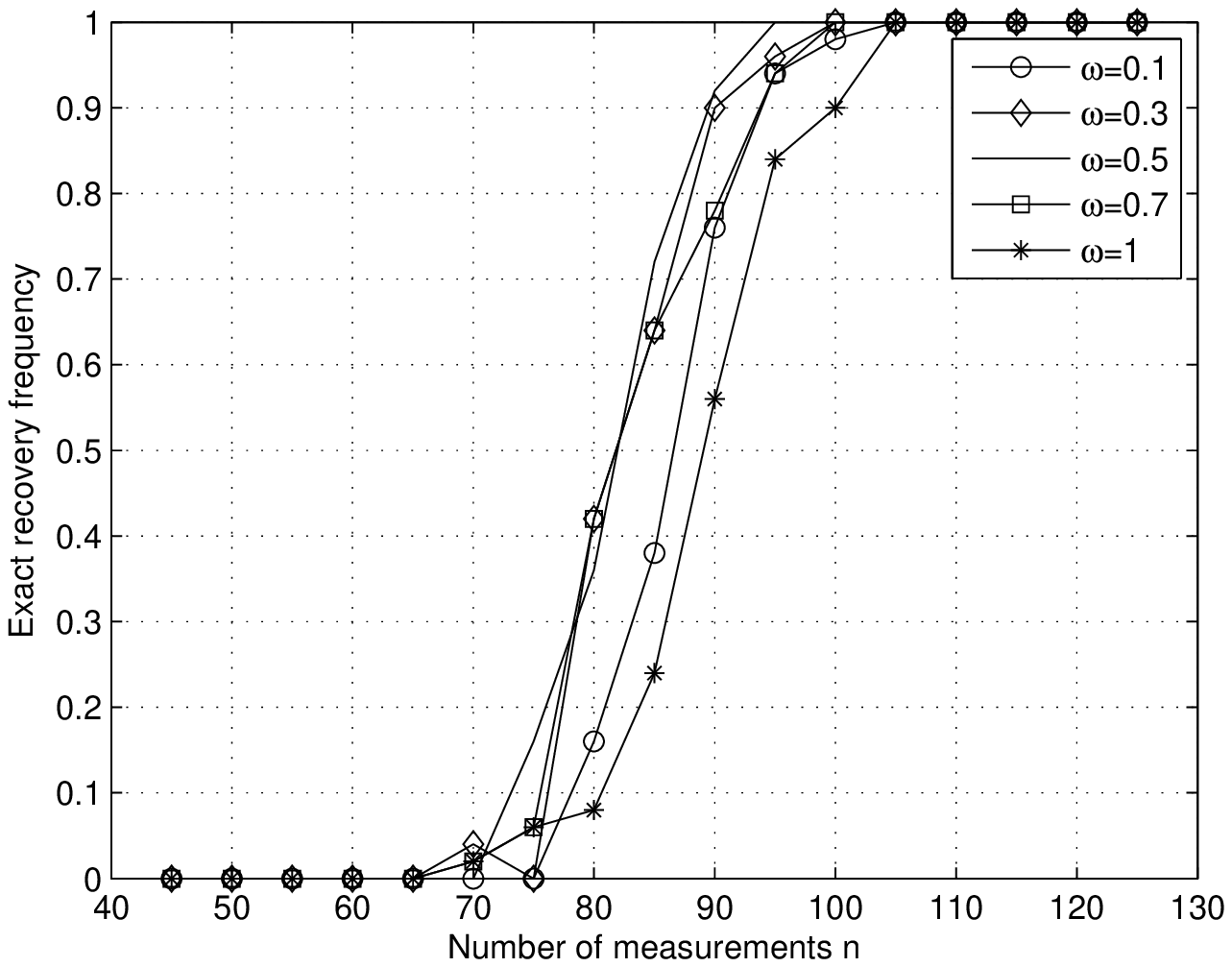}
\end{minipage}
\centering
\begin{minipage}{4cm}
\centering
{$\alpha=0.2$}
\includegraphics[width=4cm,height=6cm]{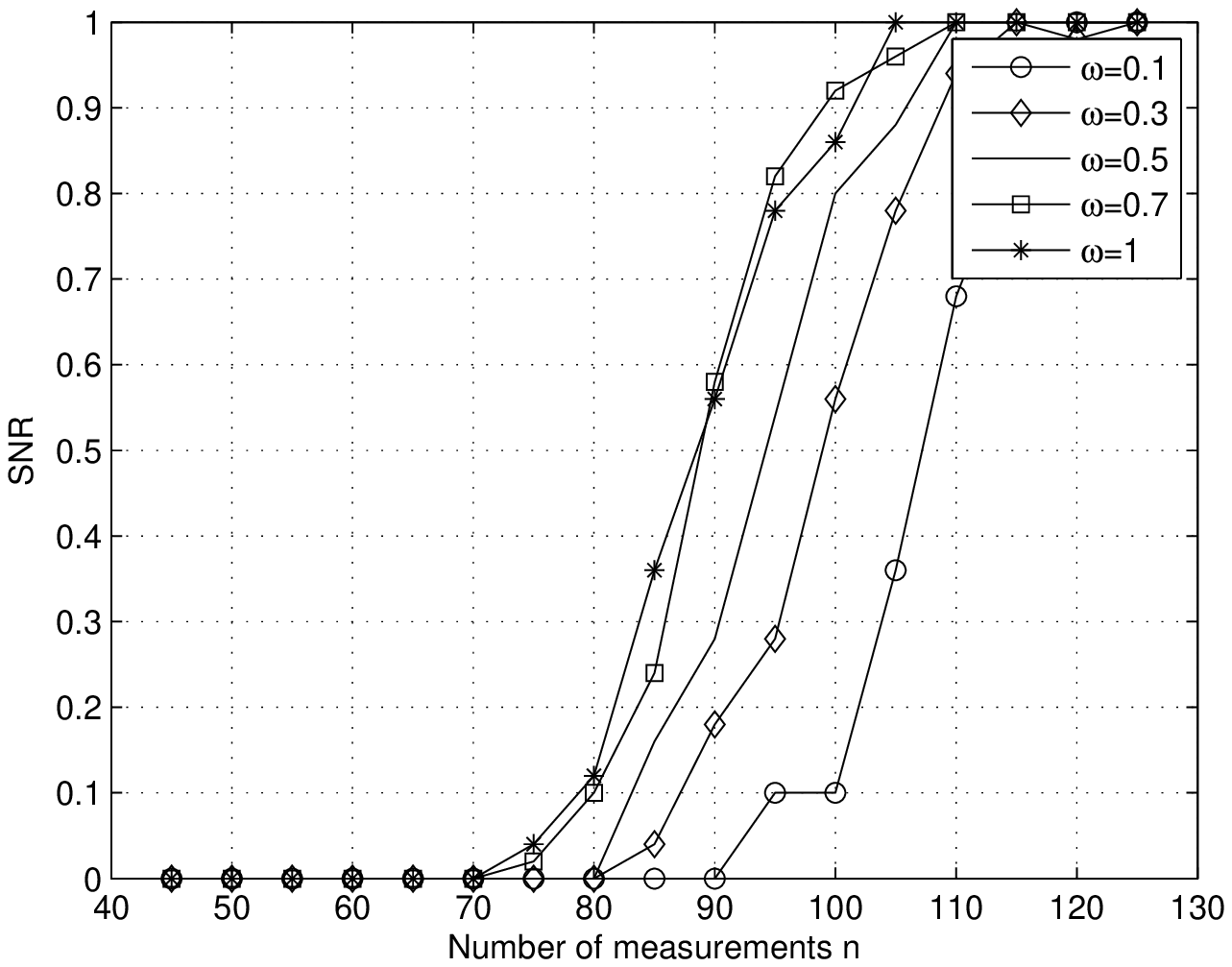}
\end{minipage}}

\subfigure[Noisy Case]{
\centering
\begin{minipage}{4cm}
\centering
{$\alpha=0.8$}
\includegraphics[width=4cm,height=6cm]{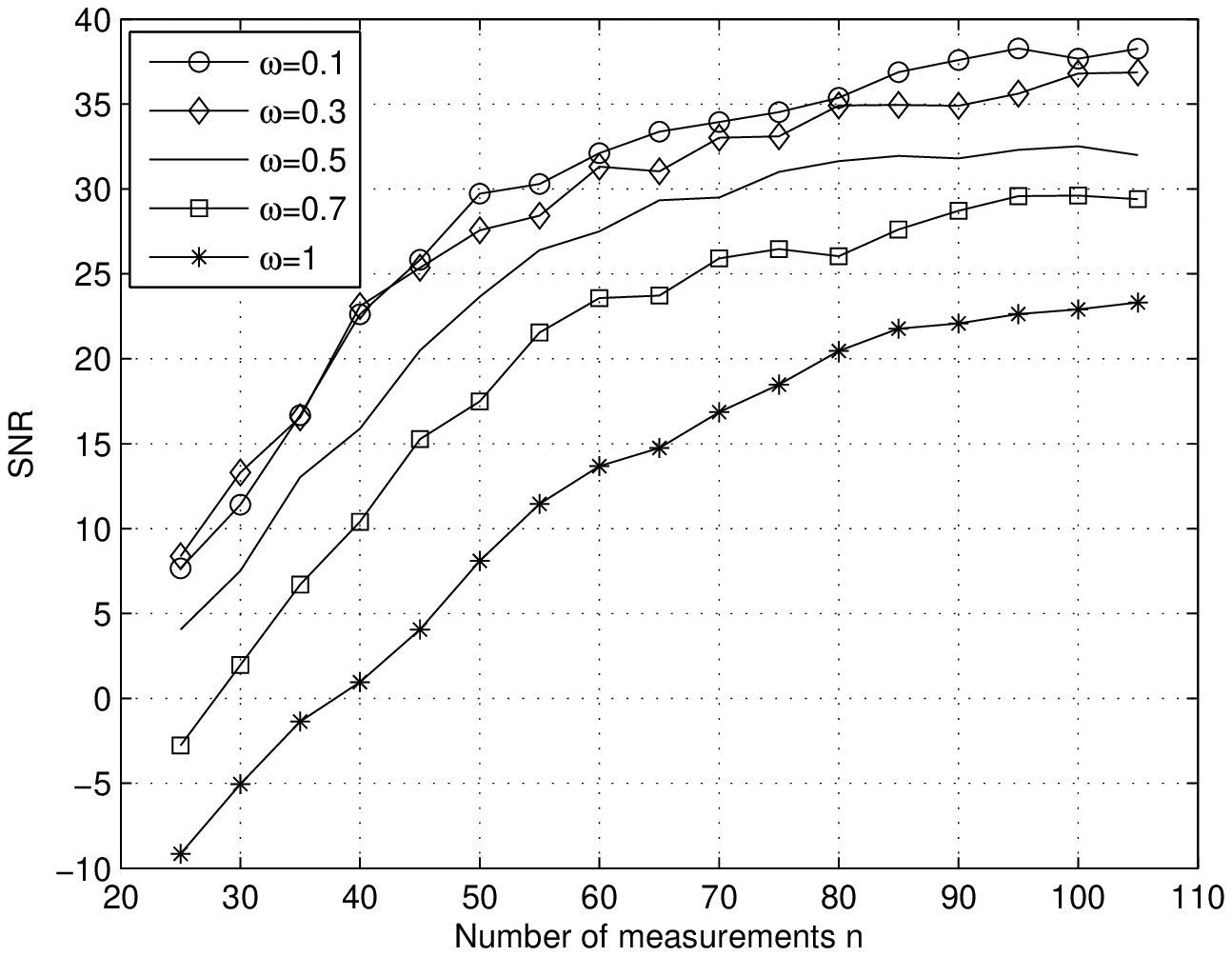}\\[-0.3cm]
\end{minipage}
\centering
\begin{minipage}{4cm}
\centering
{$\alpha=0.5$}
\includegraphics[width=4cm,height=6cm]{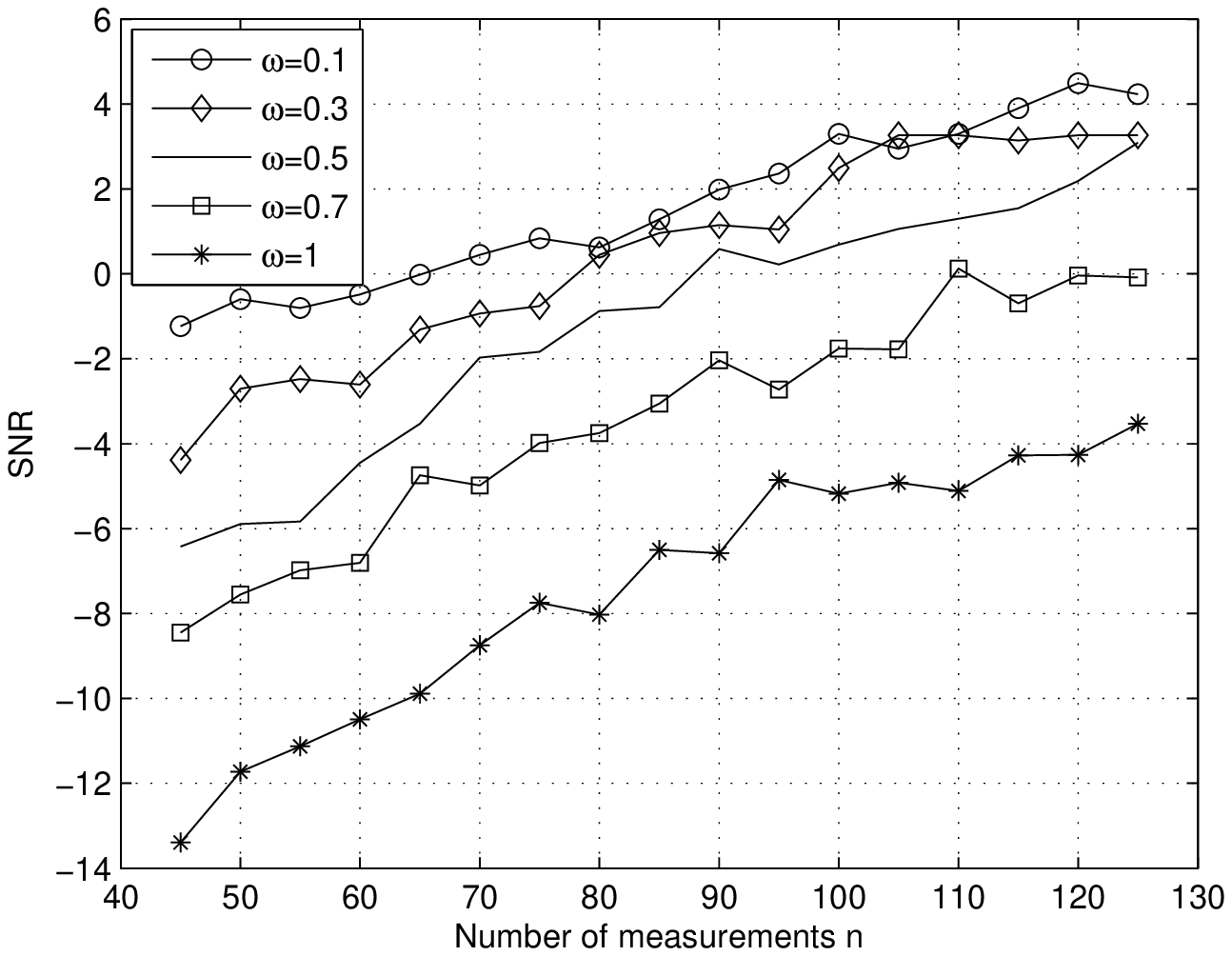}\\[-0.3cm]
\end{minipage}
\centering
\begin{minipage}{4cm}
\centering
{$\alpha=0.2$}
\includegraphics[width=4cm,height=6cm]{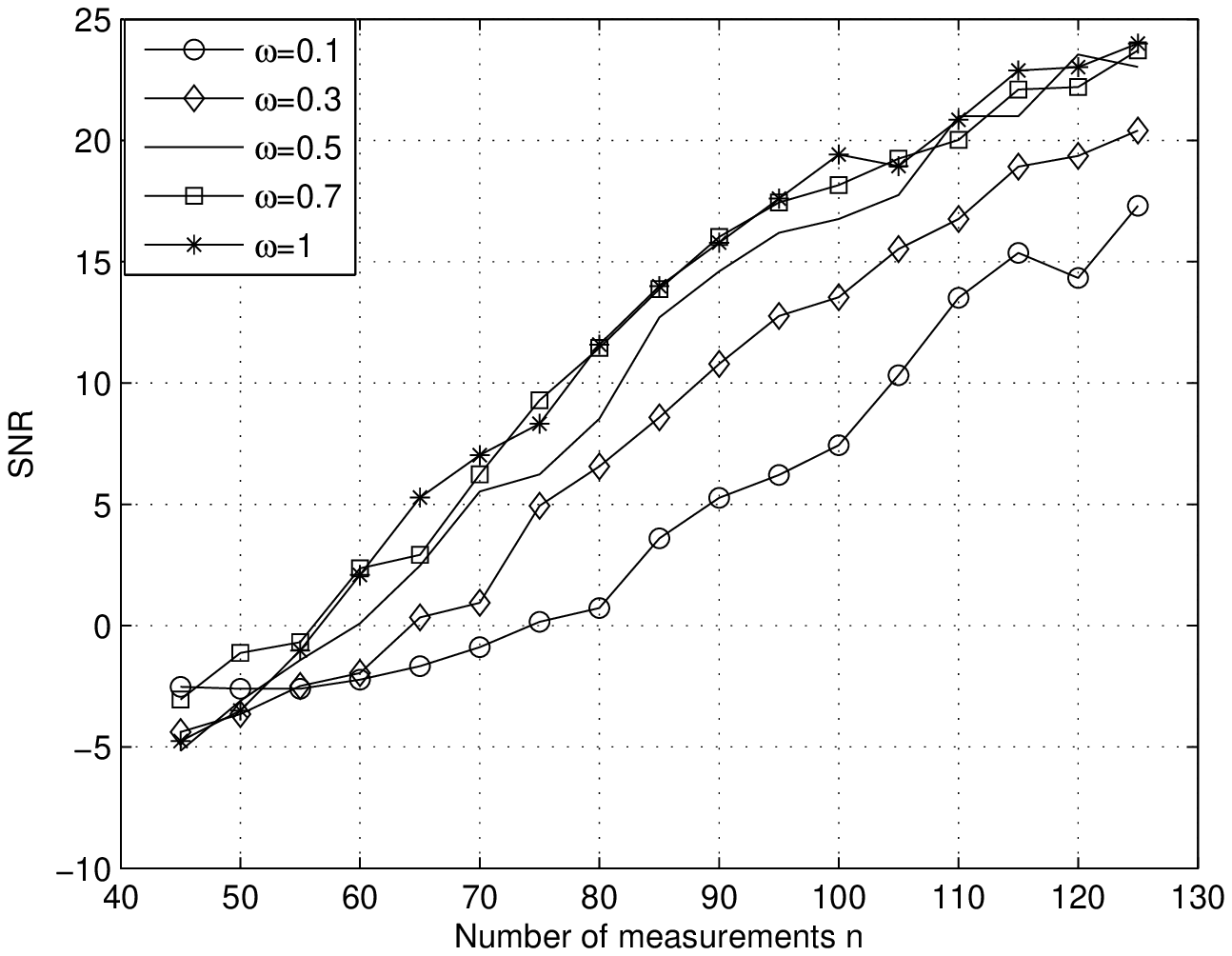}\\[-0.3cm]
\end{minipage}}
\\
\caption{\label{fig6} The recovery performance of the weighted $\ell_2/\ell_1$ minimization is in terms of the exact recovery frequency in noiseless case and the SNR  in
 noisy case. The block sparse signal $x$ with $\hat{d}=4$ has $k=10$ nonzero blocks. }
\end{figure*}

\begin{figure*}[htbp!]
\centering
\subfigure[Noise Free]{
\begin{minipage}{4cm}
\centering
{$\alpha=0.8$}
\includegraphics[width=4cm,height=6cm]{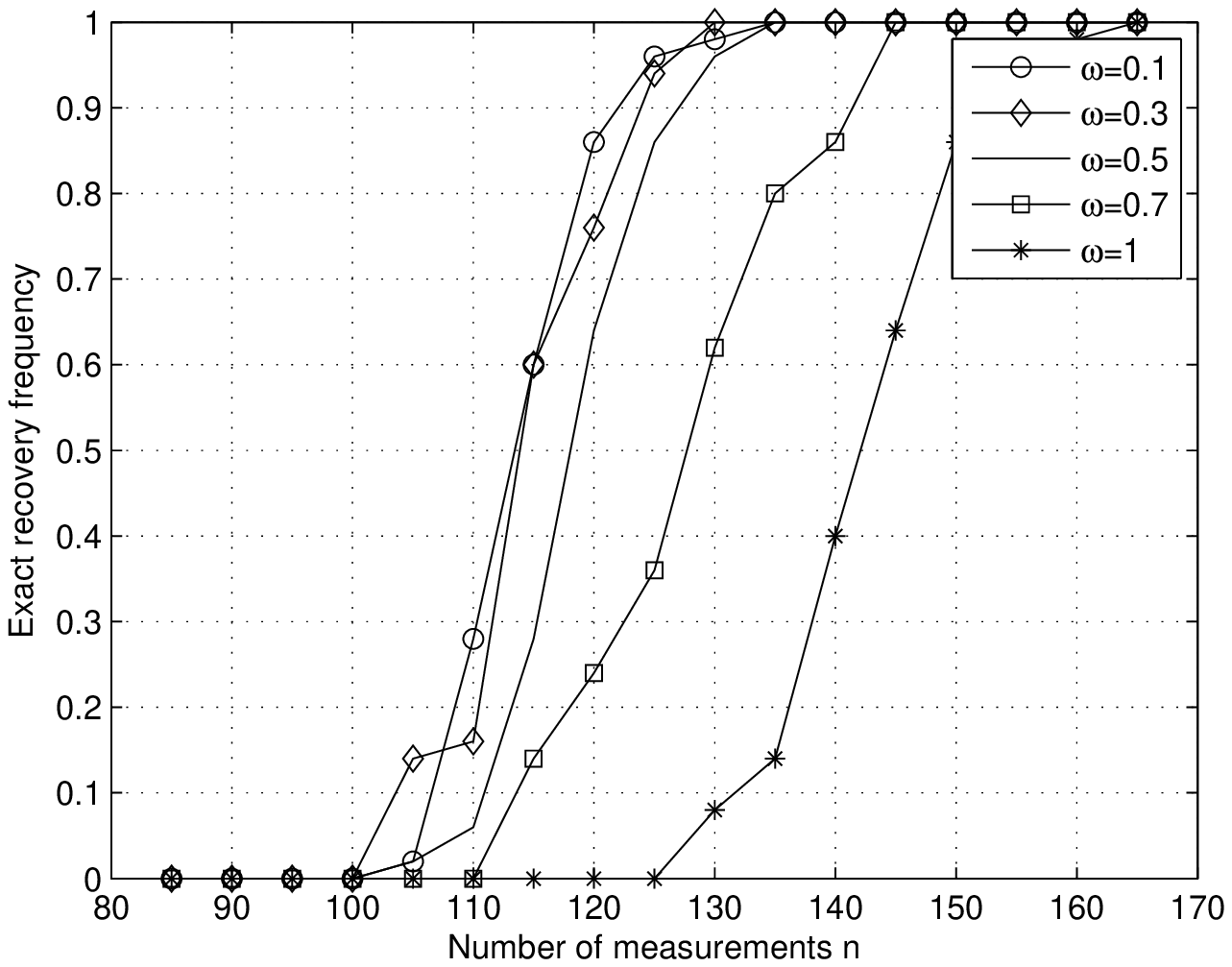}
\end{minipage}
\centering
\begin{minipage}{4cm}
\centering
{$\alpha=0.5$}
\includegraphics[width=4cm,height=6cm]{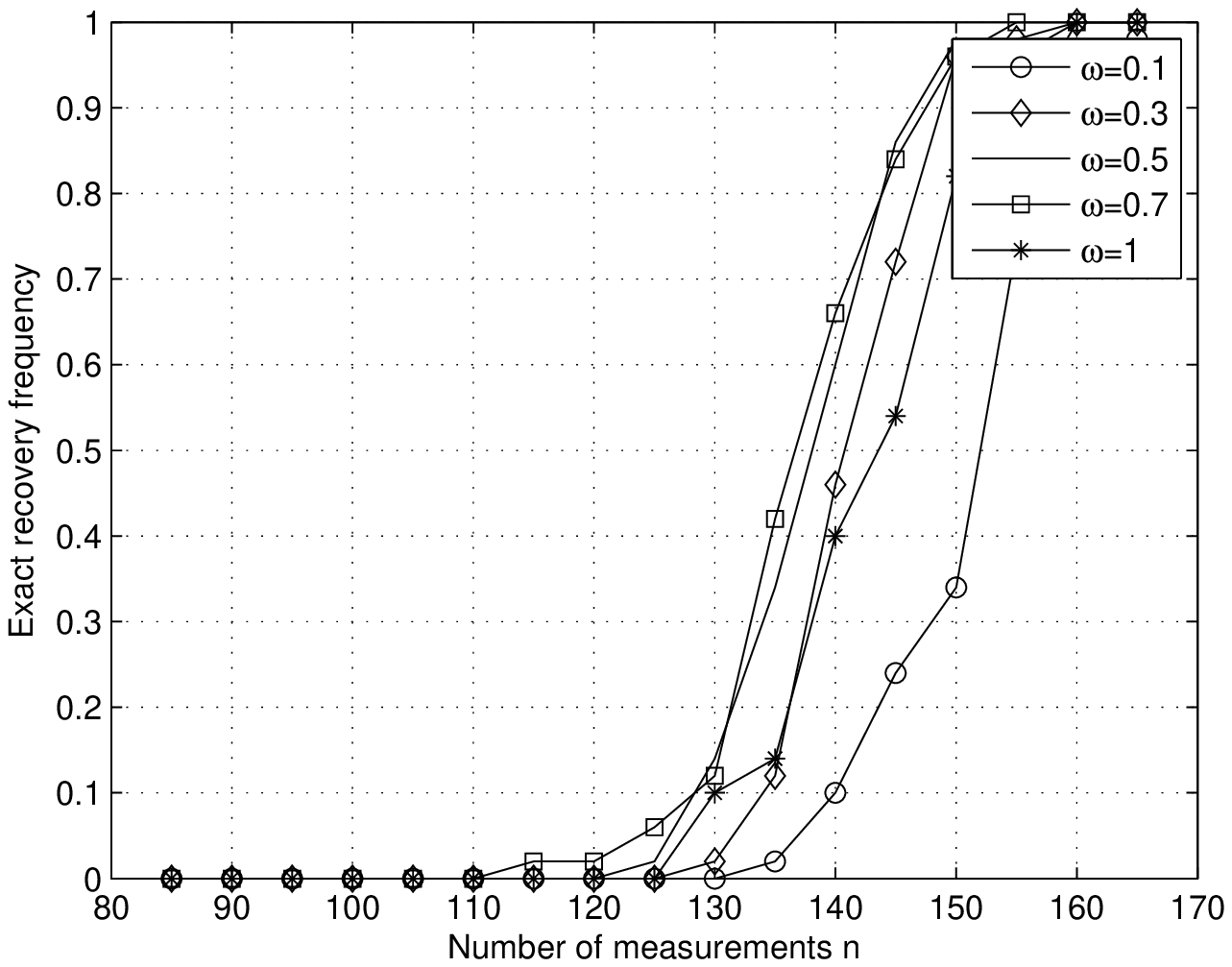}
\end{minipage}
\centering
\begin{minipage}{4cm}
\centering
{$\alpha=0.2$}
\includegraphics[width=4cm,height=6cm]{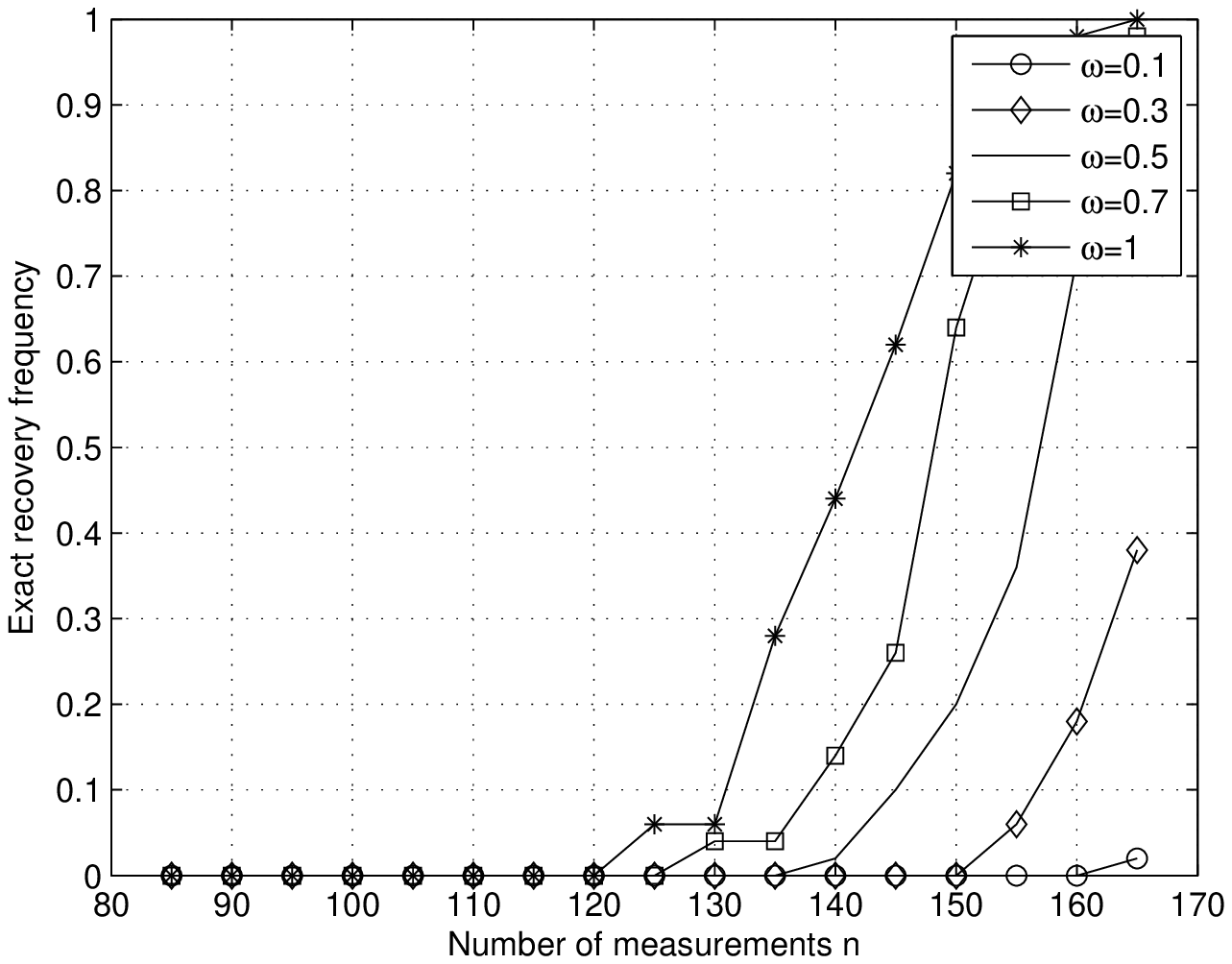}
\end{minipage}}

\subfigure[Noisy Case ]{
\centering
\begin{minipage}{4cm}
\centering
{$\alpha=0.8$}
\includegraphics[width=4cm,height=6cm]{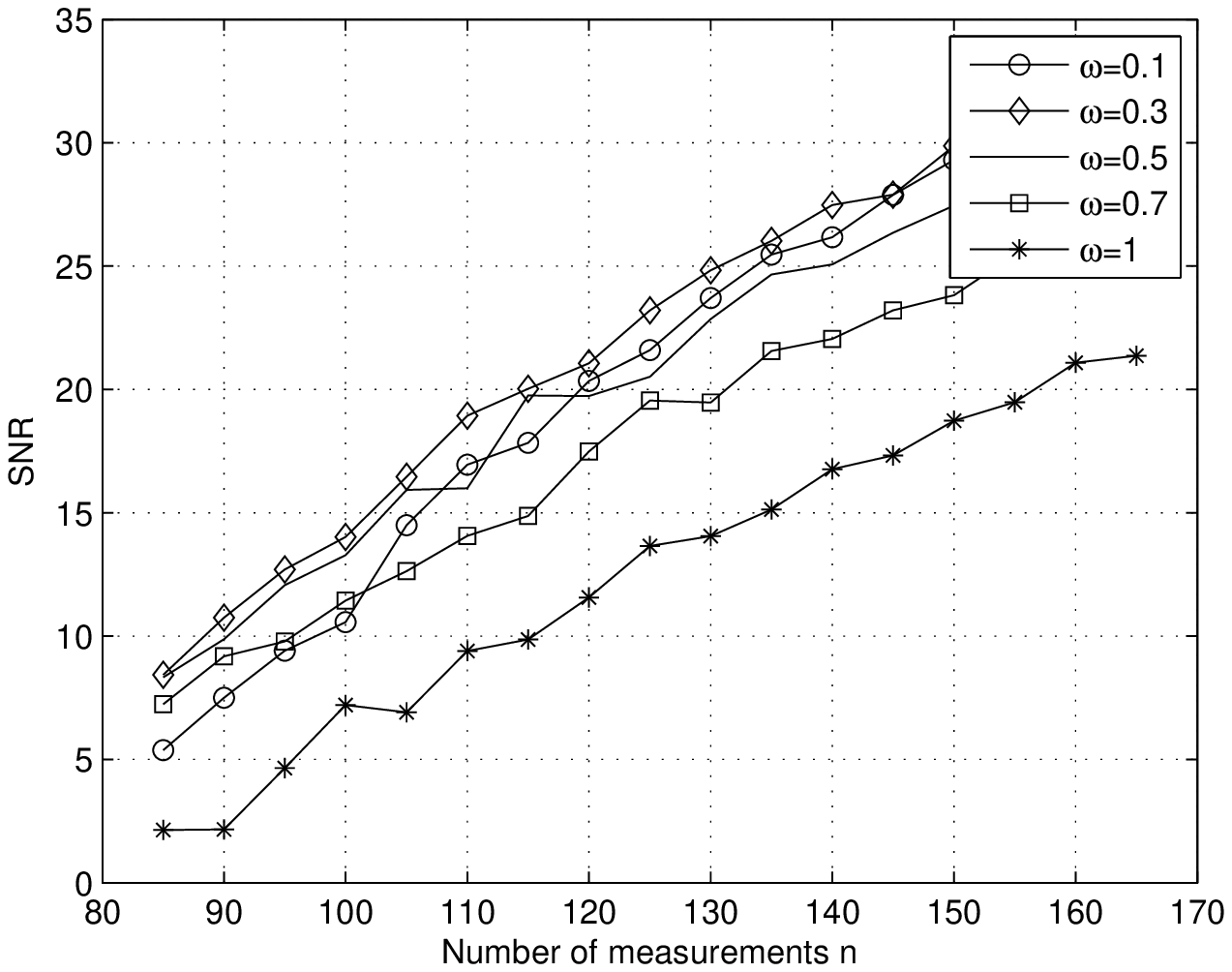}\\[-0.3cm]
\end{minipage}
\centering
\begin{minipage}{4cm}
\centering
{$\alpha=0.5$}
\includegraphics[width=4cm,height=6cm]{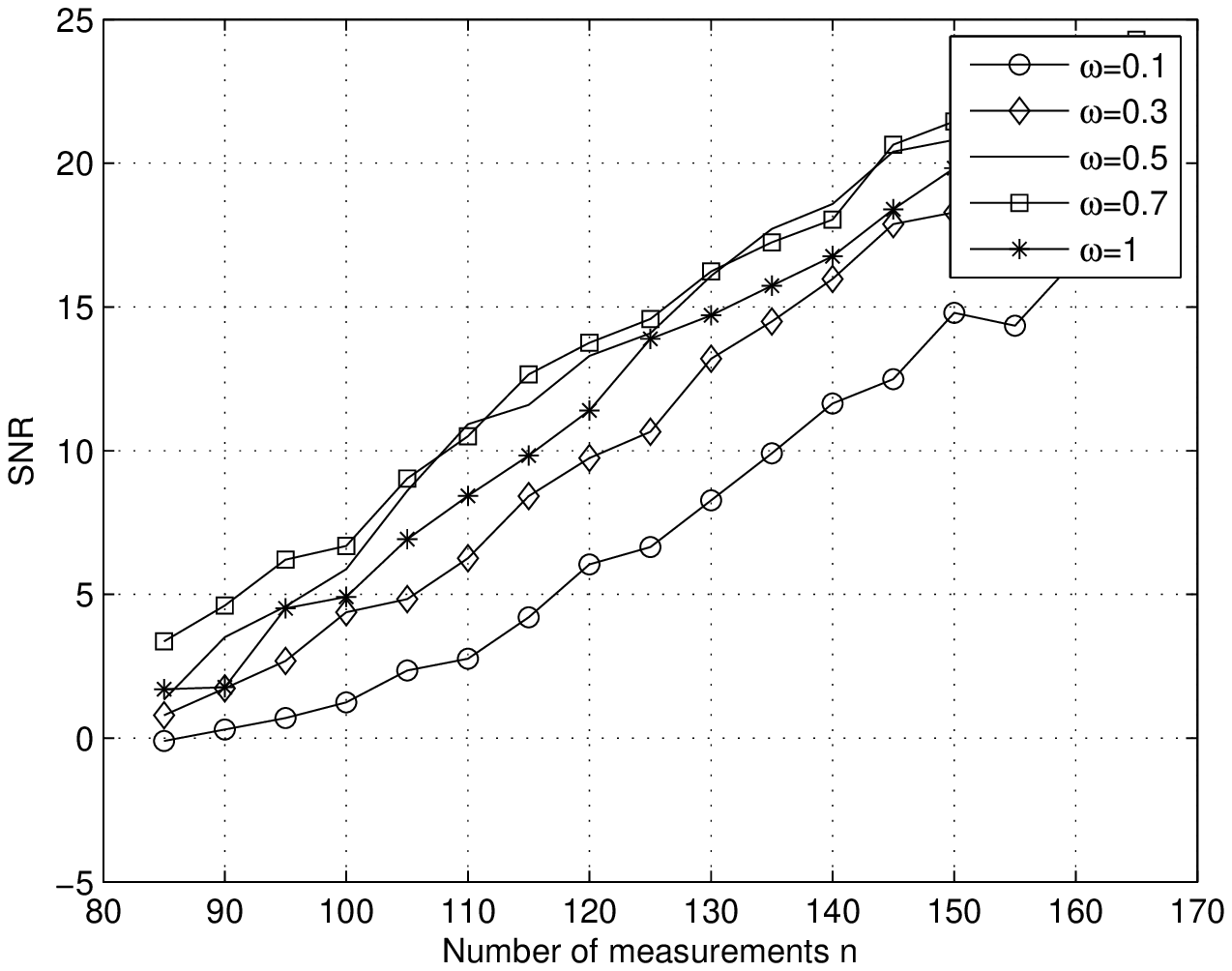}\\[-0.3cm]
\end{minipage}
\centering
\begin{minipage}{4cm}
\centering
{$\alpha=0.2$}
\includegraphics[width=4cm,height=6cm]{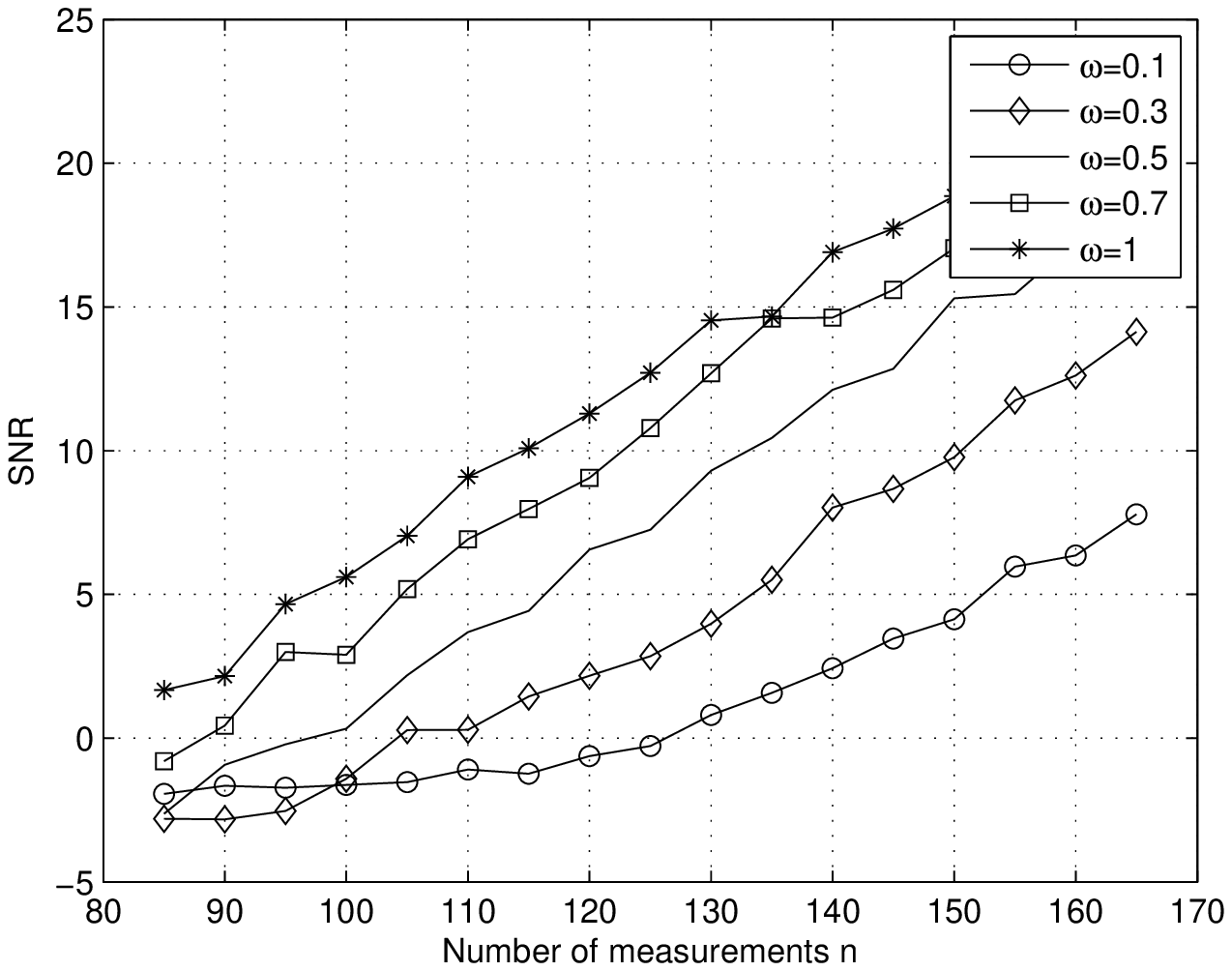}\\[-0.3cm]
\end{minipage}}
\\
\caption{\label{fig7} The recovery performance of the weighted $\ell_2/\ell_1$ minimization is in terms of the exact recovery frequency in noiseless case and the SNR  in
 noisy case. The block sparse signal $x$ with $\hat{d}=8$ has $k=10$ nonzero blocks. }
\end{figure*}

In Fig.\ref{fig5}(a)-\ref{fig7}(a),  the average exact recovery frequency is plotted versus measurement level $n$ for the accuracy of
the prior block support estimate: $\alpha=0.8,\alpha=0.5,\alpha=0.2,$ which illustrates the reconstruction   performance  of  the block $k$-sparse signal $x$
with three different block sizes $\hat{d}=2,\ \hat{d}=4$ and $\hat{d}=8$ in the noiseless case. Fig.\ref{fig5}(b)-\ref{fig7}(b)
depict the case of recovering the block $k$-sparse signal $x$ with three different block sizes $\hat{d}=2,\ \hat{d}=4$ and $\hat{d}=8$ in the presence of noise by the SNR. When the accuracy of the prior block support estimate $\alpha\geq0.5$,
 one can easily see that the best recovered performance is achieved for weight $\omega=0.1$ whereas  weight $\omega=1$
 results in the worst
exact recovery frequency in the noiseless case and  the worst SNR in
the noisy case. In addition, reducing the uniform  weight $\omega$
below $1$ reduces the number of measurements required for the
recovery of $x$. On the other hand, the exact recovery frequency and
the SNR are shifted towards larger weights $\omega$ for small $n$ as
$\alpha<0.5$, which means that the performance of the reconstruction
algorithm is shifted. In a word,  applying a larger prior block
support estimate favors better recovery and  the experimental
results are consistent with our theoretical results in Theorem
\ref{t3}.
 And it is also shown that the curves are very close for different weights $\omega$ when $\alpha=0.5$. Proposition
\ref{pro6} explained  why  the phenomenon happens.
\subsection{The non-uniform weight  case}

 In this subsection, we demonstrate that multiple weights can be preferable to a single weight by designing  serval numerical experiments.
In these experiments, we set $N=256$, $k=|T|=20$ and $\hat{d}=2$.
 We compare the exact recovery frequency
  and SNR when applying either a single prior block support $\widetilde{T}$
  or two disjoint prior block supports $\widetilde{T}_1$ and $\widetilde{T}_2$ satisfying $\widetilde{T}=\widetilde{T}_1+\widetilde{T}_2$, which means $\rho=\rho_1+\rho_2$.

\begin{figure*}[htbp!]
\centering
\begin{minipage}[htbp]{0.46\linewidth}
\centering
\includegraphics[width=3.0in]{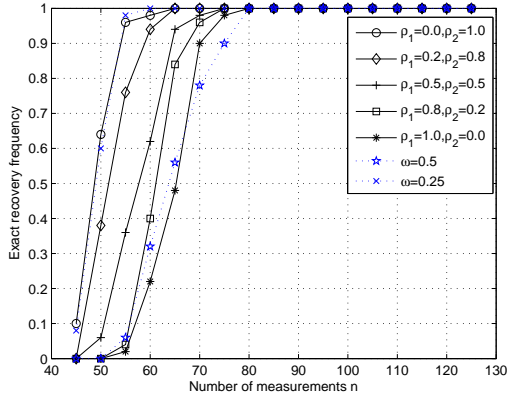}\\[-0.3cm]
{(a) Noise Free}
\end{minipage}
\centering
\begin{minipage}[htbp]{0.46\linewidth}
\centering
\includegraphics[width=3.0in]{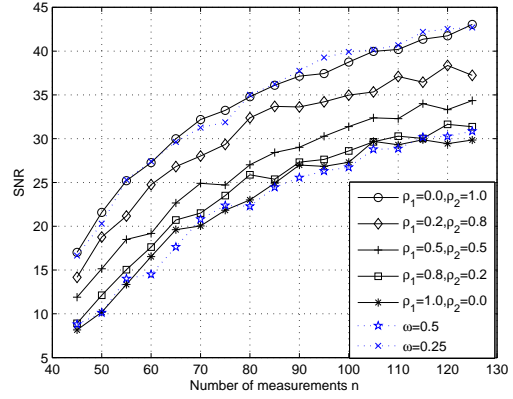}\\[-0.3cm]
{(b) Noisy Case }
\end{minipage}
\\
\caption{\label{fig3} Comparison of the exact recovery frequency and the SNR over 50 trials versus the number
of measurements $n$ while using the weighted $\ell_2/\ell_1$-minimization with  a single weight $\omega$  (blue dotted lines)
  and two distinct weights $\omega_1$ and $\omega_2$ (black solid lines).
Let $\rho_1+\rho_2=\rho=1$ and $\alpha=\alpha_1=\alpha_2=0.5$.}
\end{figure*}
\begin{figure*}[htbp!]
\centering
\begin{minipage}[htbp]{0.46\linewidth}
\centering
\includegraphics[width=3.0in]{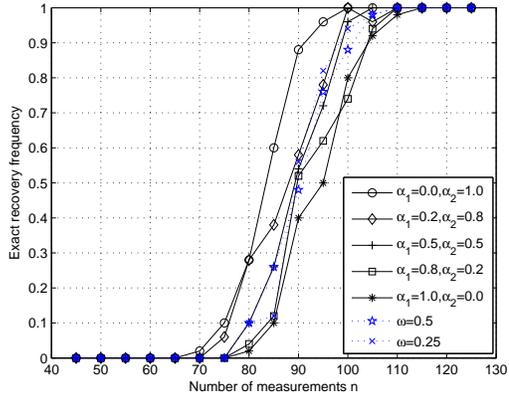}\\[-0.3cm]
{(a) Noise Free}
\end{minipage}
\centering
\begin{minipage}[htbp]{0.46\linewidth}
\centering
\includegraphics[width=3.0in]{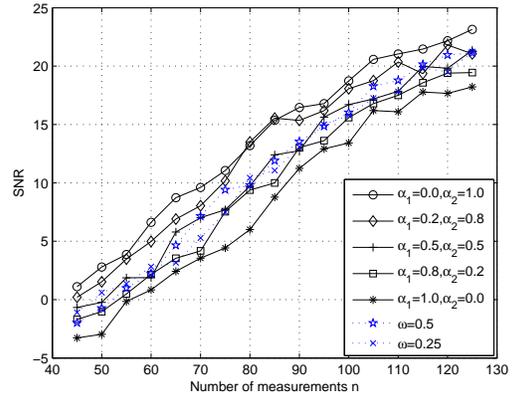}\\[-0.3cm]
{(b) Noisy Case}
\end{minipage}
\caption{\label{fig4}Under $\alpha_1+\alpha_2=1$, $\alpha=0.5$ and $\rho_1=\rho_2=0.5$,  we compare
  the exact recovery frequency in the  noiseless case and the SNR in the noisy case over 50 trials versus  the number
of measurements $n$  while using the weighted $\ell_2/\ell_1$-minimization with  a single weight $\omega$ (blue dotted lines) and two distinct weights $\omega_1$ and $\omega_2$  (black solid lines).}
\end{figure*}

   In Fig.\ref{fig3}, we set $\alpha_1=\alpha_2=\alpha=0.5$, $\rho_1+\rho_2=\rho=1$ and vary the size of
    $\rho_1$ and $\rho_2$. For the two prior block supports $\widetilde{T}_1$ and $\widetilde{T}_2$,
     $\widetilde{T}_1$ applies the larger weight $\omega_1=0.5$ and $\widetilde{T}_2$ applies the smaller weight $\omega_2=0.25$.
     In the single prior block support case, the weight $\omega=0.5$ or $\omega=0.25$ is used on $\widetilde{T}$.
  Fig.\ref{fig3} (a) and (b) plot the exact recovery frequency  and the SNR versus the number of measurements
  $n$, respectively. One can see that using the smaller weight $\omega=0.25$ prefers the best, using the larger weight $\omega=0.5$ prefers the worst and using two different weights $\omega_1=0.5$ and $\omega_2=0.25$
   as $\rho_1$ and $\rho_2$ are varied produces intermediate  performance.

   Fig.\ref{fig4} (a) and (b) depict the exact recovery frequency and the SNR  versus the number of measurements
   $n$ for some different $\alpha_1$ and $\alpha_2$ maintaining $\rho_1\alpha_1+\rho_2\alpha_2=\rho\alpha$, where
   $\alpha=0.5,\ \rho=1$ and $\rho_1=\rho_2=0.5$, which imply $\alpha_1+\alpha_2=1$.  The weights $\omega_1=0.5$
   and $\omega_2=0.25$ are applied on $\widetilde{T}_1$ and $\widetilde{T}_2$, respectively. Note that $\widetilde{T}_1\subseteq T^c$
   and $\widetilde{T}_2\subseteq T$ when $\alpha_1=0.0$ and $\alpha_2=1.0$. As expected, we observe that
   the  exact recovery frequency  and  SNR are largest when $\alpha_1=0.0$ and $\alpha_2=1.0$ implying that
   the recovery performance of the weighted $\ell_2/\ell_1$-minimization in this case is best from Fig.\ref{fig4} (a) and (b).
   As $\alpha_1$ increases from $0$ to 1 and $\alpha_2$ decreases from $1$ to $0$,
   fewer correctly identified block indexes in $T$ receive the smaller weight $\omega_2=0.25$,
   but rather the larger weight $\omega_1=0.5$.  Moreover, we also see that the values of the  exact recovery frequency  and the SNR are very close, as using a single weight
   $\omega=0.5$ or $\omega=0.25$. In fact, the recovery is slightly better applying the single
   weight $\omega=0.25$ than that using $\omega=0.5$, and the recovery falls in between the $\omega=0.25$ and $\omega=0.5$ curves when the two weights $\omega_1=0.5$ and $\omega_2=0.25$ are used with $\alpha_1=\alpha_2=0.5$.
   Here, we turn to make the fact clear in the theory. Based on Theorem \ref{t3} and Proposition \ref{pro6} (1) and (3), for the case of $\omega=0.5$, $\omega_1=0.5$ and $\omega_2=0.25$ we need to compare the terms
   \begin{align*}
   \omega\|x[T^c]\|_{2,1}+(1-\omega)\|x[\widetilde{T}^c\cap T^c]\|_{2,1}
   =0.5\|x[T^c]\|_{2,1}+0.5\|x[\widetilde{T}^c\cap T^c]\|_{2,1}
   \end{align*}
   and
    \begin{align*}
   &(\omega_1+\omega_2)\|x[T^c]\|_{2,1}+(1-(\omega_1+\omega_2))\|x[\widetilde{T}^c\cap T^c]\|_{2,1}-\omega_2\|x[\widetilde{T}_1\cap T^c]\|_{2,1}-\omega_1\|x[\widetilde{T}_2\cap T^c]\|_{2,1}\\
   &=0.75\|x[T^c]\|_{2,1}+0.25\|x[\widetilde{T}^c\cap T^c]\|_{2,1}-0.25\|x[\widetilde{T}_1\cap T^c]\|_{2,1}-0.5\|x[\widetilde{T}_2\cap T^c]\|_{2,1}\\
   &\leq0.5\|x[T^c]\|_{2,1}+0.5\|x[\widetilde{T}^c\cap T^c]\|_{2,1}-0.5\|x[\widetilde{T}_2\cap T^c]\|_{2,1}
   \end{align*}
   where we use  $\|x[\widetilde{T}_1\cap T^c]\|_{2,1}+\|x[\widetilde{T}_2\cap T^c]\|_{2,1}=\|x[\widetilde{T}\cap T^c]\|_{2,1}=\|x[T^c]\|_{2,1}-\|x[\widetilde{T}^c\cap T^c]\|_{2,1}$.
  It is obvious that
  \begin{align*}
  &(\omega_1+\omega_2)\|x[T^c]\|_{2,1}+(1-(\omega_1+\omega_2))\|x[\widetilde{T}^c\cap T^c]\|_{2,1}-\omega_2\|x[\widetilde{T}_1\cap T^c]\|_{2,1}-\omega_1\|x[\widetilde{T}_2\cap T^c]\|_{2,1}\\
  &\leq\omega\|x[T^c]\|_{2,1}+(1-\omega)\|x[\widetilde{T}^c\cap T^c]\|_{2,1}.
  \end{align*}
 Similarly, for $\omega=0.25$, $\omega_1=0.5$ and $\omega_2=0.25$ there is
 \begin{align*}
 &(\omega_1+\omega_2)\|x[T^c]\|_{2,1}+(1-(\omega_1+\omega_2))\|x[\widetilde{T}^c\cap T^c]\|_{2,1}-\omega_2\|x[\widetilde{T}_1\cap T^c]\|_{2,1}-\omega_1\|x[\widetilde{T}_2\cap T^c]\|_{2,1}\\
  &\geq\omega\|x[T^c]\|_{2,1}+(1-\omega)\|x[\widetilde{T}^c\cap T^c]\|_{2,1}.
  \end{align*}

\section{Conclusion} \label{6}

In this paper, the problem of reconstructing unknown block sparse signals under arbitrary prior block support information is studied from incomplete linear measurements.
Firstly, we introduce the weighted $\ell_2/\ell_1$ minimization and obtain a high order block RIP condition to guarantee stable and robust recovery of block signals in bounded $\ell_2$
 noise setting. The condition is weaker than that of block sparse signals by the standard $\ell_2/\ell_1$ minimization when all of  the accuracy of $L$ disjoint prior block support
estimates  are at least  $50\%$.
Secondly, we determine how many random measurements are needed to fulfill  the high order block RIP condition
 with high probability for some random  matrices.
Finally, a series of numerical experiments have been carried out to  illustrate the benefit of using the weighted $\ell_2/\ell_1$ minimization to recover block
sparse signals when  prior block support information is available
and that non-uniform block support information  can be preferable to
 uniform  block support information.

\end{document}